\newtheorem{theorem}{Theorem}[section]
\newtheorem{remark}[theorem]{Remark}
\newtheorem{lemma}[theorem]{Lemma}
\newtheorem{corollary}[theorem]{Corollary}
\newcommand{\im}{\mathrm{i}}
\newcommand{\dif}{\mathrm{d}}
\renewcommand{\vec}[1]{\mbox{\boldmath$#1$}}
\begin{document}
\begin{center}
{\bf \Large Understanding the Magnetic Polarizability Tensor}\\[0.3cm]
P.D. Ledger$^*$ and W.R.B. Lionheart$^\dagger$\\[0.3cm]
$^*$P.d.ledger@swansea.ac.uk, $^\dagger$Bill.Lionheart@manchester.ac.uk\\[0.3cm]
$^*$College of Engineering, Swansea
University Bay Campus, \\
Fabian Way, Crymlyn Burrows, SA1 8EN UK\\[0.3cm]
$^\dagger$School of Mathematics, The University of
Manchester,\\
Alan Turing Building, Oxford Road, Manchester, M13 9PL UK\\[0.3cm]
2nd October 2015
\end{center}

\section*{Abstract}
The aim of this paper is provide new insights into the properties of the rank 2 polarizability tensor $\widecheck{\widecheck{\mathcal M}}$ proposed in (P.D. Ledger and W.R.B. Lionheart {\em Characterising the shape and material properties of hidden targets from magnetic induction data}, IMA Journal of Applied Mathematics, doi: 10.1093/imamat/hxv015)  for describing the perturbation in the magnetic field caused by the presence of a conducting object in the eddy current regime. In particular, we explore its connection with the
magnetic polarizability tensor and
 the P\'olya--Szeg\"o tensor and how, by introducing new splittings of $\widecheck{\widecheck{\mathcal M}}$, they form a family of rank 2 tensors for describing the response from different categories of conducting (permeable)  objects. We include new bounds on the invariants of the P\'olya--Szeg\"o tensor and expressions for the low frequency and high conductivity limiting coefficients of $\widecheck{\widecheck{\mathcal M}}$. We show, for the high conductivity case (and for frequencies at the limit of the quasi-static approximation), that it is important to consider whether the object is simply or multiply connected but, for the low frequency case, the coefficients are independent of the connectedness of the object. Furthermore, we explore the frequency response  of the coefficients of $\widecheck{\widecheck{\mathcal M}}$ for a range of simply and multiply connected objects.

\vspace{0.2cm}

\noindent {\bf Keywords:} Metal detectors, Land mine detection, Polarizability tensors, Eddy currents, Magnetic induction

\section{Introduction}

The need to detect and characterise conducting targets from magnetic induction measurements arises in a wide range of applications, most notably in metal detection.
Here one wishes to be able to locate and identify a highly conducting object in a low conducting background. Applications include ensuring safety at airports and at public events, maintaining quality in the mechanised production of food as well as in the detection of unexploded ordnance and land mines and in archaeological surveys. Furthermore, there is interest in producing conductivity images from multiple magnetic induction measurements, most notably in magnetic induction tomography for medical applications~\cite{huw2001,zolgharni2009} and industrial applications~\cite{soleimani2007, gaydecki2000}. Eddy currents also have important applications in the non--destructive testing, such as investigating the integrity of reinforced concrete structures and bridges~\cite{simmonds1999}.

In the engineering literature, the signal induced from an alternating low--frequency magnetic field, due to the presence of a conducting (permeable) object, located at the position ${\vec z}$,
 is often postulated as (e.g.~\cite{norton2001,das1990,peyton2013,pasion2001})
%%%%%%%%%%%%%%%%%%%%%%%%%%%%%%%%%%%%%%%%%%%%%%%%%%%%%%%%%%%%%%%%%%%%%%%%%%%%%%%
\begin{equation} 
V^{ind} \approx {\vec H}_0^m \cdot  ({\mathcal A} {\vec H}_0^e ), \label{eqn:engexp}
\end{equation}
%%%%%%%%%%%%%%%%%%%%%%%%%%%%%%%%%%%%%%%%%%%%%%%%%%%%%%%%%%%%%%%%%%%%%%%%%%%%%%%
where only ${\vec H}_0^m$ and ${\vec H}_0^e$ depend on position, ${\vec H}_0^e:={\vec H}_0({\vec z})$ is the background magnetic field generated by passing a current though an excitor coil placed away from the object, evaluated at the position of the centre of the object, and ${\vec H}_0^m$ is the corresponding field that would be evaluated by passing a unit current through the measurement coil, evaluated at the same location. The object
%%%%%%%%%%%%%%%%%%%%%%%%%%%%%%%%%%%%%%%%%%%%%%%%%%%%%%%%%%%%%%%%%%%%%%%%%%%%%%%
\begin{equation}
\displaystyle {\mathcal A}= \sum_{i=1}^3 \sum_{j=1}^3 {\mathcal A}_{ij} \hat{\vec e}_i \otimes \hat{\vec e}_j,
\end{equation}
%%%%%%%%%%%%%%%%%%%%%%%%%%%%%%%%%%%%%%%%%%%%%%%%%%%%%%%%%%%%%%%%%%%%%%%%%%%%%%%
has been proposed to be a (complex) symmetric rank 2 magnetic polarizability tensor described by 6 independent empirically fitted complex coefficients ${\mathcal A}_{ij}$ containing information about the shape, material properties and frequency response of an object. The coefficients are independent of its position and $\hat{\vec e}_i$, $i=1,2,3$ are the unit basis vectors for the chosen coordinate system.

In the applied mathematics literature, asymptotic formulae are available that described the perturbation in the magnetic field due to the presence of a magnetic (conducting) object $B_\alpha$ using Einstein's summation convention in the form (e.g.~\cite{kleinman1967,kleinman1973,ammarikangbook,ammarivolkov2005,ammarivolkov2013,ammarivogeluisvolkov,ledgerlionheart2014})
%%%%%%%%%%%%%%%%%%%%%%%%%%%%%%%%%%%%%%%%%%%%%%%%%%%%%%%%%%%%%%%%%%%%%%%%%%%%%%%
\begin{align}  
(({\vec H}_\alpha - {\vec H}_0)({\vec x}))_i = & ({\vec D}_{\vec x}^2G({\vec x},{\vec z}))_{ij}  {\mathcal A} _{jk} ({\vec H}_0({\vec z}))_k 
                                                                      + ({\vec R}({\vec x}))_i ,\label{eqn:asympformula} 
\end{align}
%%%%%%%%%%%%%%%%%%%%%%%%%%%%%%%%%%%%%%%%%%%%%%%%%%%%%%%%%%%%%%%%%%%%%%%%%%%%%%%
as some suitable limit is taken. In the above, $B_\alpha = \alpha B + {\vec z}$, which means that the physical object can be expressed in terms of a unit object $B$ placed at the origin, scaled by the object size $\alpha$ and translated by the vector ${\vec z}$,  ${\vec R} ( {\vec x} )$ is a residual vector, $G({\vec x},{\vec z}) = 1/( 4 \pi | {\vec x}- {\vec z}|)$ is the free space Laplace Green's function and
%%%%%%%%%%%%%%%%%%%%%%%%%%%%%%%%%%%%%%%%%%%%%%%%%%%%%%%%%%%%%%%%%%%%%%%%%%%%%%%
\begin{align}
( {\vec D}_{\vec x}^2G({\vec x},{\vec z} ) )_{ij} &= ({\vec D}_{\vec x}^2G({\vec z},{\vec x} ))_{ij}  = \frac{1}{4 \pi  r^3 } \left (3  \hat{\vec r}_i  \hat{\vec r}_j - \delta_{ij} \right ) ,
\end{align}
%%%%%%%%%%%%%%%%%%%%%%%%%%%%%%%%%%%%%%%%%%%%%%%%%%%%%%%%%%%%%%%%%%%%%%%%%%%%%%%
with ${\vec r}: = {\vec x}- {\vec z}$, $r=|{\vec r}| $, $\hat{\vec r}= {\vec r} / r$ and $\delta_{ij}$ the coefficients of the unit tensor. For these asymptotic formulae the form of ${\mathcal A} _{jk}$ is explicitly known and is computable by solving a transmission problem.  It is easily established that this is exactly of the form   ${\vec H}_0^m \cdot  ({\mathcal A} {\vec H}_0^e )$ given in equation (\ref{eqn:engexp}). This is done by idealising the background magnetic field as that produced by a magnetic dipole 
and then
 taking the component of (\ref{eqn:asympformula}) in the direction of the magnetic moment associated with the background magnetic field that would result from the measurement coil being treated as an excitor.

In this paper, we will explore the connection between the empirically fitted engineering polarizability tensor and the asymptotic expansions combined with the different expressions for polarization tensor that appear in the applied mathematics literature. In particular, we advocate that an asymptotic formula for the perturbed magnetic field provides greater insight than (\ref{eqn:engexp}) for the following reasons. If the empirical approach is followed, the coefficients of the polarizability tensor are obtained by taking measurements (or performing simulated measurements using a computational techniques such as finite elements) in order to determine the voltage at different positions for different excitor combinations and then use a least squares approach to approximately determine the coefficients ${\mathcal A}_{ij}$ e.g.~\cite{das1990, norton2001,pasion2001,peyton2013}. It is important to ensure that measurements are taken in different planes and at distances from the object in order to capture the correct asymptotic behaviour, however, the presence of measurement noise can make this challenging to achieve in practice.
Furthermore, the number of measurements should greatly exceed the number of coefficients to be determined in order to minimise any measurement errors. For each new object, i.e. a different shape, frequency or material properties, this measurement procedure must be repeated in order to determine the polarizability tensor. An asymptotic formula, on the other hand, provides an explicit expression that allows the tensor to be computed {\em without} the need for performing (or simulating) measurements. Indeed, perhaps a contributing factor as to why this approach has not been persued  amongst engineers so far is that the term polarization tensor is preferred in the applied mathematics literature for ${\mathcal A}$ while, in the engineering community,  the term polarizability tensor is more commonly adopted.
 Another benefit is that rather than knowing that the voltage is only approximately given (\ref{eqn:engexp}), without knowing its accuracy, we have, through (\ref{eqn:asympformula}), not only the leading term but, also, a way of rigorously describing the remainder. 

 The availability of explicit expressions for different classes of polarization/ polarizability tensors makes it possible to investigate their properties, such as the reduction in the number of independent coefficients for rotational and reflectional symmetries of the object, which we considered in~\cite{ledgerlionheart2014}. In this work, we provide the following novel contributions, which further enhance the understanding of their properties: Firstly, we review the different forms that the tensors can take for magnetic and conducting (permeable) objects as part of
asymptotic expansions of the perturbed magnetic field when either the object size or the frequency tend to zero. Secondly, we present new bounds on the invariants of some classes of the tensors and provide bounds on the spherical and deviatoric parts of the tensor for a magnetic object. Thirdly, we present new results that describe the low frequency and high conductivity limits of the coefficients of the tensor for a conducting (permeable) object in an alternating background magnetic field. 
 We consider the response from magnetic and conducting ellipsoids and, finally, the response from a conducting Remington rifle cartridge as a practical application of the aforementioned theory.
 
The presentation of the material is organised as follows: In Section~\ref{ref:eddy}, we summarise the low frequency and eddy current models and then, in Section~\ref{sect:expexp}, we consider explicit expressions for the polarization/polarizability tensors for magnetic and conducting  (permeable) objects and, in the latter case, present a new splitting of the tensor. In Section~\ref{sect:bounds}, we present bounds on properties of the tensors and then, in Section~\ref{sect:lowhigh}, we consider the limiting case of low frequency and high conductivity for the coefficients of the tensor. Section~\ref{sect:ellip} describes the response from magnetic and conducting ellipsoids and, in Section~\ref{sect:rem}, the response from a conducting Remington rifle cartridge as a practical application of the aforementioned theory. We conclude the presentation with some concluding remarks in Section~\ref{sect:concl}.

 \section{Mathematical models} \label{ref:eddy}

Following Ammari {\it et al.}~\cite{ammarivolkov2013} we let ${\mathbb R}^3$ denote the Euclidean space and introduce the position dependent material parameters as
%%%%%%%%%%%%%%%%%%%%%%%%%%%%%%%%%%%%%%%%%%%%%%%%%%%%%%%%%%%%%%%%%%%%%%%%%%%%%%%
\begin{equation}
\epsilon_\alpha=\left \{ \begin{array}{l}
\epsilon_*  \\
\epsilon_0 \end{array} \right . , \,
\mu_\alpha=\left \{ \begin{array}{l}
\mu_*  \\
\mu_0 \end{array} \right . , \, 
\sigma_\alpha=\left \{ \begin{array}{ll}
\sigma_*  & \hbox{in $B_\alpha$}  \\
0  &  \hbox{in ${\mathbb R}^3 \setminus B_\alpha$} \end{array} \right . , 
\label{eqn:matparam}
\end{equation}
%%%%%%%%%%%%%%%%%%%%%%%%%%%%%%%%%%%%%%%%%%%%%%%%%%%%%%%%%%%%%%%%%%%%%%%%%%%%%%%
where $\epsilon$, $\mu$ and $\sigma$ are the permittivity, permeability and conductivity, respectively, and the subscript $0$ refers in the former cases to the free space values.
We remark that the background medium is assumed to be  non--conducting free space, which is a reasonable approximation to make for buried objects in dry ground provided that the contrast between the object and the surrounding material is sufficiently high. 

Low frequency electromagnetic scattering problems are described in terms of the (total) time harmonic ${\vec E}_\alpha$ and ${\vec H}_\alpha$ for angular frequency $\omega$, which result from the interaction between background (incident) fields ${\vec E}_0$ and ${\vec H}_0$ and the object $B_\alpha$. These fields satisfy the equations
\begin{subequations} \label{eqn:mathmodel1}
\begin{align}
\nabla \times {\vec E}_\alpha & = \im \omega \mu_\alpha {\vec H}_\alpha && \hbox{in ${\mathbb R}^3$} ,\\
\nabla \times {\vec H}_\alpha & = \sigma_\alpha {\vec E}_\alpha - \im \omega \epsilon_\alpha{\vec E}_\alpha 
&& \hbox{in ${\mathbb R}^3$} ,
\end{align} 
\end{subequations}
and suitable radiation conditions as $|{\vec x} |  \to \infty$. The background fields ${\vec E}_0$ and ${\vec H}_0$ satisfy the free space version of the above equations (i.e. replacing the subscript $\alpha$ with $0$). 

In the eddy current model, the geometry, frequency and material parameters are such that the 
 displacement currents in the Maxwell system  can be neglected. This is often justified on the basis that
$\sqrt{\epsilon_* \mu_* }\alpha \omega \ll 1 $ or $\epsilon_* \omega / \sigma_* \ll 1$.  A more rigorous justification of the eddy current model appears in~\cite{ammaribuffa2000}. In~\cite{schmidt2008} the effect of the shape of the conductor on the validity of the eddy current model is discussed. 
 The depth of penetration of the magnetic field in a conducting object is described by its skin depth, $s:=\sqrt{2/ (\omega \mu_0 \sigma_*) }$, and, by introducing a parameter $\nu := 2\alpha^2 / s^2$, the mathematical model of interest in this case refers to when $\nu =O(1)$~\footnote{Where $f(x)=O(g(x))$ if and only if there is a positive constant $M$ such that $|f(x) |  < M |g(x)  |$ for all sufficiently large $x$, ie $x \ge x_0$. This is known as Landau big O notation.} and $ \mu_*/ \mu_0 =O(1) $ as $\alpha \to 0$~\cite{ammarivolkov2013}.  When considering the eddy current model, the time harmonic fields ${\vec E}_\alpha$ and ${\vec H}_\alpha$ are those that result from a time varying current source located away from $B_\alpha$, with volume current density ${\vec J}_0$ and $\nabla\cdot{\vec J}_0=0$ in ${\mathbb R}^3$, and their interaction with the object $B_\alpha$, and satisfy the equations
%%%%%%%%%%%%%%%%%%%%%%%%%%%%%%%%%%%%%%%%%%%%%%%%%%%%%%%%%%%%%%%%%%%%%%%%%%%%%%%
\begin{subequations} \label{eqn:mathmodel2}
\begin{align} 
\nabla \times {\vec E}_\alpha & = \im \omega \mu_\alpha {\vec H}_\alpha && \hbox{in ${\mathbb R}^3$} ,\\
\nabla \times {\vec H}_\alpha & = \sigma_\alpha {\vec E}_\alpha +{\vec J}_0
&& \hbox{in ${\mathbb R}^3$} ,
\end{align} 
\end{subequations}
%%%%%%%%%%%%%%%%%%%%%%%%%%%%%%%%%%%%%%%%%%%%%%%%%%%%%%%%%%%%%%%%%%%%%%%%%%%%%%%
together with a suitable static decay rate of the fields as $|{\vec x}| \to \infty$~\cite{ammaribuffa2000}. In this case, in absence of an object, the background magnetic field, ${\vec H}_0$, is that generated by the current source.

 \section{Asymptotic formulae and explicit expressions for polarization tensors}\label{sect:expexp}
In Section~\ref{sect:mathmodel1} we discuss as asymptotic expressions for the perturbed magnetic field $({\vec H}_\alpha-{\vec H}_0)({\vec x})$ for the mathematical model described by (\ref{eqn:mathmodel1}). Then, in Section~\ref{sect:mathmodel2} we present expansions for the mathematical model described by (\ref{eqn:mathmodel2}). In Section~\ref{sect:unimathmodel3}, we consider the connection between these results.
 
 \subsection{Asymptotic expansions for the equation system (\ref{eqn:mathmodel1})} \label{sect:mathmodel1}
 Early results by Kleinman~\cite{kleinman1967,kleinman1973} presented the leading order terms for the perturbed (scattered fields) in terms of dipole moments for the case when $\kappa \to 0$ and $r \to \infty$, where $r$ is the distance from the object to the point of observation and $\kappa : =\omega \sqrt{\epsilon_0 \mu_0}$ is the free space wave number.  
Later it was shown how the moments could be expressed in terms of the dielectric and magnetic polarization/polarizability tensors multiplied by the incident electric and magnetic fields, respectively, evaluated at the position of the centre of the object~\cite{kleinman1972,kleinmansenior,kleinmanbook}. Our recent work~\cite{ledgerlionheart2012} includes not only terms as $r \to \infty$, but also includes those at distances that are large compared to the size of the object. In particular, by considering the case of  fixed $r$ and $\alpha$, this reduces to
%%%%%%%%%%%%%%%%%%%%%%%%%%%%%%%%%%%%%%%%%%%%%%%%%%%%%%%%%%%%%%%%%%%%%%%%%%%%%%%
\begin{align}  
(({\vec H}_\alpha - {\vec H}_0)({\vec x}))_i = &  ( {\vec D}_{\vec x}^2G({\vec x},{\vec z}))_{ij}  {\mathcal T}({\mu}_r)_{jk}  ( {\vec H}_0({\vec z}))_k                                                                 + {\vec R}({\vec x}) , \label{eqn:asympformula1} 
\end{align}
%%%%%%%%%%%%%%%%%%%%%%%%%%%%%%%%%%%%%%%%%%%%%%%%%%%%%%%%%%%%%%%%%%%%%%%%%%%%%%%
where ${\vec R}({\vec x})= O(\kappa ) $ as $\kappa \to 0$ for a simply connected smooth inclusion with permeability contrast ${\mu}_r: =\mu_* / \mu_0 $ at points away from the object, which describes the magnetostatic response as the limiting case of a low-frequency scattering problem. If we do not fix $r$ and $\alpha$, our Theorem 4.2~\cite{ledgerlionheart2012} can also describe the scattering from dielectric objects at distances that are large compared to the object size, but not the response from conducting objects. The real symmetric rank 2 tensor ${\mathcal T}({\mu}_r)$ can be expressed in range of different forms e.g.~\cite{kleinman1972,kleinmansenior,kleinmanbook} and can in fact be identified with the 
 P\'olya--Szeg\"o tensor~\cite{ammarikangbook}, whose coefficients are explicitly given by
%%%%%%%%%%%%%%%%%%%%%%%%%%%%%%%%%%%%%%%%%%%%%%%%%%%%%%%%%%%%%%%%%%%%%%%%%%%%%%%
\begin{align}
{\mathcal T}({\mu}_r )_{ij} =  & \alpha^3 \left (({\mu}_r -1) |B| \delta_{ij}   
+ ({\mu}_r-1)^2 \int_\Gamma \hat{\vec n}^- \cdot (\nabla_{\vec \xi} \phi_i)  \xi_j \dif {\vec \xi}\right ),
\end{align}
%%%%%%%%%%%%%%%%%%%%%%%%%%%%%%%%%%%%%%%%%%%%%%%%%%%%%%%%%%%%%%%%%%%%%%%%%%%%%%%
where $\xi_j=( {\vec \xi} )_j $ with ${\vec \xi}$ measured from the centre of $B$. In the above, $\phi_i$, $i=1,2,3$, satisfies the transmission problem
%%%%%%%%%%%%%%%%%%%%%%%%%%%%%%%%%%%%%%%%%%%%%%%%%%%%%%%%%%%%%%%%%%%%%%%%%%%%%%%
\begin{subequations}  \label{eqn:transmisscal}
\begin{align}
\nabla^2 \phi_i & =0 && \hbox{in $B\cup B^c$} ,\\
\left [ \phi_i \right ]_\Gamma & = 0 &&  \hbox{on $\Gamma$} , \\
\left . \hat{\vec n} \cdot \nabla \phi_i \right |_+ -\left . \hat{\vec n} \cdot \nabla {\mu}_r
 \phi_i \right |_- & =   \hat{\vec n} \cdot  \nabla \xi_i &&  \hbox{on $\Gamma$} , \\
\phi_i \to 0 & {}  &&  \hbox{as $| {\vec \xi}| \to \infty$}.
\end{align} 
\end{subequations}
%%%%%%%%%%%%%%%%%%%%%%%%%%%%%%%%%%%%%%%%%%%%%%%%%%%%%%%%%%%%%%%%%%%%%%%%%%%%%%%
where $\Gamma$ is the interface between $B$ and $B^c$ and $[\cdot]_\Gamma$ denotes the jump across $\Gamma$. Here, and in the sequel, we have dropped the subscript ${\vec \xi}$ on $\nabla$ and ${\vec x}$ on ${\vec D}$ unless confusion may arise.
Note that other forms of ${\mathcal T}({\mu}_r )_{ij}$ are possible and,
in~\cite{ledgerlionheart2012}, we show equivalence of some common arrangements.
By taking appropriate limiting
values of $ {\mu}_r$, the far field perturbation caused by the presence of  a perfectly conducting object can also be described~\cite{kleinmansenior,kleinmanbook}.

By contrast, the asymptotic behaviour of scattering by a small smooth simply connected object has been obtained on bounded~\cite{ammarivogeluisvolkov,franciswatsonbamc} and unbounded domains~\cite{ammarivolkov2005,ledgerlionheart2012} and, for the limiting magnetostatic response given by $\kappa=0$, has a similar form to 
 (\ref{eqn:asympformula1}) with $ {\vec R}({\vec x})= O(\alpha^3 ) $ as $\alpha \to 0$. 
 When $\kappa\ne 0$ these results also  include additional terms, which also describe the scattering from a conducting dielectric object in terms of ${\mathcal T}({\epsilon}_r^c)$ where ${\epsilon}_r^c := ( \epsilon_* - \im \sigma_* /  \omega )/\epsilon_0 $.

%%%%%%%%%%%%%%%%%%%%%%%%%%%%%%%%%%%%%%%%%%%%%%%%%%%%%%%%%%%%%%%%%%%%%%%%%%%%%%%
\subsection{Asymptotic expansions for the equation system (\ref{eqn:mathmodel2})} \label{sect:mathmodel2}
%%%%%%%%%%%%%%%%%%%%%%%%%%%%%%%%%%%%%%%%%%%%%%%%%%%%%%%%%%%%%%%%%%%%%%%%%%%%%%%

In the above, we have described a series of different asymptotic expansions, which although having a similar form to (\ref{eqn:asympformula}), and describing the perturbed fields for smooth simply connected magnetic objects in the magnetostatic regime (as limiting cases of electromagnetic scattering problems), do not describe the response from conducting objects in the quasi-static regime, i.e. the eddy current problem. The existence of a relationship of the form (\ref{eqn:engexp}) for the case of a conducting sphere in the eddy current regime was first shown by Wait~\cite{wait1951}. For this case ${\mathcal A}$ is diagonal and, he claimed that such a relationship could potentially be used for identifying the conductivity and radius of a sphere. Landau and Lifschitz~\cite[p. 192]{landau} propose that the total magnetic moment acquired by the conductor in a magnetic field  can be expressed as linear combinations of the background field through a symmetric magnetic polarizability tensor~\footnote{They justify the term magnetic polarizability tensor being applicable to this case as it is associated with a magnetic dipole and is a generalised susceptibility.}.  Subsequently,
J\"arvi~\cite{jarvi} and, independently, Baum~\cite{baum499}, have proposed that (\ref{eqn:engexp}) holds for general  shaped conducting objects~\footnote{Although the connection is not explicit, these authors appear to combine the proposition of Landau and Lifschitz~\cite{landau} and a dipole expansion of the field to recover this result.}. Moreover, to the best of our knowledge, with the exception of the sphere, an explicit expression for computing the tensor coefficients of an object is not available and this has led to the common engineering approach of empirically fitting their coefficients e.g.~\cite{das1990,norton2001,pasion2001,peyton2013}.

Ammari, Chen, Chen, Garnier and Volkov~\cite{ammarivolkov2013} have recently obtained an asymptotic expansion, which, for the first time, correctly
describes the perturbed magnetic field as $\alpha \to 0$ for a conducting (possibly also permeable and multiply connected) object in the
presence of a low--frequency background magnetic field, generated by a coil with an alternating current. We quote the result of their Theorem 3.2 in the alternative manner, as first stated in~\cite{ledgerlionheart2014}
%%%%%%%%%%%%%%%%%%%%%%%%%%%%%%%%%%%%%%%%%%%%%%%%%%%%%%%%%%%%%%%%%%%%%%%%%%%%%%%
\begin{align}
({\vec H}_\alpha - {\vec H}_0) ({\vec x} )_i =&  ({\vec D}^2 G( {\vec x} , {\vec z} ))_{\ell m}  {\mathcal M}_{ \ell m  i j} ({\vec H}_0 ( {\vec z} ))_j  
 +O(\alpha^4) , \label{eqn:ammarimain}
\end{align}
%%%%%%%%%%%%%%%%%%%%%%%%%%%%%%%%%%%%%%%%%%%%%%%%%%%%%%%%%%%%%%%%%%%%%%%%%%%%%%%
as $\alpha \to 0$. At first glance this appears similar to the result stated in (\ref{eqn:asympformula}), however, note that the result is expressed in terms of a rank 4 tensor, which appears as an inner product with ${\vec D}^2G({\vec x},{\vec z})$ over the first two indices. Moreover, a rank 4 tensor can have as many as 64 independent complex coefficients ${\mathcal M}_{ \ell m  j i}$ whereas for a symmetric rank 2 tensor this can have at most $6$. In~\cite{ledgerlionheart2014}, we show that for a right--handed orthonormal coordinate system, described by the unit vectors $\hat{\vec e}_j$, $j=1,2,3$, which are chosen as the Cartesian coordinate directions, then it is possible to reduce this result to 
%%%%%%%%%%%%%%%%%%%%%%%%%%%%%%%%%%%%%%%%%%%%%%%%%%%%%%%%%%%%%%%%%%%%%%%%%%%%%%%
\begin{align}
({\vec H}_\alpha - {\vec H}_0) ({\vec x} )_i =&  ({\vec D}^2 G( {\vec x} , {\vec z} ))_{ij }  \widecheck{\widecheck{\mathcal M}}_{ j k} ({\vec H}_0 ( {\vec z} ))_k 
                                                                     +O(\alpha^4)
 \label{eqn:mainresultb} ,
\end{align}
%%%%%%%%%%%%%%%%%%%%%%%%%%%%%%%%%%%%%%%%%%%%%%%%%%%%%%%%%%%%%%%%%%%%%%%%%%%%%%%
as $\alpha \to 0$
where $\widecheck{\widecheck{\mathcal M}}$ is a complex symmetric rank 2 tensor, which we henceforth denote as the \emph{magnetic polarizability tensor}~\footnote{A more precise definition would be to say that $\widecheck{\widecheck{\mathcal M}}$ is the leading order approximation to the magnetic polarizability tensor for small objects. In~\cite{ledgerlionheart2014} we include examples to demonstrate numerically that the prerturbed field, and hence the coefficients of  $\widecheck{\widecheck{\mathcal M}}$, behave asymptotically as predicated by (\ref{eqn:mainresultb}).}. For consistency with~\cite{ledgerlionheart2014} we use a single check to denote the reduction in a tensor's rank by 1 and a single hat to denote its extension in rank by 1. 
 The coefficients of  $\widecheck{\widecheck{\mathcal M}}$ are defined as  $\widecheck{\widecheck{\mathcal M}}_{ij} : =  {\mathcal N}_{ij} - \widecheck{\mathcal C}_{ij}$ where
\begin{subequations}
\begin{align}
\widecheck{\mathcal C}_{ij }  & : = - \frac{\im \nu \alpha^3 }{4}\hat{\vec e}_i \cdot \int_B {\vec \xi} \times ({\vec \theta}_j + \hat{\vec e}_j \times {\vec \xi} ) \dif {\vec \xi} , \\
{\mathcal N}_{ij } & := \alpha^3 \left ( 1- \frac{\mu_0}{\mu_*} \right ) \int_B \left (
\hat{\vec e}_i \cdot  \hat{\vec e}_j + \frac{1}{2}  \hat{\vec e}_i \cdot  \nabla \times {\vec \theta}_j \right ) \dif {\vec \xi} ,
\end{align} \label{eqn:defintiiontensor}
\end{subequations}
%%%%%%%%%%%%%%%%%%%%%%%%%%%%%%%%%%%%%%%%%%%%%%%%%%%%%%%%%%%%%%%%%%%%%%%%%%%%%%%
and ${\vec \theta}_i$ solves the vector valued transmission problem
%%%%%%%%%%%%%%%%%%%%%%%%%%%%%%%%%%%%%%%%%%%%%%%%%%%%%%%%%%%%%%%%%%%%%%%%%%%%%%%
\begin{subequations}
\begin{align} 
\nabla \times \mu^{-1} \nabla \times {\vec \theta}_i -  \im \omega \sigma \alpha^2 {\vec \theta}_i & = \im \omega \sigma \alpha^2 \hat{\vec e}_i \times {\vec \xi}   &&  \hbox{in $B \cup B^c$}  , \\
\nabla \cdot {\vec \theta}_i &= 0  && \hbox{in $B^c$} , \\
\left [ {\vec \theta} _i\times \hat{\vec n} \right ]_\Gamma   &= {\vec 0},  &&  \hbox{on $\Gamma$} ,\\
  \left [   \mu^{-1}   \nabla \times {\vec \theta}_i \times \hat{\vec n} \right ]_\Gamma  & = 
 -2 \left [ \mu^{-1} \right ]_\Gamma \hat{\vec e}_i \times \hat{\vec n}  &&
\hbox{on $\Gamma$} ,\\ 
{\vec \theta}_i( {\vec \xi}) & = O(|{\vec \xi} |^{-1}) && \hbox{as $|{\vec \xi}| \to \infty$} .
\end{align} \label{eqn:transproblemfull}
\end{subequations}
%%%%%%%%%%%%%%%%%%%%%%%%%%%%%%%%%%%%%%%%%%%%%%%%%%%%%%%%%%%%%%%%%%%%%%%%%%%%%%%
We emphasise that (\ref{eqn:ammarimain},\ref{eqn:mainresultb}) provides a rigorous mathematical framework  for the perturbed magnetic field for the eddy current case and  (\ref{eqn:defintiiontensor}) together with the solution of the transmission problem (\ref{eqn:transproblemfull}) now provide explicit expressions for the computation of the coefficients of the tensor $\widecheck{\widecheck{\mathcal M}}$. In~\cite{ledgerlionheart2014} we present numerical results for the computation of the tensor coefficients of different objects and describe how mirror and rotational symmetries of an object can be applied to reduce the number of independent coefficients of $\widecheck{\widecheck{\mathcal M}}$.

\subsection{Unified description for small objects} \label{sect:unimathmodel3}
From Sections~\ref{sect:mathmodel1} and~\ref{sect:mathmodel2} we observe that for small objects $ ({\vec H}_\alpha - {\vec H}_0)({\vec x} )$ has the form of (\ref{eqn:asympformula}) with ${\vec R}({\vec x}) = O(\alpha^4)$ and ${\mathcal A}= {\mathcal T} ( \mu_r )$ for the magnetostatic response and ${\mathcal A}= \widecheck{\widecheck{\mathcal M}}$ for the eddy current case as $\alpha \to 0$. 
We now state a series of lemmas, which unifies their treatment and shows that the former is, in fact, a simplification of the latter. We start with an alternative form of $\widecheck{\widecheck{\mathcal M}}$ :
%%%%%%%%%%%%%%%%%%%%%%%%%%%%%%%%%%%%%%%%%%%%%%%%%%%%%%%%%%%%%%%%%%%%%%%%%%%%%%%
\begin{lemma}
%%%%%%%%%%%%%%%%%%%%%%%%%%%%%%%%%%%%%%%%%%%%%%%%%%%%%%%%%%%%%%%%%%%%%%%%%%%%%%%
The coefficents of  $\widecheck{\widecheck{\mathcal M}}$ can be expressed as $\widecheck{\widecheck{\mathcal M}}_{ij} = 
 {\mathcal N}_{ij}^{\sigma_*}+ {\mathcal N}_{ij}^0 - \widecheck{{\mathcal C}}_{ij}^{\sigma_*}$ where
\begin{subequations}
\begin{align}
\widecheck{{\mathcal C}}_{ij}^{\sigma_*}  & : = - \frac{\im \nu \alpha^3 }{4}\hat{\vec e}_i \cdot \int_B {\vec \xi} \times ({\vec \theta}_j^{(0)} + {\vec \theta}_j^{(1)}  ) \dif {\vec \xi} , \\
{\mathcal N}_{ij}^{\sigma_*} & := \frac{\alpha^3}{2} \left ( 1- \frac{\mu_0}{\mu_*} \right ) \int_B \left (
    \hat{\vec e}_i \cdot  \nabla \times {\vec \theta}_j^{(1)} \right ) \dif {\vec \xi} \label{eqn:definencon}, \\
{\mathcal N}_{ij}^0 & := \frac{\alpha^3}{2} \left ( 1- \frac{\mu_0}{\mu_*} \right ) \int_B \left (
  \hat{\vec e}_i \cdot  \nabla \times {\vec \theta}_j^{(0)} \right ) \dif {\vec \xi} \label{eqn:definen0}.
\end{align}
\end{subequations}
%%%%%%%%%%%%%%%%%%%%%%%%%%%%%%%%%%%%%%%%%%%%%%%%%%%%%%%%%%%%%%%%%%%%%%%%%%%%%%%
and $ {\mathcal N}^{\sigma_*} -\widecheck{{\mathcal C}}^{\sigma_*}$ is a complex symmetric rank 2 tensor and ${\mathcal N}^0$ is a real symmetric rank 2 tensor. The coefficients of these tensors depend on the solutions ${\vec \theta}_i^{(0)}$, ${\vec \theta}_i^{(1)}$, $i=1,2,3$, to the transmission problems
%%%%%%%%%%%%%%%%%%%%%%%%%%%%%%%%%%%%%%%%%%%%%%%%%%%%%%%%%%%%%%%%%%%%%%%%%%%%%%%
\begin{subequations}
\begin{align} 
\nabla \times \mu^{-1} \nabla \times {\vec \theta}_i^{(0)} & ={\vec 0}  && \hbox{in $B \cup B^c$}  , \\
\nabla \cdot {\vec \theta}_i^{(0)} &= 0  && \hbox{in $B\cup B^c$} , \\
\left [ {\vec \theta} _i ^{(0)} \times \hat{\vec n} \right ]_\Gamma   &= {\vec 0}   && \hbox{on $\Gamma$} ,\\ 
\,  \left [   \mu^{-1}   \nabla  \times {\vec \theta}_i^{(0)}  \times \hat{\vec n} \right ]_\Gamma  & = 
{\vec 0}  &&
\hbox{on $\Gamma$} ,\\ 
{\vec \theta}_i^{(0)} ( {\vec \xi})- \hat{\vec e}_i \times {\vec \xi} & = O(|{\vec \xi} |^{-1})  && \hbox{as $|{\vec \xi}| \to \infty$} ,
\end{align} \label{eqn:transproblem0}
\end{subequations}
%%%%%%%%%%%%%%%%%%%%%%%%%%%%%%%%%%%%%%%%%%%%%%%%%%%%%%%%%%%%%%%%%%%%%%%%%%%%%%%
and
%%%%%%%%%%%%%%%%%%%%%%%%%%%%%%%%%%%%%%%%%%%%%%%%%%%%%%%%%%%%%%%%%%%%%%%%%%%%%%%
\begin{subequations}
\begin{align} 
\nabla \times \mu^{-1} \nabla \times {\vec \theta}_i ^{(1)} - \im \omega \sigma \alpha^2 ({\vec \theta}_i^{(1)} +{\vec \theta}_i^{(0)})  & = {\vec 0}   && \hbox{in $B \cup B^c$}  , \\
\nabla \cdot {\vec \theta}_i^{(1)} &= 0 && \hbox{in $B^c$} , \\
\left [ {\vec \theta} _i^{(1)} \times \hat{\vec n} \right ]_\Gamma  & = {\vec 0}  && \hbox{on $\Gamma$} ,\\ 
  \left [   \mu^{-1}   \nabla  \times {\vec \theta}_i^{(1)} \times \hat{\vec n} \right ]_\Gamma  & = 
 {\vec 0} &&
\hbox{on $\Gamma$} ,\\ 
{\vec \theta}_i^{(1)}( {\vec \xi}) & = O(|{\vec \xi} |^{-1}) && \hbox{as $|{\vec \xi}| \to \infty$} .
\end{align} \label{eqn:transproblem1}
\end{subequations}
%%%%%%%%%%%%%%%%%%%%%%%%%%%%%%%%%%%%%%%%%%%%%%%%%%%%%%%%%%%%%%%%%%%%%%%%%%%%%%%
%%%%%%%%%%%%%%%%%%%%%%%%%%%%%%%%%%%%%%%%%%%%%%%%%%%%%%%%%%%%%%%%%%%%%%%%%%%%%%%
\end{lemma}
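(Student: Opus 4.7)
The key idea is to isolate the magnetostatic and eddy current contributions by writing
$$ {\vec \theta}_j = -\hat{\vec e}_j\times{\vec \xi} + {\vec \theta}_j^{(0)} + {\vec \theta}_j^{(1)} , $$
and the plan is first to verify that this ansatz solves (\ref{eqn:transproblemfull}) using the two auxiliary transmission problems (\ref{eqn:transproblem0}) and (\ref{eqn:transproblem1}), and then to substitute the decomposition into the definitions (\ref{eqn:defintiiontensor}).

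The roles of the three pieces are transparent. Since $\nabla\times(\hat{\vec e}_j\times{\vec \xi})=2\hat{\vec e}_j$, the tangential jump of $-2\mu^{-1}\hat{\vec e}_j$ across $\Gamma$ supplies exactly the $-2[\mu^{-1}]_\Gamma\hat{\vec e}_j\times\hat{\vec n}$ forcing in the fourth equation of (\ref{eqn:transproblemfull}), and in $B$ the term $-\im\omega\sigma\alpha^2(-\hat{\vec e}_j\times{\vec \xi})$ produces the inhomogeneity $\im\omega\sigma\alpha^2\hat{\vec e}_j\times{\vec \xi}$ on the right-hand side of the PDE. The magnetostatic piece ${\vec \theta}_j^{(0)}$ from (\ref{eqn:transproblem0}) contributes a curl-curl-free correction with homogeneous interface conditions that corrects the linear growth of $-\hat{\vec e}_j\times{\vec \xi}$ at infinity, while ${\vec \theta}_j^{(1)}$ from (\ref{eqn:transproblem1}) carries the residual eddy current response and is coupled to ${\vec \theta}_j^{(0)}$ through the source $\im\omega\sigma\alpha^2{\vec \theta}_j^{(0)}$. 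Checking each PDE, both tangential jump conditions, the divergence constraint in $B^c$, and the decay rate is then line-by-line bookkeeping using linearity, $\nabla\cdot(\hat{\vec e}_j\times{\vec \xi})=0$, and the piecewise constancy of $\mu$ and $\sigma$; uniqueness of the solution of (\ref{eqn:transproblemfull}) forces the true ${\vec \theta}_j$ to coincide with the ansatz.

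Substitution is then mechanical. In $\widecheck{\mathcal C}_{ij}$ the combination ${\vec \theta}_j+\hat{\vec e}_j\times{\vec \xi}$ collapses to ${\vec \theta}_j^{(0)}+{\vec \theta}_j^{(1)}$, giving $\widecheck{\mathcal C}_{ij}=\widecheck{\mathcal C}_{ij}^{\sigma_*}$. In ${\mathcal N}_{ij}$ the identity $\nabla\times{\vec \theta}_j=-2\hat{\vec e}_j+\nabla\times{\vec \theta}_j^{(0)}+\nabla\times{\vec \theta}_j^{(1)}$ converts the integrand $\hat{\vec e}_i\cdot\hat{\vec e}_j+\tfrac{1}{2}\hat{\vec e}_i\cdot\nabla\times{\vec \theta}_j$ into $\tfrac{1}{2}\hat{\vec e}_i\cdot\nabla\times({\vec \theta}_j^{(0)}+{\vec \theta}_j^{(1)})$ after the $\pm\hat{\vec e}_i\cdot\hat{\vec e}_j$ terms cancel, producing ${\mathcal N}_{ij}={\mathcal N}_{ij}^0+{\mathcal N}_{ij}^{\sigma_*}$ and hence the claimed identity $\widecheck{\widecheck{\mathcal M}}_{ij}={\mathcal N}_{ij}^{\sigma_*}+{\mathcal N}_{ij}^0-\widecheck{\mathcal C}_{ij}^{\sigma_*}$.

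It remains to justify the reality and symmetry statements. Problem (\ref{eqn:transproblem0}) has real coefficients and real data, so ${\vec \theta}_j^{(0)}$ is real-valued and ${\mathcal N}^0$ is a real tensor; its symmetry ${\mathcal N}_{ij}^0={\mathcal N}_{ji}^0$ follows from a standard Pólya--Szegő style integration by parts that rewrites $\int_B\hat{\vec e}_i\cdot\nabla\times{\vec \theta}_j^{(0)}\,\dif{\vec \xi}$ as a $\mu^{-1}$-weighted bilinear form in $({\vec \theta}_i^{(0)},{\vec \theta}_j^{(0)})$ using Stokes' theorem and the homogeneous jump conditions on $\Gamma$, and is visibly invariant under $i\leftrightarrow j$. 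Complex symmetry of ${\mathcal N}^{\sigma_*}-\widecheck{\mathcal C}^{\sigma_*}$ then follows at once from the complex symmetry of $\widecheck{\widecheck{\mathcal M}}$ established in~\cite{ledgerlionheart2014} by subtracting the real symmetric ${\mathcal N}^0$. The main obstacle, should a self-contained proof of the symmetry of ${\mathcal N}^{\sigma_*}-\widecheck{\mathcal C}^{\sigma_*}$ be desired, is keeping track of the coupling in (\ref{eqn:transproblem1}) through the ${\vec \theta}^{(0)}$ source and the interface contributions on $\Gamma$, since the two summands individually are not symmetric and the desired cancellation must be engineered by carefully pairing Green's identity on the curl-curl operator with the transmission conditions of both auxiliary problems.
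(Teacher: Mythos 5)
Your proof is correct and uses exactly the decomposition the paper relies on: the ansatz ${\vec \theta}_i={\vec \theta}_i^{(0)}+{\vec \theta}_i^{(1)}-\hat{\vec e}_i\times{\vec \xi}$, verified against (\ref{eqn:transproblemfull}) via $\nabla\times(\hat{\vec e}_i\times{\vec \xi})=2\hat{\vec e}_i$ and uniqueness, followed by direct substitution into (\ref{eqn:defintiiontensor}). The only cosmetic difference is in the symmetry claims, where you deduce the complex symmetry of ${\mathcal N}^{\sigma_*}-\widecheck{\mathcal C}^{\sigma_*}$ by subtracting the real symmetric ${\mathcal N}^0$ from the known symmetric $\widecheck{\widecheck{\mathcal M}}$, whereas the paper re-runs the argument of Lemma 4.4 of~\cite{ledgerlionheart2014}; both routes lean on that prior work and are equally valid.
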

%%%%%%%%%%%%%%%%%%%%%%%%%%%%%%%%%%%%%%%%%%%%%%%%%%%%%%%%%%%%%%%%%%%%%%%%%%%%%%%
\begin{proof}
%%%%%%%%%%%%%%%%%%%%%%%%%%%%%%%%%%%%%%%%%%%%%%%%%%%%%%%%%%%%%%%%%%%%%%%%%%%%%%%
The result immediately follows from the ansatz ${\vec \theta}_i = {\vec \theta}_i^{(0)} + {\vec \theta}_i^{(1)}-\hat{\vec e}_i \times {\vec \xi}$. The symmetries of  $ {\mathcal N}^{\sigma_*}-\widecheck{{\mathcal C}}^{\sigma_*}$ and  ${\mathcal N}^0$ follow similar arguments to the proof of Lemma 4.4 in~\cite{ledgerlionheart2014}.
%%%%%%%%%%%%%%%%%%%%%%%%%%%%%%%%%%%%%%%%%%%%%%%%%%%%%%%%%%%%%%%%%%%%%%%%%%%%%%%
\end{proof}
%%%%%%%%%%%%%%%%%%%%%%%%%%%%%%%%%%%%%%%%%%%%%%%%%%%%%%%%%%%%%%%%%%%%%%%%%%%%%%%

To understand the role played by the topology of an object and its complement we  recall the definition of Betti numbers (e.g.~\cite{bossavit1998,ledgerzaglmayr2010} and references therein). The zeroth Betti number $\beta_0(\Omega)$, is the number of connected parts of $\Omega$, which for a bounded connected region in ${\mathbb R}^3$ is always $1$. The first Betti number, $\beta_1(\Omega)$ is the genus, i.e. the number of handles and the second Betti number $\beta_2(\Omega)$ is one less than the connected 
parts of the boundary $\partial \Omega$, ie. the number of cavities. If we consider a situation where $\Omega $ has $\beta_1 ( \Omega)$ handles then a non-bounding orientated path, $\gamma_i(\Omega)$, known as a loop, can be associated with each handle $i$. For $\beta_1 (\Omega)$ loops, one can associate $\beta_1(\Omega)$ cuts of $\Omega$ that can be represented by Seifert surfaces as shown for the situation where $B$ is a solid torus in Figure~\ref{fig:torus}. If $ \Sigma_1 \cup \cdots \cup \Sigma_N)$ where $N:= \beta_1(\Omega)$ stands for the complete set of cuts we recall that a curl free field can only be represented by the gradient of a scalar field in $\Omega \setminus ( \Sigma_1 \cup \cdots \Sigma_N)$. Furthermore, we recall that a simply connected region has $\beta_0(\Omega)=1$, $\beta_1(\Omega)=\beta_2(\Omega)=0$ and a multiply connected region is one that is not simply connected. 

\begin{figure}
\begin{center}
$\begin{array}{c}
\includegraphics[width=3in]{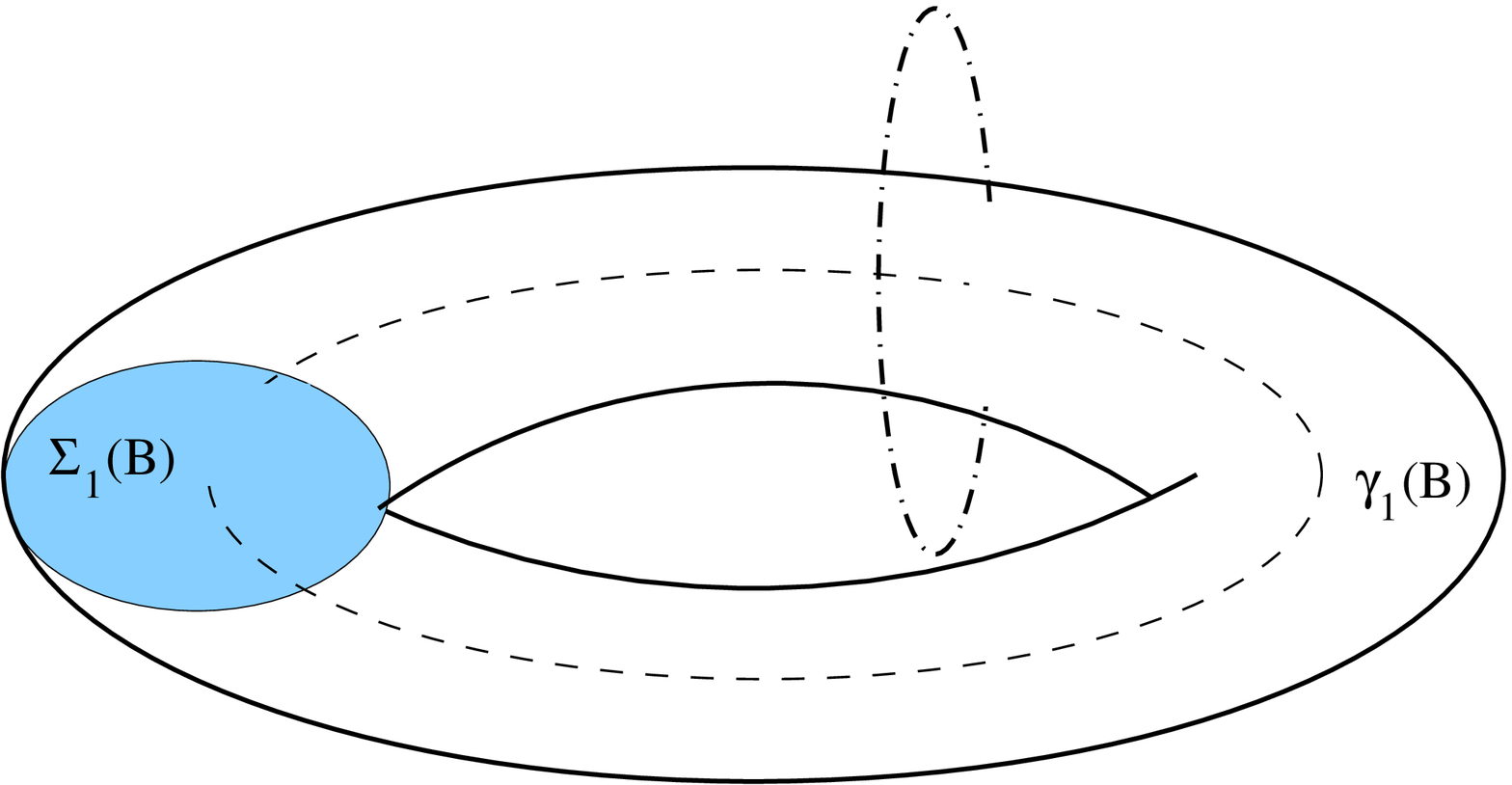} \\
(a)\\
\includegraphics[width=3in]{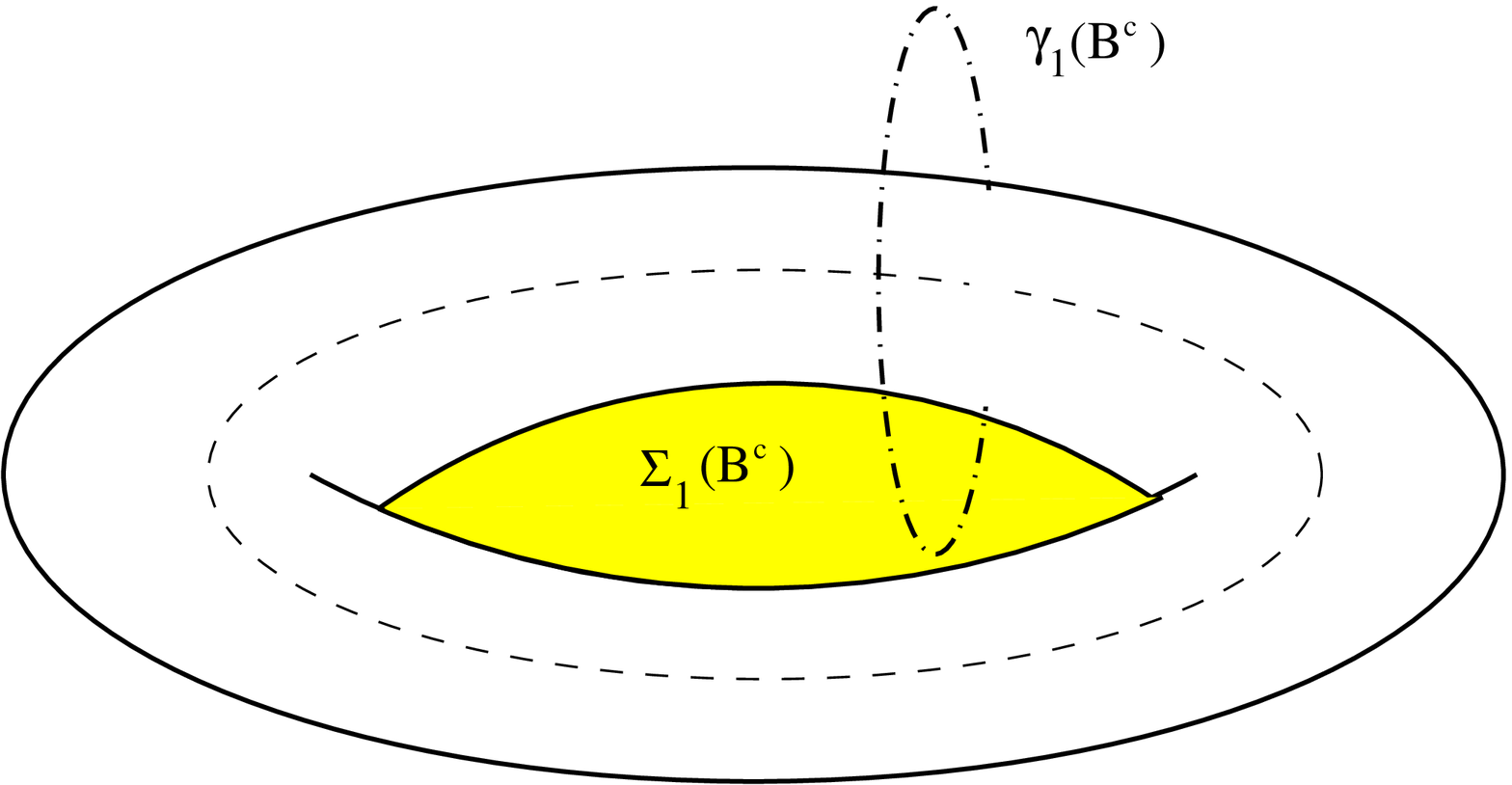} \\
(b)
\end{array}$
\end{center}
\caption{A torus shaped object $B$ showing $(a)$ a typical non--bounding cycle $\gamma_1(B)$ and associated cutting surface $\Sigma_1(B)$  for $B$ and $(b$) similar for $B^c$.} \label{fig:torus}
\end{figure}

\begin{remark}
In fact, in order to ensure uniqueness of (\ref{eqn:transproblemfull}) the additional condition
\begin{equation}
\int_{\Gamma_i} \left  . {\vec n} \cdot {\vec \theta}_i \right |^+ \dif {\vec \xi} = 0,
\end{equation}
where ${\Gamma_i}$, $i=1,\cdots,m$ and $m$ denotes the number of closed surfaces making up $\Gamma$, should be added. Analogous conditions should be added for the systems (\ref{eqn:transproblem0}), (\ref{eqn:transproblem1}).
\end{remark}

%%%%%%%%%%%%%%%%%%%%%%%%%%%%%%%%%%%%%%%%%%%%%%%%%%%%%%%%%%%%%%%%%%%%%%%%%%%%%%%
\begin{lemma} \label{lemma:reductiontosimp}
%%%%%%%%%%%%%%%%%%%%%%%%%%%%%%%%%%%%%%%%%%%%%%%%%%%%%%%%%%%%%%%%%%%%%%%%%%%%%%%
The tensor  $ {\mathcal N}^0$ reduces to ${\mathcal T}(\mu_r)$ independently of the geometric configuration of $B$, and, in particular, independently of the 
first Betti number of $B$ and $B^c$.
%%%%%%%%%%%%%%%%%%%%%%%%%%%%%%%%%%%%%%%%%%%%%%%%%%%%%%%%%%%%%%%%%%%%%%%%%%%%%%%
\end{lemma}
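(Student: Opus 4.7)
The plan is to recognise (\ref{eqn:transproblem0}) as a magnetostatic transmission problem written through a magnetic vector potential and, via the classical magnetic scalar potential, to reduce it to the scalar transmission problem (\ref{eqn:transmisscal}) that defines the P\'olya--Szeg\"o tensor; substituting this identification into (\ref{eqn:definen0}) will then produce $\mathcal{T}(\mu_r)$ in one of its standard integral representations. The claim that the result is independent of the first Betti numbers of $B$ and $B^c$ is built in from the start, because the vector-to-scalar reduction requires only that $\mathbb{R}^3$, and not $B$ or $B^c$, be simply connected.

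First I would set $\vec{H}^{(i)} := \mu^{-1}\nabla\times\vec{\theta}_i^{(0)}$ and check that it satisfies a magnetostatic problem on $\mathbb{R}^3$: the PDE in (\ref{eqn:transproblem0}) gives $\nabla\times\vec{H}^{(i)}=\vec{0}$ piecewise, the interface condition on $\mu^{-1}\nabla\times\vec{\theta}_i^{(0)}\times\hat{\vec{n}}$ gives tangential continuity of $\vec{H}^{(i)}$ across $\Gamma$, $\nabla\cdot(\mu\vec{H}^{(i)})=0$ holds automatically because $\mu\vec{H}^{(i)}$ is a curl, and normal continuity of $\mu\vec{H}^{(i)}$ across $\Gamma$ is forced by tangential continuity of $\vec{\theta}_i^{(0)}$. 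Since the tangential part of $\vec{H}^{(i)}$ is continuous across $\Gamma$, the distributional curl of $\vec{H}^{(i)}$ vanishes on the whole of $\mathbb{R}^3$; and since $\mathbb{R}^3$ is simply connected, irrespective of the topology of $B$ and $B^c$, there exists a single-valued scalar $\Psi_i$ with $\vec{H}^{(i)} = -\nabla\Psi_i$. Using $\nabla\times(\hat{\vec{e}}_i\times\vec{\xi}) = 2\hat{\vec{e}}_i$ together with the prescribed far-field behaviour of $\vec{\theta}_i^{(0)}$ gives $\vec{H}^{(i)}\to 2\mu_0^{-1}\hat{\vec{e}}_i$ at infinity, so writing $\Psi_i = -2\mu_0^{-1}(\xi_i - \phi_i)$ with $\phi_i\to 0$ and translating the remaining PDE, continuity and normal-flux conditions onto $\Psi_i$ identifies $\phi_i$ with the solution of (\ref{eqn:transmisscal}) up to the normalisation implicit in its boundary condition.

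Substituting $\nabla\times\vec{\theta}_j^{(0)} = \mu\vec{H}^{(j)} = 2(\mu/\mu_0)(\hat{\vec{e}}_j - \nabla\phi_j)$ into (\ref{eqn:definen0}) then reduces the defining integral to
\begin{equation*}
\mathcal{N}^0_{ij} = \alpha^3(\mu_r - 1)\int_B\bigl(\delta_{ij} - \partial_i\phi_j\bigr)\,\dif\vec{\xi}.
\end{equation*}
The divergence theorem rewrites $\int_B\partial_i\phi_j\,\dif\vec{\xi}$ as $\int_\Gamma\phi_j n_i^-\,\dif\vec{\xi}$, and Green's second identity in $B$ (both $\phi_j$ and $\xi_i$ being harmonic there) recasts this as $\int_\Gamma\xi_i(\hat{\vec{n}}^-\cdot\nabla\phi_j)|_-\,\dif\vec{\xi}$; the scalar transmission condition on $\Gamma$ supplies the extra factor $(\mu_r - 1)$, producing the $(\mu_r - 1)^2$ surface-integral form of $\mathcal{T}(\mu_r)_{ij}$ displayed in Section~\ref{sect:mathmodel1}. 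Symmetry of the P\'olya--Szeg\"o tensor and the equivalence between its various integral representations already discussed in~\cite{ledgerlionheart2012} close the identification $\mathcal{N}^0 = \mathcal{T}(\mu_r)$.

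The main subtlety, and the one point where the topological independence is genuinely invoked, is the reduction of $\vec{H}^{(i)}$ to a scalar potential. One must verify that tangential continuity of $\vec{H}^{(i)}$ really does kill the distributional curl across $\Gamma$ -- even though $\mu\vec{H}^{(i)}$ has a nontrivial normal jump -- so that no period integrals around non-bounding loops can survive to obstruct single-valuedness of $\Psi_i$. The flux side conditions of Remark 2.1 play no role here: they are used only to fix the gauge of $\vec{\theta}_i^{(0)}$ itself when $B$ is multiply connected, while the derived field $\vec{H}^{(i)}$ and its scalar potential are gauge-independent. Once this independence from the Betti numbers of $B$ and $B^c$ is secured, the remaining manipulations are the routine integral identities familiar from the classical derivation of the P\'olya--Szeg\"o tensor.
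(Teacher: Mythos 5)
Your proposal is correct and follows essentially the same route as the paper: both introduce the auxiliary field $\mu^{-1}\nabla\times{\vec \theta}_i^{(0)}$ (the paper's ${\vec u}_i=\tilde{\mu}_r^{-1}\nabla\times{\vec \theta}_i^{(0)}$ up to the constant $\mu_0$), use the tangential interface condition to conclude it is curl-free in the distributional sense on all of ${\mathbb R}^3$ so that a global single-valued scalar potential exists independently of $\beta_1(B)$ and $\beta_1(B^c)$, and then reduce to the scalar transmission problem (\ref{eqn:transmisscal}) and integrate by parts. The only cosmetic difference is that the paper packages the topological step via the Bossavit harmonic-field decomposition with vanishing loop integrals, whereas you invoke simple connectedness of ${\mathbb R}^3$ directly; your normalisation of $\phi_i$ also differs from the paper's by a factor of $-(\mu_r-1)$, which you correctly flag as absorbed in the boundary condition.
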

%%%%%%%%%%%%%%%%%%%%%%%%%%%%%%%%%%%%%%%%%%%%%%%%%%%%%%%%%%%%%%%%%%%%%%%%%%%%%%%
\begin{proof}
%%%%%%%%%%%%%%%%%%%%%%%%%%%%%%%%%%%%%%%%%%%%%%%%%%%%%%%%%%%%%%%%%%%%%%%%%%%%%%%
We introduce ${\vec v}_i := \nabla \times {\vec \theta}_i^{(0)} $ and set ${\vec u}_i = \tilde{\mu}_r^{-1} {\vec v}_i$ with  $\tilde{\mu}_r:= \left \{ \begin{array}{ll} \mu_r & \text{in $B$} \\ 1\ & \text{in $B^c$}  \end{array} \right .  $. In doing so we can establish, 
using the decay conditions of $\nabla \times {\vec \theta}_i=O(|{\vec \xi}|^{-3})$ as $|{\vec \xi}| \to \infty$~\cite{ammarivolkov2013}, that $ {\vec u}_i$ satisfies the transmission problem
%%%%%%%%%%%%%%%%%%%%%%%%%%%%%%%%%%%%%%%%%%%%%%%%%%%%%%%%%%%%%%%%%%%%%%%%%%%%%%%
 \begin{subequations} \label{eqn:firstordersys}
\begin{align}  
\nabla \times {\vec u}_i & = {\vec 0} &&\hbox{in $B\cup B^c$}, \\
\nabla \cdot (\tilde{\mu}_r {\vec u}_i ) & = 0 &&\hbox{in $B\cup B^c$} ,\\
[\hat{\vec n} \times {\vec u}_i ]_\Gamma &= {\vec 0}  && \hbox{on $\Gamma$} ,\\
 \ [ \hat{\vec n} \cdot (\tilde{\mu}_r {\vec u}_i  ) ]_\Gamma & =0 && \hbox{on $\Gamma$} ,\\
{\vec u}_i({\vec \xi}) - 2 \hat{\vec e}_i & = O (  | {\vec \xi}|^{-3}) && \hbox{as $|{\vec \xi}| \to \infty$} ,
\end{align}
\end{subequations} 
%%%%%%%%%%%%%%%%%%%%%%%%%%%%%%%%%%%%%%%%%%%%%%%%%%%%%%%%%%%%%%%%%%%%%%%%%%%%%%%
which is equivalent to (\ref{eqn:transproblem0}). We set the harmonic fields as ${\vec u}_i = \nabla \vartheta_i + {\vec h}_i$ in $B$ and $B^c$, where ${\vec h}_i$ represents curl free fields that are not gradients with $\text{dim}({\vec h}_i)=\beta_1(B)$ in $B$ and $\text{dim}({\vec h}_i)=\beta_1(B^c)$ in $B^c$. But, due to the fact that ${\vec u}_i$ is curl free for all of ${\mathbb R}^3$ in (\ref{eqn:firstordersys}), we have, {\em independent of the choice of loops $\gamma_i(B)$, $\gamma_j(B^c)$}, that
\begin{equation}
\oint_{\gamma_i(B)} \hat{\vec \tau} \cdot {\vec u}_i \dif {\vec \xi} = \oint_{\gamma_j(B^c)} \hat{\vec \tau} \cdot {\vec u}_i \dif {\vec \xi} = 0 , \nonumber
\end{equation}
 $i=1,\cdots, \beta_1(B)$, $j=1,\cdots,\beta_1(B^c)$, where $\hat{\vec \tau}$ denotes the unit tangent and thus, by Proposition 3 and Remark 3 in~\cite{bossavit1998}, ${\vec h}_i={\vec 0} $ in ${\mathbb R}^3$. Furthermore, by choosing $\vartheta_i:= 2(\mu_r-1)  \phi_i + 2 \xi_i$ then
 (\ref{eqn:firstordersys}) reduces to (\ref{eqn:transmisscal}) and
%%%%%%%%%%%%%%%%%%%%%%%%%%%%%%%%%%%%%%%%%%%%%%%%%%%%%%%%%%%%%%%%%%%%%%%%%%%%%%%
\begin{align}
{\mathcal N}_{ij}^0 & = {\alpha^3} \left ( \mu_r-1  \right ) \int_B \left (
  \hat{\vec e}_i \cdot 
  \left ( (\mu_r -1 ) \nabla \phi_j + \hat{\vec e}_j \right )
   \right ) \dif {\vec \xi} \nonumber \\
   & = \alpha^3\left ( (\mu_r -1 )  |B| \delta_{ij} + (\mu_r -1)^2 \int_B \hat{\vec e}_i \cdot \nabla \phi_j \dif {\vec \xi} \right ) \nonumber \\
   & = \alpha^3\left ( (\mu_r -1 )  |B|  \delta_{ij} + (\mu_r -1)^2 \int_\Gamma  \hat{\vec n}^-  \cdot \nabla \phi_j \xi_i \dif {\vec \xi} \right ) , \nonumber 
   \end{align}
%%%%%%%%%%%%%%%%%%%%%%%%%%%%%%%%%%%%%%%%%%%%%%%%%%%%%%%%%%%%%%%%%%%%%%%%%%%%%%%
where the last step follows by integration points. Finally, we get ${\mathcal N}^0 ={\mathcal T}(\mu_r)$ by the symmetry of the coefficients of the tensor~\cite{cedio1998}.
%%%%%%%%%%%%%%%%%%%%%%%%%%%%%%%%%%%%%%%%%%%%%%%%%%%%%%%%%%%%%%%%%%%%%%%%%%%%%%%
\end{proof}
%%%%%%%%%%%%%%%%%%%%%%%%%%%%%%%%%%%%%%%%%%%%%%%%%%%%%%%%%%%%%%%%%%%%%%%%%%%%%%%

%%%%%%%%%%%%%%%%%%%%%%%%%%%%%%%%%%%%%%%%%%%%%%%%%%%%%%%%%%%%%%%%%%%%%%%%%%%%%%%
\begin{remark}
%%%%%%%%%%%%%%%%%%%%%%%%%%%%%%%%%%%%%%%%%%%%%%%%%%%%%%%%%%%%%%%%%%%%%%%%%%%%%%%
By using the alternative splitting $\widecheck{\widecheck{\mathcal M}} =  {\mathcal N}^{\sigma_*}  - \widecheck{{\mathcal C}}^{\sigma_*}+{\mathcal N}^{0} $ and Lemma~\ref{lemma:reductiontosimp}
we can now write $\widecheck{\widecheck{\mathcal M}} ={\mathcal N}^{\sigma_*}  - \widecheck{{\mathcal C}}^{\sigma_*} + {\mathcal T}(\mu_r)$.
Whereas the original splitting would requires $\sigma_*=0$ for $\widecheck{\widecheck{\mathcal M}} = {\mathcal N} = {\mathcal T}( {\mu}_r)$. Thus the alternative splitting of $\widecheck{\widecheck{\mathcal M}}$ is useful as it allows us to separate the complex symmetric conducting part $ {\mathcal N}^{\sigma_*} -  \widecheck{{\mathcal C}}^{\sigma_*}  $  from the real symmetric magnetic part ${\mathcal N}^{ 0}={\mathcal T}(\mu_r)$  and associate the latter with the P\'olya--Szeg\"o tensor. 
We summarise the interrelationships between the different rank 2 tensors in Fig.~\ref{fig:famrank2ten} and emphasise that (\ref{eqn:mainresultb}) provides a unified description of $ ({\vec H}_\alpha - {\vec H}_0)({\vec x} )$ for eddy current and magnetostatic problems as the object size tends to zero for both simply connected and multiply connected objects.
%%%%%%%%%%%%%%%%%%%%%%%%%%%%%%%%%%%%%%%%%%%%%%%%%%%%%%%%%%%%%%%%%%%%%%%%%%%%%%%
\end{remark}
%%%%%%%%%%%%%%%%%%%%%%%%%%%%%%%%%%%%%%%%%%%%%%%%%%%%%%%%%%%%%%%%%%%%%%%%%%%%%%%

\begin{figure}[h]
\begin{center}
\includegraphics[width=4in]{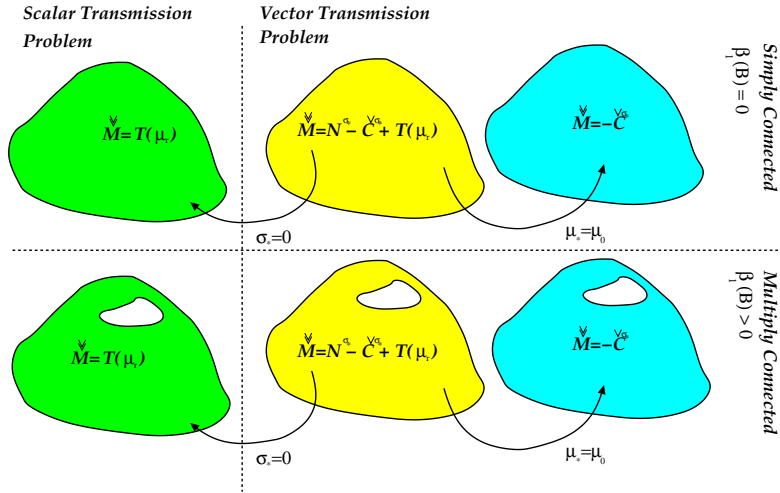}
\end{center}
\caption{A family of rank 2 polarization tensors for describing magnetic and conducting, simply and multiply connected, objects.} \label{fig:famrank2ten}
\end{figure}

\begin{corollary}
If the background magnetic field ${\vec H}_0 ( {\vec z} )$ is assumed to be that produced by a magnetic dipole, such as is appropriate for its evaluation at points away from a current source of small diameter centred at ${\vec y}$, then it follows that, at the centre of the object,
\begin{equation}
({\vec H}_0^e)_i:= ({\vec H}_0 ( {\vec z} ))_i =  ({\vec D}^2 G( {\vec y} , {\vec z} ) {\vec m}^e)_i ,  \nonumber
\end{equation}
where ${\vec m}^e$ is the magnetic dipole moment of exciting current source. Taking the component of (\ref{eqn:mainresultb}) in the direction ${\vec m}^m$ for this background field gives
\begin{align}
{\vec m}^m \cdot ({\vec H}_\alpha - {\vec H}_0)({\vec x}) &   ={\vec H}_0^m   \cdot ( \widecheck{\widecheck{\mathcal M}} {\vec H}_0^e)  + O(\alpha^4) ,
 \label{eqn:asymformeng}
\end{align}
as $\alpha \to 0$  where ${\vec H}_0^m := {\vec D}^2 G({\vec x},{\vec z}) {\vec m}^m$ is the background magnetic field, evaluated at the centre of the object, that would result from considering the measurement coil centred at ${\vec x}$ to be an excitor with dipole moment ${\vec m}^m$. We observe that the leading order term is exactly of the form quoted in (\ref{eqn:engexp}) and, moreover, by applying the Lorentz reciprocity theorem, it is possible to show that this is the leading term in the induced voltage
(see Appendix~\ref{sect:reciprocity}).
\end{corollary}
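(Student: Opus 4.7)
My plan is a direct substitution followed by a tensor contraction, with the key observation being the symmetry of the Hessian of the Laplace Green's function. The first identity $({\vec H}_0^e)_i = ({\vec D}^2 G({\vec y},{\vec z}){\vec m}^e)_i$ in the corollary statement is just the standard magnetostatic dipole formula: the field at ${\vec z}$ produced by a point dipole of moment ${\vec m}^e$ located at ${\vec y}$ arises from the scalar potential $\phi({\vec z}) = -{\vec m}^e\cdot\nabla_{\vec z} G({\vec z},{\vec y})$ as ${\vec H}_0({\vec z}) = -\nabla_{\vec z}\phi = {\vec D}_{\vec z}^2 G({\vec z},{\vec y}){\vec m}^e$. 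Because $G$ depends only on ${\vec z}-{\vec y}$, the second-order Hessian is unchanged by swapping the arguments or differentiating in either slot, giving the stated form ${\vec D}^2 G({\vec y},{\vec z}){\vec m}^e$ and justifying the notation ${\vec H}_0^e := {\vec H}_0({\vec z})$.

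Next I would contract (\ref{eqn:mainresultb}) with $m_i^m$ on both sides to obtain
\begin{equation*}
m_i^m\,({\vec H}_\alpha - {\vec H}_0)({\vec x})_i = m_i^m\,({\vec D}^2 G({\vec x},{\vec z}))_{ij}\,\widecheck{\widecheck{\mathcal M}}_{jk}\,({\vec H}_0^e)_k + O(\alpha^4).
\end{equation*}
The explicit formula $({\vec D}^2 G({\vec x},{\vec z}))_{ij} = (3\hat{\vec r}_i\hat{\vec r}_j - \delta_{ij})/(4\pi r^3)$ stated earlier in the excerpt is manifestly symmetric in $i,j$, so
\begin{equation*}
m_i^m\,({\vec D}^2 G({\vec x},{\vec z}))_{ij} = ({\vec D}^2 G({\vec x},{\vec z}){\vec m}^m)_j = ({\vec H}_0^m)_j,
\end{equation*}
by the definition ${\vec H}_0^m := {\vec D}^2 G({\vec x},{\vec z}){\vec m}^m$ given in the corollary. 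Substituting back yields $m_i^m\,({\vec H}_\alpha - {\vec H}_0)({\vec x})_i = ({\vec H}_0^m)_j\,\widecheck{\widecheck{\mathcal M}}_{jk}\,({\vec H}_0^e)_k + O(\alpha^4) = {\vec H}_0^m\cdot(\widecheck{\widecheck{\mathcal M}}{\vec H}_0^e) + O(\alpha^4)$, which is exactly (\ref{eqn:asymformeng}). This part of the argument is really just bookkeeping; no serious obstacle arises once the symmetry of the Hessian and of $\widecheck{\widecheck{\mathcal M}}$ (established in~\cite{ledgerlionheart2014}) are brought to bear.

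The ancillary claim that this expression is also the leading term in the voltage induced on the measurement coil is the part that requires genuine work, but it is deferred to Appendix~\ref{sect:reciprocity}. My plan there would be to express the induced voltage as a flux integral of $-\im\omega\mu_0({\vec H}_\alpha - {\vec H}_0)$ through the measurement coil, and then apply the Lorentz reciprocity theorem to the pair (physical excitor with driving current; measurement coil driven by unit current) so as to rewrite this flux in terms of the dipole moment ${\vec m}^m$ contracted against the perturbed magnetic field evaluated at the object centre, at which point the preceding contraction calculation applies directly. The main obstacle I anticipate is tracking error terms carefully so that both the small--coil (dipole) approximation for the measurement coil and the object--size limit $\alpha\to 0$ are mutually consistent, ensuring that the $O(\alpha^4)$ residual is preserved through the reciprocity identity and not degraded by the coil's own finite geometry.
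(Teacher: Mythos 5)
Your proposal is correct and follows essentially the same route as the paper: the dipole formula for ${\vec H}_0^e$, contraction of (\ref{eqn:mainresultb}) with ${\vec m}^m$ using the symmetry of $({\vec D}^2G({\vec x},{\vec z}))_{ij}$ to produce ${\vec H}_0^m$, and deferral of the induced-voltage claim to the reciprocity argument of Appendix~\ref{sect:reciprocity}. Your sketch of that appendix (idealising both coils as magnetic dipoles, applying Lorentz reciprocity, and tracking the $O(\alpha^4)$ residual) also matches the paper's treatment.
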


%%%%%%%%%%%%%%%%%%%%%%%%%%%%%%%%%%%%%%%%%%%%%%%%%%%%%%%%%%%%%%%%%%%%%%%%%%%%%%%
\section{Bounds on the tensor coefficients and associated properties} \label{sect:bounds}
%%%%%%%%%%%%%%%%%%%%%%%%%%%%%%%%%%%%%%%%%%%%%%%%%%%%%%%%%%%%%%%%%%%%%%%%%%%%%%%
\subsection{Preliminaries}
Following the restriction to orthonormal coordinates, we will, henceforth, arrange the coefficients of the rank 2 tensors ${\mathcal T}(\mu_r)$ and $\widecheck{\widecheck{\mathcal M}}$ (and its components ${\mathcal N}^0={\mathcal T}(\mu_r)$, ${\mathcal N}^{\sigma_*}$ and $\widecheck{\mathcal C}^{\sigma_*}$) in the form of $3 \times 3$ matrices. As standard, we shall compute their eigenvalues and eigenvectors of the tensors by computing the corresponding quantities for their matrix arrangements.

For some real contrast $k$, we recall that the P\'olya-Szeg\"o tensor ${\mathcal T}(k)$, is real symmetric with real eigenvalues and its eigenvectors are mutually perpendicular~\cite{ammarikangbook}.  It can be diagonalised as
%%%%%%%%%%%%%%%%%%%%%%%%%%%%%%%%%%%%%%%%%%%%%%%%%%%%%%%%%%%%%%%%%%%%%%%%%%%%%%%
\begin{subequations}
\begin{align}
({\mathcal R}^T  ({\mathcal T} {\mathcal R}))_{ij} = {\mathcal R}_{ki} {\mathcal T}_{k\ell} {\mathcal R}_{\ell j}  &= {\Lambda}_{ij}, \\
{\mathcal T}_{ij} = {\mathcal R}_{ik} {\Lambda}_{k\ell} {\mathcal R}_{j\ell}&= ({\mathcal R}(\Lambda {\mathcal R}^T))_{ij} \label{eqn:diagonal},
 \end{align}
 \end{subequations}
%%%%%%%%%%%%%%%%%%%%%%%%%%%%%%%%%%%%%%%%%%%%%%%%%%%%%%%%%%%%%%%%%%%%%%%%%%%%%%%
where ${\mathcal R}$ is real and orthogonal $({\mathcal R}^{-1}={\mathcal R}^T)$ and whose columns are the tensor's eigenvectors. Furthermore, ${\Lambda}$ is diagonal with its enteries $\lambda_1$, $\lambda_2$ and $\lambda_3$ being the eigenvalues of ${\mathcal T}$ (being positive definite for $1<k<\infty$ and negative definite for $0<k<1$). 

On the otherhand, as $\widecheck{\widecheck{\mathcal M}}$ is complex symmetric then, in general, it is not diagonalisable by a real rotation matrix apart for the specific case where the real and imaginary parts $\widecheck{\widecheck{\mathcal M}}$ commute such that $\widecheck{\widecheck{\mathcal M}} = \widecheck{\widecheck{\mathcal M}}_r + \im \widecheck{\widecheck{\mathcal M}}_r$ and in this case
\begin{subequations}
\begin{align}
({\mathcal R}^T  (\widecheck{\widecheck{\mathcal M}} {\mathcal R}))_{ij} = {\mathcal R}_{ki} (\widecheck{\widecheck{\mathcal M}}_r+ & \im \widecheck{\widecheck{\mathcal M}}_r)_{k\ell} {\mathcal R}_{\ell j}  = {\Lambda}_{ij}+ \im {\Lambda}_{ij}, \\
\widecheck{\widecheck{\mathcal M}}_{ij} = {\mathcal R}_{ik} ({\Lambda}_{k\ell} + \im {\Lambda}_{k\ell}   ) {\mathcal R}_{j\ell}&= ({\mathcal R}((\Lambda + \im \Lambda) {\mathcal R}^T))_{ij},
 \end{align}
 \end{subequations}
where the columns of ${\mathcal R}$ are the eigenvectors of  $\widecheck{\widecheck{\mathcal M}}_r$ and ${\Lambda}$ is diagonal with enteries $\lambda_1$, $\lambda_2$ and $\lambda_3$ being the eigenvalues of ${\mathcal M}_r$. The symmetric singular value decomposition~~\cite{thompson} can be applied to achieve diagonalisation of a complex symmetric matrix using a unitary matrix, although it remains to be shown whether this decomposition provides any practical insights.

We recall the Cayley--Hamilton theorem, which states that for a symmetric rank 2 tensor ${\mathcal A}$ that
\begin{equation}
{\mathcal A}^3 - I_1 {\mathcal A}^2 + I_2 {\mathcal A} - I_3 {\mathcal I} = 0,
\end{equation}
where it's invariants are $I_1 = \hbox{tr}\, ({\mathcal A})$,  $I_2= \frac{1}{2}( I_1^2 -  \hbox{tr}\, ( {\mathcal A}{\mathcal A}))$ and $I_3 = \hbox{det}\, {\mathcal A}$ and ${\mathcal I}_{ij}= \delta_{ij}$. 

%%%%%%%%%%%%%%%%%%%%%%%%%%%%%%%%%%%%%%%%%%%%%%%%%%%%%%%%%%%%%%%%%%%%%%%%%%%%%%%
\subsection{Some properties of the P\'olya--Szeg\"o tensor ${\mathcal T}(k)$} \label{sect:proppyola}
%%%%%%%%%%%%%%%%%%%%%%%%%%%%%%%%%%%%%%%%%%%%%%%%%%%%%%%%%%%%%%%%%%%%%%%%%%%%%%%

We first list some known properties of the P\'olya--Szeg\"o tensor  ${\mathcal T}(k)$ for $0< k \ne 1 < \infty$, which by Lemma~\ref{lemma:reductiontosimp}, also carry over to ${\mathcal N}^0$.
\begin{itemize}

\item Kleinman and Senior~\cite{kleinmansenior} show that the coefficients of the tensor satisfy
\begin{align}
\left ( {\mathcal T}_{ij} - \frac{k  -1 }{k +1} |B| \delta_{ij}\alpha^3 \right )^2 
\le  \left ( {\mathcal T}_{ii} -  \frac{k-1 }{k +1} |B| \alpha^3  \right ) & \left ( {\mathcal T}_{jj} -  \frac{k -1 }{k +1} |B| \alpha^3 \right ), \label{eqn:psoffdiag}
\end{align}
which is trivially satisfied for the diagonal entries. An alternative proof can be found in Ammari and Kang~\cite{ammarikangbook}.

\item Kleinman and Senior also show that the diagonal coefficients of the tensor satisfy 
\begin{equation}
\frac{k-1}{k } \le \frac{{\mathcal T}_{ii} }{\alpha^3 |B|} \le k -1, \label{eqn:diagelem}
\end{equation}
 Again, an alternative proof can be found in~\cite{ammarikangbook}, which also states that the eigenvalues of ${\mathcal T}$ satisfy the same inequality
\begin{equation}
\frac{k-1}{k} \le \frac{\lambda_i ( {\mathcal T})}{\alpha^3 |B|} \le k -1.  \label{eqn:evps}
\end{equation}

\item The bound on the trace of ${\mathcal T}$, which follows from (\ref{eqn:diagelem}), is not optimum and instead~\cite{ammarikangbook} proves the improved result, which we state for three dimensions below 
\begin{align}
\frac{1}{k -1} \hbox{tr}({\mathcal T}) & \le  \left ( 2+ \frac{1}{k} \right ) \alpha^3 |B|  \label{eqn:traceps},\\
(k-1) \hbox{tr}({\mathcal T}^{-1}) &\le \frac{2 +k}{|B| \alpha^3}  ,
\end{align}
and has been previously proved by Capdeboscq and Vogelius~\cite{capdeboscq2003}.
\end{itemize}

Using these results we establish the following.

%%%%%%%%%%%%%%%%%%%%%%%%%%%%%%%%%%%%%%%%%%%%%%%%%%%%%%%%%%%%%%%%%%%%%%%%%%%%%%%
\begin{lemma} \label{lemma:psprop}
%%%%%%%%%%%%%%%%%%%%%%%%%%%%%%%%%%%%%%%%%%%%%%%%%%%%%%%%%%%%%%%%%%%%%%%%%%%%%%%
For a contrast $1<k < \infty$ the invariants $I_1$, $I_2$ and $I_3$ of the P\'olya--Szeg\"o tensor ${\mathcal T}(k)$ satisfy
%%%%%%%%%%%%%%%%%%%%%%%%%%%%%%%%%%%%%%%%%%%%%%%%%%%%%%%%%%%%%%%%%%%%%%%%%%%%%%%
\begin{align}
0 &  <  I_1 \le \left ( 2+ \frac{1}{k} \right ) (k-1) \alpha^3 |B|,  \label{eqn:inv1kgr1}\\
0 &  <   |I_2 |  \le  \frac{1}{2} \left(    7+ \frac{4}{k} +\frac{1}{k^2} \right ) (k-1)^2  \alpha^6 |B|^2 , \label{eqn:inv2kgr1} \\
0  &< \alpha^9 |B|^3 \left ( 1- \frac{1}{k} \right )^3    \le I_3 \le (k-1)^3 \alpha^9 |B|^3 , \label{eqn:inv3kgr1}
\end{align}
%%%%%%%%%%%%%%%%%%%%%%%%%%%%%%%%%%%%%%%%%%%%%%%%%%%%%%%%%%%%%%%%%%%%%%%%%%%%%%%
in three dimensions. On the other hand, if $0 < k < 1 $ then the following inequalities hold
\begin{align}
0 &  >  I_1 \ge \left ( 2+ \frac{1}{k} \right ) (k-1) \alpha^3 |B| , \label{eqn:inv1klt1}\\
0 &  <  |I_2|  \le  \frac{1}{2} \left(  7+ \frac{4}{k} +\frac{1}{k^2} \right ) (k-1)^2  \alpha^6 |B|^2, \label{eqn:inv2klt1}  \\
 \alpha^9 & |B|^3 \left ( 1- \frac{1}{k} \right )^3    \le I_3 \le (k-1)^3 \alpha^9 |B|^3 <0. \label{eqn:inv3klt1}
 \end{align}
%%%%%%%%%%%%%%%%%%%%%%%%%%%%%%%%%%%%%%%%%%%%%%%%%%%%%%%%%%%%%%%%%%%%%%%%%%%%%%%
\end{lemma}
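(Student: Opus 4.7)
The plan is to obtain all three bounds directly from the two inputs already available in the bulleted list: the trace bound (\ref{eqn:traceps}) and the pointwise eigenvalue bound (\ref{eqn:evps}). Since ${\mathcal T}(k)$ is real symmetric, its three real eigenvalues $\lambda_1,\lambda_2,\lambda_3$ control the invariants through $I_1=\sum_i\lambda_i$, $I_3=\prod_i\lambda_i$ and the Cayley--Hamilton identity $I_2=\tfrac{1}{2}(I_1^2-\hbox{tr}({\mathcal T}^2))$ with $\hbox{tr}({\mathcal T}^2)=\sum_i\lambda_i^2$, so I will reduce everything to algebra on the $\lambda_i$.

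For the regime $k>1$ every eigenvalue lies in the strictly positive interval $[(k-1)/k,\,k-1]\,\alpha^3|B|$ by (\ref{eqn:evps}). Positivity of the $\lambda_i$ delivers $I_1>0$ and $I_3>0$ immediately, the upper estimate in (\ref{eqn:inv1kgr1}) is just (\ref{eqn:traceps}) rewritten, and the two-sided estimate (\ref{eqn:inv3kgr1}) on $I_3$ is obtained by multiplying the extremal eigenvalue bounds termwise. The only nontrivial step is (\ref{eqn:inv2kgr1}): the triangle inequality gives $2|I_2|\le I_1^2+\hbox{tr}({\mathcal T}^2)$, into which I substitute $I_1^2\le(2+1/k)^2(k-1)^2\alpha^6|B|^2$ from (\ref{eqn:traceps}) squared together with $\hbox{tr}({\mathcal T}^2)\le 3(k-1)^2\alpha^6|B|^2$ from (\ref{eqn:evps}); the combination $(2+1/k)^2+3 = 7+4/k+1/k^2$ then reproduces the advertised constant exactly.

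The case $0<k<1$ follows the same three-step template, but every eigenvalue is now negative, so $I_1<0$, $I_3<0$ and $I_2>0$ (each summand $\lambda_i\lambda_j$ being a product of two negatives). Dividing (\ref{eqn:traceps}) through by the negative quantity $k-1$ flips the inequality and yields the lower bound on $I_1$ in (\ref{eqn:inv1klt1}); multiplying the signed eigenvalue intervals gives (\ref{eqn:inv3klt1}); and for $I_2$ the same triangle-inequality argument applies once one squares the magnitude bounds, so that only $(k-1)^2$ and non-negative powers of $1/k$ appear on the right-hand side.

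The main obstacle here is purely bookkeeping rather than analysis: one must keep track of which endpoint of the eigenvalue interval in (\ref{eqn:evps}) is larger in magnitude as $k$ crosses $1$, and reverse the direction of (\ref{eqn:traceps}) exactly when $k-1$ changes sign. No new analytic ingredient beyond (\ref{eqn:traceps}), (\ref{eqn:evps}) and the Cayley--Hamilton formula for $I_2$ is required, which is why this lemma sits naturally as a corollary of the preceding bulleted list of known properties. The bound on $I_2$ is the least tight of the three, as the triangle inequality deliberately discards the cancellation between $I_1^2$ and $\hbox{tr}({\mathcal T}^2)$, but this has the virtue of giving a single closed-form expression that is valid in both contrast regimes.
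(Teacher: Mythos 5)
Your proposal is correct and follows essentially the same route as the paper: the trace bound gives the $I_1$ estimates, the triangle inequality applied to the Cayley--Hamilton expression $I_2=\tfrac{1}{2}\bigl(I_1^2-\hbox{tr}({\mathcal T}^2)\bigr)$ combined with the eigenvalue bound gives the $I_2$ estimate (your explicit check that $(2+1/k)^2+3=7+4/k+1/k^2$ is cleaner than the paper's write-up, which contains a typographical slip writing $(\hbox{tr}({\mathcal T}))^2$ where $\hbox{tr}({\mathcal T}{\mathcal T})$ is meant), and the definiteness of ${\mathcal T}$ together with the eigenvalue interval gives the two-sided $I_3$ bounds.
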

%%%%%%%%%%%%%%%%%%%%%%%%%%%%%%%%%%%%%%%%%%%%%%%%%%%%%%%%%%%%%%%%%%%%%%%%%%%%%%%

%%%%%%%%%%%%%%%%%%%%%%%%%%%%%%%%%%%%%%%%%%%%%%%%%%%%%%%%%%%%%%%%%%%%%%%%%%%%%%%
\begin{proof}
%%%%%%%%%%%%%%%%%%%%%%%%%%%%%%%%%%%%%%%%%%%%%%%%%%%%%%%%%%%%%%%%%%%%%%%%%%%%%%%
The results (\ref{eqn:inv1kgr1}) and (\ref{eqn:inv1klt1}) immediately follow from (\ref{eqn:traceps}). 

From (\ref{eqn:inv1kgr1}) $I_1^2= (\hbox{tr}({\mathcal T}))^2 \le \left ( 2+ \frac{1}{k} \right )^2 (k-1)^2 \alpha^6 |B|^2 $ for $0 < k \ne 1 < \infty$ and recalling (\ref{eqn:diagonal}) then $\hbox{tr}({\mathcal T}{\mathcal T} )= \hbox{tr}( {\mathcal R} \Lambda \Lambda {\mathcal R}^T) = \lambda_1^2 + \lambda_2^2+\lambda_3^2$. Thus,  since $|I_2| =\frac{1}{2}| I_1^2 -  (\hbox{tr}({\mathcal T}))^2| \le \frac{1}{2} ( I_1^2 + (\hbox{tr}({\mathcal T}))^2)$ and using (\ref{eqn:evps}), the results stated in  (\ref{eqn:inv2kgr1}) and (\ref{eqn:inv2klt1})   immediately follow. 

Recalling $I_3 = \hbox{det} ({\mathcal T}) = \lambda_1\lambda_2\lambda_3$ and that ${\mathcal T}$ is positive definite for $1 < k < \infty$ and negative definitive for $0 < k < 1$ then (\ref{eqn:inv3kgr1}) and (\ref{eqn:inv3klt1}) follow from immediately from the substitution of (\ref{eqn:evps}).
%%%%%%%%%%%%%%%%%%%%%%%%%%%%%%%%%%%%%%%%%%%%%%%%%%%%%%%%%%%%%%%%%%%%%%%%%%%%%%%
\end{proof}
%%%%%%%%%%%%%%%%%%%%%%%%%%%%%%%%%%%%%%%%%%%%%%%%%%%%%%%%%%%%%%%%%%%%%%%%%%%%%%%

%%%%%%%%%%%%%%%%%%%%%%%%%%%%%%%%%%%%%%%%%%%%%%%%%%%%%%%%%%%%%%%%%%%%%%%%%%%%%%%
\begin{corollary}
%%%%%%%%%%%%%%%%%%%%%%%%%%%%%%%%%%%%%%%%%%%%%%%%%%%%%%%%%%%%%%%%%%%%%%%%%%%%%%%
It immediately follows from Lemma~\ref{lemma:psprop} that the volumetric (spherical) part of ${\mathcal T}$ can be bounded as
%%%%%%%%%%%%%%%%%%%%%%%%%%%%%%%%%%%%%%%%%%%%%%%%%%%%%%%%%%%%%%%%%%%%%%%%%%%%%%%
\begin{equation}
3|B| ^2 \alpha^6 \left ( \frac{k-1}{k}\right )^2 \le \| \hbox{diag} ( {\mathcal T}) \|_F \le 3 |B|^2 \alpha^6 (k-1)^2,
%%%%%%%%%%%%%%%%%%%%%%%%%%%%%%%%%%%%%%%%%%%%%%%%%%%%%%%%%%%%%%%%%%%%%%%%%%%%%%%
\end{equation}
where $\| {\mathcal A} \|_F:=\displaystyle \sqrt{\sum_{i=1}^3\sum_{j=1}^3 |{\mathcal A}_{ij} |^2 }$ denotes the Forbenius matrix norm.
%%%%%%%%%%%%%%%%%%%%%%%%%%%%%%%%%%%%%%%%%%%%%%%%%%%%%%%%%%%%%%%%%%%%%%%%%%%%%%%
\end{corollary}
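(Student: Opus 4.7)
The plan is to apply the already-established diagonal bound (\ref{eqn:diagelem}) entrywise to $\mathcal{T}_{ii}$, pass to squares, and sum over $i=1,2,3$. Since $\mathrm{diag}(\mathcal{T})$ is the diagonal matrix whose $(i,i)$ entry is $\mathcal{T}_{ii}$ and whose off-diagonal entries vanish, the Frobenius quantity in question reduces to
\begin{equation}
\| \mathrm{diag}(\mathcal{T}) \|_F^{\,2} \;=\; \sum_{i=1}^3 \mathcal{T}_{ii}^{\,2},\nonumber
\end{equation}
so the problem reduces to squaring the scalar bound (\ref{eqn:diagelem}) and adding three copies.

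First I would handle the case $1 < k < \infty$. Then the lower and upper bounds in (\ref{eqn:diagelem}) are both strictly positive, and $x \mapsto x^2$ is monotone increasing on $[0,\infty)$, so squaring preserves the ordering and
\begin{equation}
\left( \tfrac{k-1}{k} \right)^{2} |B|^2 \alpha^6 \;\le\; \mathcal{T}_{ii}^{\,2} \;\le\; (k-1)^2 |B|^2 \alpha^6\nonumber
\end{equation}
for each $i=1,2,3$. Summing these three identical inequalities yields the stated bound. For $0 < k < 1$ the two bounds in (\ref{eqn:diagelem}) are both strictly negative, so $x \mapsto x^2$ reverses the ordering; but the squares of $\tfrac{k-1}{k}$ and $k-1$ are again nonnegative and one checks that $\left|\tfrac{k-1}{k}\right| \ge |k-1|$ in this regime, so after squaring one recovers exactly the same pair of inequalities on $\mathcal{T}_{ii}^{\,2}$ (with the roles of upper and lower kept consistent with the corollary statement). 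Summation over $i$ then gives the conclusion in both regimes.

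There is no real obstacle here; the only step that requires minor care is tracking the direction of the inequality when $0<k<1$, since both endpoints in (\ref{eqn:diagelem}) are negative and squaring is not monotone across the sign change. Once one notes that the corollary's stated bounds are themselves written in terms of squares $((k-1)/k)^2$ and $(k-1)^2$, which are sign-blind, the argument goes through identically in the two regimes and the proof is a one-line application of Lemma~\ref{lemma:psprop} (via the Kleinman--Senior estimate (\ref{eqn:diagelem})).
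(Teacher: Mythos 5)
Your approach for $1<k<\infty$ --- square the Kleinman--Senior diagonal bound (\ref{eqn:diagelem}) entrywise, note that $\|\hbox{diag}(\mathcal{T})\|_F^2=\sum_{i}\mathcal{T}_{ii}^2$, and sum over $i=1,2,3$ --- is exactly the one-line argument the corollary is asserting (the paper offers no explicit proof beyond ``immediately follows'', and the stated left-hand side should indeed carry the square that you silently supply). For that regime your proof is correct and matches the intended route.

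Your handling of $0<k<1$, however, contains a genuine error. From (\ref{eqn:diagelem}) one has $\frac{k-1}{k}\le \mathcal{T}_{ii}/(\alpha^3|B|)\le k-1<0$, and since $\left|\frac{k-1}{k}\right|\ge|k-1|$ in this regime (as you yourself note), squaring reverses the chain and yields $(k-1)^2\le \mathcal{T}_{ii}^2/(\alpha^6|B|^2)\le\left(\frac{k-1}{k}\right)^2$: the two endpoints swap roles rather than reappearing in the same positions. Concretely, for $k=1/2$ one has $(k-1)/k=-1$ and $k-1=-1/2$, so $\sum_i\mathcal{T}_{ii}^2$ lies between $\frac{3}{4}\alpha^6|B|^2$ and $3\alpha^6|B|^2$, whereas the corollary as displayed (and your claim that the same pair of inequalities is recovered) would force $3\le\sum_i\mathcal{T}_{ii}^2/(\alpha^6|B|^2)\le\frac{3}{4}$, which is vacuous. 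The displayed bound can therefore only be read as stated for $1<k<\infty$; for $0<k<1$ the upper and lower bounds must be interchanged, and your assertion that the argument ``goes through identically in the two regimes'' is false. The rest of your write-up is fine.
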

%%%%%%%%%%%%%%%%%%%%%%%%%%%%%%%%%%%%%%%%%%%%%%%%%%%%%%%%%%%%%%%%%%%%%%%%%%%%%%%

%%%%%%%%%%%%%%%%%%%%%%%%%%%%%%%%%%%%%%%%%%%%%%%%%%%%%%%%%%%%%%%%%%%%%%%%%%%%%%%
\begin{lemma}
%%%%%%%%%%%%%%%%%%%%%%%%%%%%%%%%%%%%%%%%%%%%%%%%%%%%%%%%%%%%%%%%%%%%%%%%%%%%%%%
The deviatoric part of ${\mathcal T}(k)$ can be bounded as
%%%%%%%%%%%%%%%%%%%%%%%%%%%%%%%%%%%%%%%%%%%%%%%%%%%%%%%%%%%%%%%%%%%%%%%%%%%%%%%
\begin{subequations}
\begin{align}
 \left \| {\mathcal T} - \frac{1}{3} \hbox{tr}({\mathcal T}) \right  \|_F^2 & \le \alpha^6 |B|^2 (k-1)^2 
  \left ( 3 +\frac{1}{3} \left ( 2 + \frac{1}{k} \right )^2 +6 \left ( 1 -  \frac{1}{k+1} \right )^2 \right ) ,\\
\hbox{if $1 < k < \infty$}& \hbox{ and as }\nonumber \\
\left \| {\mathcal T} - \frac{1}{3} \hbox{tr}({\mathcal T}) \right  \|_F^2 & \le \alpha^6 |B|^2 (k-1)^2 
  \left ( \frac{3}{k^2} +  \frac{1}{3} \left ( 2+ \frac{1}{k} \right )^2 + 6 \left ( \frac{1}{k} + \frac{1}{k+1}  \right )^2 \right ), \\
\hbox{if $ 0< k < 1$.} & {} \nonumber
\end{align}
\end{subequations}
%%%%%%%%%%%%%%%%%%%%%%%%%%%%%%%%%%%%%%%%%%%%%%%%%%%%%%%%%%%%%%%%%%%%%%%%%%%%%%%
\end{lemma}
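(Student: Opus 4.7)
The plan is to use the decomposition $\left\|\mathcal{T} - \tfrac{1}{3}\mathrm{tr}(\mathcal{T})\mathcal{I}\right\|_F^2 = \sum_{i=1}^{3}(\mathcal{T}_{ii} - \bar{\mathcal{T}})^2 + 2\sum_{i<j}\mathcal{T}_{ij}^2$, where $\bar{\mathcal{T}} := \mathrm{tr}(\mathcal{T})/3$, splitting the problem into diagonal and off-diagonal contributions, and then applying the existing inequalities of Section \ref{sect:proppyola} to each.

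For the diagonal sum, expanding the square yields $\sum_i(\mathcal{T}_{ii} - \bar{\mathcal{T}})^2 = \sum_i \mathcal{T}_{ii}^2 - 3\bar{\mathcal{T}}^2$, and the (deliberately loose) scalar triangle inequality $|x-y| \le |x|+|y|$ then gives the upper bound $\sum_i \mathcal{T}_{ii}^2 + 3\bar{\mathcal{T}}^2$. I would bound the first term entry-wise using (\ref{eqn:diagelem}), obtaining $3(k-1)^2\alpha^6|B|^2$ when $1<k<\infty$ and $3(k-1)^2\alpha^6|B|^2/k^2$ when $0<k<1$; the second term equals $\tfrac{1}{3}(\mathrm{tr}(\mathcal{T}))^2$, and the trace bound (\ref{eqn:traceps}) (applied in absolute value) produces the common contribution $\tfrac{1}{3}(2+1/k)^2(k-1)^2\alpha^6|B|^2$ in both regimes.

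For the off-diagonal sum, I would invoke the Kleinman--Senior inequality (\ref{eqn:psoffdiag}) with $i \neq j$ in the form $\mathcal{T}_{ij}^2 \le (\mathcal{T}_{ii}-c)(\mathcal{T}_{jj}-c)$, where $c := \tfrac{k-1}{k+1}|B|\alpha^3$, and maximise over the admissible range of the diagonal entries. When $1<k<\infty$, both $\mathcal{T}_{ii}$ and $c$ are positive with $\mathcal{T}_{ii} > c$, so the estimate $0 < \mathcal{T}_{ii} - c \le (k-1)\bigl(1-\tfrac{1}{k+1}\bigr)\alpha^3|B|$ follows directly from (\ref{eqn:diagelem}). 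When $0<k<1$, the relevant quantities are negative, and the triangle inequality $|\mathcal{T}_{ii}-c| \le |\mathcal{T}_{ii}|+|c|$ combined with $|\mathcal{T}_{ii}| \le |k-1|\alpha^3|B|/k$ and $|c|=|k-1|\alpha^3|B|/(k+1)$ delivers the looser, matching estimate $|\mathcal{T}_{ij}| \le |k-1|(1/k + 1/(k+1))\alpha^3|B|$. Multiplying by the factor $6$ from the six symmetric off-diagonal entries and adding to the diagonal contribution yields the stated bounds.

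The main issue to watch is the sign bookkeeping in the $0<k<1$ regime, where $\mathcal{T}(k)$ is negative definite and so is the shift $c$: one must verify that each inequality direction is preserved when passing to absolute values, and that the triangle inequality applied to $\sum_i \mathcal{T}_{ii}^2 - 3\bar{\mathcal{T}}^2$ is legitimate (which it is, since this quantity is non-negative and hence equal to its own absolute value). The resulting bounds are not sharp --- indeed, the $-3\bar{\mathcal{T}}^2$ term could simply be dropped for a tighter estimate --- but they package the inequality in a transparent additive form that separates the contributions of the trace, the diagonal entries, and the off-diagonal entries.
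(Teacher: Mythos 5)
Your proposal is correct and follows essentially the same route as the paper's own proof: split the Frobenius norm into diagonal and off-diagonal parts, bound the diagonal contribution by $\sum_i \mathcal{T}_{ii}^2 + \tfrac{1}{3}(\mathrm{tr}\,\mathcal{T})^2$ via the entrywise and trace bounds, and control the off-diagonal entries with the Kleinman--Senior inequality shifted by $c=\tfrac{k-1}{k+1}|B|\alpha^3$. The only difference is cosmetic --- you expand the diagonal sum globally rather than entrywise, and you spell out the sign bookkeeping for $0<k<1$, which the paper dismisses as ``analogous''.
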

%%%%%%%%%%%%%%%%%%%%%%%%%%%%%%%%%%%%%%%%%%%%%%%%%%%%%%%%%%%%%%%%%%%%%%%%%%%%%%%

%%%%%%%%%%%%%%%%%%%%%%%%%%%%%%%%%%%%%%%%%%%%%%%%%%%%%%%%%%%%%%%%%%%%%%%%%%%%%%%
\begin{proof}
We consider the case of $1 < k < \infty$, the proof for $0< k < 1$ is analogous. 
To show this we first fix $i=j$ then, by the triangular inequality and (\ref{eqn:diagelem}), (\ref{eqn:traceps}),
 it follows that
%%%%%%%%%%%%%%%%%%%%%%%%%%%%%%%%%%%%%%%%%%%%%%%%%%%%%%%%%%%%%%%%%%%%%%%%%%%%%%%
\begin{align}
\left ( {\mathcal T}_{ii} - \frac{1}{3} \hbox{tr}({\mathcal T} ) \right )^2  \le & {\mathcal T}_{ii}^2 + \frac{1}{9} \hbox{tr}({\mathcal T} ) ^2 
\le  \alpha^6 |B|^2 (k-1)^2 \left ( 1+ \frac{1}{9} \left (2 +\frac{1}{k} \right )^2  \right ).\label{eqn:fbnormdev1}
\end{align}
On the other hand, for $i\ne j$,  (\ref{eqn:psoffdiag}) implies that
\begin{align}
 {\mathcal T}_{ij}^2  = \left ( {\mathcal T}_{ij} - \frac{1}{3} \hbox{tr}({\mathcal T} )\delta_{ij}  \right )^2   \le & \left ( {\mathcal T}_{ii} - \alpha^3|B|\left (  \frac{k-1}{k+1} \right ) \right ) 
\left ( {\mathcal T}_{jj} - \alpha^3|B|\left (  \frac{k-1}{k+1} \right ) \right ) \nonumber \\
  \le &    \alpha^6|B|^2 (k-1)^2   \left ( 1- \frac{1}{k+1} \right )^2  \label{eqn:fbnormdev2},
\end{align}
by application of (\ref{eqn:diagelem}). Summing (\ref{eqn:fbnormdev1}) over the diagonal entries and (\ref{eqn:fbnormdev2}) over the off-diagonal entries completes the proof.

%%%%%%%%%%%%%%%%%%%%%%%%%%%%%%%%%%%%%%%%%%%%%%%%%%%%%%%%%%%%%%%%%%%%%%%%%%%%%%
\end{proof}
%%%%%%%%%%%%%%%%%%%%%%%%%%%%%%%%%%%%%%%%%%%%%%%%%%%%%%%%%%%%%%%%%%%%%%%%%%%%%%%

%%%%%%%%%%%%%%%%%%%%%%%%%%%%%%%%%%%%%%%%%%%%%%%%%%%%%%%%%%%%%%%%%%%%%%%%%%%%%%%
\section{Limiting low frequency and high conductivity response} \label{sect:lowhigh}
%%%%%%%%%%%%%%%%%%%%%%%%%%%%%%%%%%%%%%%%%%%%%%%%%%%%%%%%%%%%%%%%%%%%%%%%%%%%%%%
In this section we consider the low frequency and high conductivity limiting cases of $\widecheck{\widecheck{\mathcal M}}$. We also discuss the cases of high frequency and low conductivity. From the results presented in this section, we cannot necessarily deduce the behaviour of $({\vec H}_\alpha - {\vec H}_0) ({\vec x} )$  from (\ref{eqn:mainresultb}) since it is not permitted to substitute one asymptotic expansion, where $\alpha \to 0$ (and $\omega$ and $\sigma_*$ are fixed through $\nu$ ), in to another, where $\alpha $ is fixed and different limits on $\omega$ and $\sigma_*$ are taken. Still further, the eddy current model represents a quasi-static approximation to the Maxwell system and, in order that the modelling error is small, $\omega$ and $\sigma_*$ should be chosen according to shape dependent constants~\cite{schmidt2008}. Nonetheless, our theoretical results do have great practical relevence as the examples in Sections~\ref{sect:ellip}, \ref{sect:multfreq} and~\ref{sect:rem} will illustrate.
We first consider the low frequency response followed by the high conductivity case. 
%%%%%%%%%%%%%%%%%%%%%%%%%%%%%%%%%%%%%%%%%%%%%%%%%%%%%%%%%%%%%%%%%%%%%%%%%%%%%%%
\begin{theorem} \label{thm:lowfreq}
%%%%%%%%%%%%%%%%%%%%%%%%%%%%%%%%%%%%%%%%%%%%%%%%%%%%%%%%%%%%%%%%%%%%%%%%%%%%%%%
The low frequency limit for the coefficients of $\widecheck{\widecheck{\mathcal M}}$ can be described as 
%%%%%%%%%%%%%%%%%%%%%%%%%%%%%%%%%%%%%%%%%%%%%%%%%%%%%%%%%%%%%%%%%%%%%%%%%%%%%%%
\begin{equation}
\widecheck{\widecheck{\mathcal M}}_{ij} = {\mathcal N}_{ij}^0(\mu_r) +O(\omega)={\mathcal T}_{ij}(\mu_r) +O(\omega)
,
\end{equation}
%%%%%%%%%%%%%%%%%%%%%%%%%%%%%%%%%%%%%%%%%%%%%%%%%%%%%%%%%%%%%%%%%%%%%%%%%%%%%%%
as $\omega \to 0$ for an object $B$ with fixed conductivity $\sigma_*$ and relative permeability $\mu_r $. 
%%%%%%%%%%%%%%%%%%%%%%%%%%%%%%%%%%%%%%%%%%%%%%%%%%%%%%%%%%%%%%%%%%%%%%%%%%%%%%%
\end{theorem}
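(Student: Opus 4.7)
The plan is to exploit the alternative splitting of $\widecheck{\widecheck{\mathcal M}}$ established earlier, namely $\widecheck{\widecheck{\mathcal M}}_{ij} = {\mathcal N}^{\sigma_*}_{ij} + {\mathcal N}^{0}_{ij} - \widecheck{\mathcal C}^{\sigma_*}_{ij}$, and to show that the two terms carrying the superscript $\sigma_*$ are both $O(\omega)$ as $\omega \to 0$ while the middle term is already independent of $\omega$ and equal to ${\mathcal T}(\mu_r)$ by Lemma~\ref{lemma:reductiontosimp}.

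First I would observe that the transmission problem (\ref{eqn:transproblem0}) defining ${\vec \theta}_i^{(0)}$ involves neither $\omega$ nor $\sigma_*$, so ${\vec \theta}_i^{(0)}$ is fixed with respect to the low--frequency limit. The key technical step is then to show that ${\vec \theta}_i^{(1)} = O(\omega)$ in a suitable norm. Rewriting (\ref{eqn:transproblem1}) as
\begin{equation}
\nabla \times \mu^{-1} \nabla \times {\vec \theta}_i^{(1)} - \im \omega \sigma \alpha^2 \, {\vec \theta}_i^{(1)} = \im \omega \sigma \alpha^2 \, {\vec \theta}_i^{(0)}
\quad \text{in } B \cup B^c ,
\end{equation}
exposes the right--hand side as an explicit source of order $\omega$. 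Applying standard energy estimates for the curl--curl transmission problem on the natural variational space used in~\cite{ammarivolkov2013} (incorporating the divergence--free constraint in $B^c$, the tangential jump conditions, the decay at infinity, and the flux conditions recorded in the Remark following Figure~\ref{fig:torus} to pin down uniqueness), one obtains $\|{\vec \theta}_i^{(1)}\|_{H(\mathrm{curl})} \le C \omega \|{\vec \theta}_i^{(0)}\|_{L^2(B)}$ for some constant $C$ depending only on $\mu_r$, $\sigma_*$, $\alpha$ and $B$.

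With this bound in hand, the remaining steps are routine. The expression (\ref{eqn:definencon}) for ${\mathcal N}^{\sigma_*}_{ij}$ is a linear functional of $\nabla \times {\vec \theta}_j^{(1)}$ over the bounded set $B$, so Cauchy--Schwarz gives ${\mathcal N}^{\sigma_*}_{ij} = O(\omega)$. For $\widecheck{\mathcal C}^{\sigma_*}_{ij}$ I would use that $\nu = \alpha^2 \omega \mu_0 \sigma_* = O(\omega)$, so its prefactor $\im \nu \alpha^3 /4$ is itself $O(\omega)$; the integrand is $O(1)$ (from ${\vec \theta}_j^{(0)}$) plus $O(\omega)$ (from ${\vec \theta}_j^{(1)}$), yielding $\widecheck{\mathcal C}^{\sigma_*}_{ij} = O(\omega)$. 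Combining these two estimates with ${\mathcal N}^{0}_{ij} = {\mathcal T}_{ij}(\mu_r)$ from Lemma~\ref{lemma:reductiontosimp} gives the claim.

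The main obstacle is the energy estimate for ${\vec \theta}_i^{(1)}$: the $L^2$ coercivity provided by the lower--order term $-\im \omega \sigma \alpha^2$ degenerates as $\omega \to 0$, so the bound cannot be extracted by testing against ${\vec \theta}_i^{(1)}$ itself in a naive way. I would address this by splitting the test into its components inside $B$ (where the zero--order term is genuinely $O(\omega)$ and one must keep track of both real and imaginary parts to absorb the source) and in $B^c$ (where coercivity comes from the $H(\mathrm{curl}) \cap H(\mathrm{div}\,0)$ estimate in the exterior, combined with the decay condition), essentially mirroring the well--posedness analysis of the eddy current transmission problem in~\cite{ammarivolkov2013,ammaribuffa2000}.
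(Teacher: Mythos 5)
Your proposal is correct and follows essentially the same route as the paper: the splitting $\widecheck{\widecheck{\mathcal M}} = {\mathcal N}^{\sigma_*} + {\mathcal N}^{0} - \widecheck{\mathcal C}^{\sigma_*}$, the observation that ${\vec \theta}_i^{(0)}$ is $\omega$--independent, an $O(\omega)$ energy estimate for ${\vec \theta}_i^{(1)}$, Cauchy--Schwarz on the defining integrals, and Lemma~\ref{lemma:reductiontosimp} to identify ${\mathcal N}^0$ with ${\mathcal T}(\mu_r)$. The only point worth noting is that the ``main obstacle'' you flag is resolved more simply than you suggest: testing (\ref{eqn:weakform}) with $({\vec \theta}_i^{(1)})^*$, the purely imaginary zero--order term drops out of the real part, so $\|\nabla\times{\vec \theta}_i^{(1)}\|_{L^2(\Omega)}^2$ is bounded by the modulus of the sesquilinear form, and the needed $L^2$ control comes not from the degenerate zero--order term but from the Friedrichs--type inequality for divergence--free decaying fields (Corollary 3.51 of Monk), giving $\|{\vec \theta}_i^{(1)}\|_{L^2(\Omega)} \le C\|\nabla\times{\vec \theta}_i^{(1)}\|_{L^2(\Omega)}$ with $C$ independent of $\omega$.
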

%%%%%%%%%%%%%%%%%%%%%%%%%%%%%%%%%%%%%%%%%%%%%%%%%%%%%%%%%%%%%%%%%%%%%%%%%%%%%%%

\begin{proof}
%%%%%%%%%%%%%%%%%%%%%%%%%%%%%%%%%%%%%%%%%%%%%%%%%%%%%%%%%%%%%%%%%%%%%%%%%%%%%%%
We recall the splittiing ${\vec \theta}_i = {\vec \theta}_i^{(0)} + {\vec \theta}_i^{(1)}-\hat{\vec e}_i \times {\vec \xi}$ and that $ {\vec \theta}_i^{(0)}$ and ${\vec \theta}_i^{(1)}$ solve 
(\ref{eqn:transproblem0}) and (\ref{eqn:transproblem1}), respectively. The weak form of the transmission problem for ${\vec \theta}_i^{(1)}$ is: Find ${\vec \theta}_i^{(1)} \in X$ such that
%%%%%%%%%%%%%%%%%%%%%%%%%%%%%%%%%%%%%%%%%%%%%%%%%%%%%%%%%%%%%%%%%%%%%%%%%%%%%%%
\begin{align}
\int_{\Omega}  \nabla \times {\vec \theta}_i^{(1)} \cdot & \nabla \times {\vec w}  -  \im \omega \sigma \mu \alpha^2 {\vec \theta}_i^{(1)} \cdot {\vec w} \dif {\vec \xi} = 
 \im \omega \sigma_* \mu_* \alpha^2 \int_B  {\vec \theta}_i^{(0)} \cdot {\vec w} \dif {\vec \xi} \qquad  \forall {\vec w} \in X, \label{eqn:weakform}
\end{align}
%%%%%%%%%%%%%%%%%%%%%%%%%%%%%%%%%%%%%%%%%%%%%%%%%%%%%%%%%%%%%%%%%%%%%%%%%%%%%%%
where $X:= \{ {\vec u} \in {\vec H}(\hbox{curl}(\Omega)) : \nabla \cdot {\vec u} = 0 \hbox{ in $\Omega$} \}$ and $\Omega=B^c \cup B$. Note that we have extended the requirement that 
${\vec \theta}_i^{(1)}$ be divergence free from $B^c$ to $\Omega$, but this is an immediate consequence of (\ref{eqn:transproblem1}a). Choosing ${\vec w}= 
({\vec \theta}_i^{(1)})^*$ in (\ref{eqn:weakform}), where $*$ denotes the complex conjugate, it follows for fixed $\alpha$, $\mu_*$, $\sigma_*$ that
%%%%%%%%%%%%%%%%%%%%%%%%%%%%%%%%%%%%%%%%%%%%%%%%%%%%%%%%%%%%%%%%%%%%%%%%%%%%%%%
\begin{align}
\| \nabla \times  {\vec \theta}_i^{(1)}\|_{L^2(\Omega)}^2   =  \int_{\Omega} |\nabla \times  {\vec \theta}_i^{(1)}|^2\dif {\vec \xi}  
 & \le 
\left | \int_{\Omega} |\nabla \times  {\vec \theta}_i^{(1)}|^2 - \im \omega \sigma \mu \alpha^2 | {\vec \theta}_i^{(1)}|^2 \dif {\vec \xi} \right | \nonumber \\
& =  \omega \sigma_* \mu_* \alpha^2  \left | \int_B  {\vec \theta}_i^{(0)} \cdot ( {\vec \theta}_i^{(1)})^* \dif {\vec \xi} \right | \nonumber \\
& \le  C \omega \| {\vec \theta}_i^{(0)} \|_{L^2(B)} \| {\vec \theta}_i^{(1)} \|_{L^2(B)}  \le  C \omega   \| {\vec \theta}_i^{(1)} \|_{L^2(\Omega)}, \nonumber
\end{align}
%%%%%%%%%%%%%%%%%%%%%%%%%%%%%%%%%%%%%%%%%%%%%%%%%%%%%%%%%%%%%%%%%%%%%%%%%%%%%%%
where $C$ is a generic constant independent of $\omega$ and ${\vec \theta}_i^{(1)}$ and the Cauchy-Schwartz inequality has been applied in the second to last step. Then, by 
using Corollary 3.51~\cite{monkbook} [pg72], and the far field decay of ${\vec \theta}_i^{(1)}$, we have $  \| {\vec \theta}_i^{(1)} \|_{L^2(\Omega)} \le C  \| \nabla \times {\vec \theta}_i^{(1)} \|_{L^2(\Omega)}$ so that 
%%%%%%%%%%%%%%%%%%%%%%%%%%%%%%%%%%%%%%%%%%%%%%%%%%%%%%%%%%%%%%%%%%%%%%%%%%%%%%%
\begin{equation}
\| \nabla \times  {\vec \theta}_i^{(1)}\|_{L^2(B)} \le  \| \nabla \times  {\vec \theta}_i^{(1)}\|_{L^2(\Omega)} \le C \omega.
\end{equation}
%%%%%%%%%%%%%%%%%%%%%%%%%%%%%%%%%%%%%%%%%%%%%%%%%%%%%%%%%%%%%%%%%%%%%%%%%%%%%%%
We use this result and the Cauchy-Schwartz inequality to establish the following
%%%%%%%%%%%%%%%%%%%%%%%%%%%%%%%%%%%%%%%%%%%%%%%%%%%%%%%%%%%%%%%%%%%%%%%%%%%%%%%
\begin{align}
| \widecheck{{\mathcal C}}_{ij}^{\sigma_*} | & = \frac{ \nu \alpha^3 }{4}  \left | \hat{\vec e}_i \cdot \int_B {\vec \xi} \times ({\vec \theta}_j^{(0)} + {\vec \theta}_j^{(1)}  ) \dif {\vec \xi} \right | 
\nonumber \\
& \le
 C\omega \left (  \left |  \int_B {\vec \theta}_j^{(0)} \cdot  {\vec \xi} \times\hat{\vec e}_i    \dif {\vec \xi} \right |  +
 \left |  \int_B  {\vec \theta}_j^{(1)} \cdot  {\vec \xi} \times\hat{\vec e}_i  \dif {\vec \xi} \right |  \right ) \nonumber\\
&  \le  C\omega \left (  \| {\vec \theta}_j^{(0)} \|_{L^2(B)}+ \| {\vec \theta}_j^{(1)} \|_{L^2(B)} \right ) 
  \le  C\omega \left (  1+ \| {\vec \theta}_j^{(1)} \|_{L^2(B)} \right ), \label{eqn:bdCsigma}
\end{align}
\begin{align}
| {\mathcal N}_{ij}^{\sigma_*}|  =\left |  \frac{\alpha^3}{2} \left ( 1- \frac{\mu_0}{\mu_*} \right ) \int_B \left (
    \hat{\vec e}_i \cdot  \nabla \times {\vec \theta}_j^{(1)} \right ) \dif {\vec \xi} \right |    
    \le C \| \nabla \times {\vec \theta}_j^{(1)} \|_{L^2(B)} \le C\omega , \label{eqn:bdNsigma}
    \end{align}
\begin{align}
| {\mathcal N}_{ij}^{0}|  =\left |  \frac{\alpha^3}{2} \left ( 1- \frac{\mu_0}{\mu_*} \right ) \int_B \left (
    \hat{\vec e}_i \cdot  \nabla \times {\vec \theta}_j^{(0)} \right ) \dif {\vec \xi} \right |     
     \le C \| \nabla \times {\vec \theta}_j^{(0)} \|_{L^2(B)} \le C \label{eqn:bdN0} .
    \end{align}
%%%%%%%%%%%%%%%%%%%%%%%%%%%%%%%%%%%%%%%%%%%%%%%%%%%%%%%%%%%%%%%%%%%%%%%%%%%%%%%
 Combining (\ref{eqn:bdCsigma}), (\ref{eqn:bdNsigma}) and (\ref{eqn:bdN0}) and using the decomposition $\widecheck{\widecheck{\mathcal M}}_{ij} = - \widecheck{{\mathcal C}}_{ij}^{\sigma_*} +
 {\mathcal N}_{ij}^{\sigma_*}+ {\mathcal N}_{ij}^0 $ the desired result immediately follows. The reduction to ${\mathcal T}_{ij}$ follows immediately from Lemma~\ref{lemma:reductiontosimp}.
%%%%%%%%%%%%%%%%%%%%%%%%%%%%%%%%%%%%%%%%%%%%%%%%%%%%%%%%%%%%%%%%%%%%%%%%%%%%%%%
 \end{proof}
%%%%%%%%%%%%%%%%%%%%%%%%%%%%%%%%%%%%%%%%%%%%%%%%%%%%%%%%%%%%%%%%%%%%%%%%%%%%%%%

\begin{remark} \label{remark:lowcond}
By following analogous steps one can also establish that $\widecheck{\widecheck{\mathcal M}}_{ij} = T(\mu_r) +O(\sigma_*) $ as $\sigma_* \to 0$ for an object $B$ with fixed relative permeability $\mu_r$ and frequency $\omega$. However, further to the comments at the beginning of this section, one needs to careful with the applicability of such a result.
\end{remark}

%%%%%%%%%%%%%%%%%%%%%%%%%%%%%%%%%%%%%%%%%%%%%%%%%%%%%%%%%%%%%%%%%%%%%%%%%%%%%%%
\begin{theorem} \label{thm:highfreq}
%%%%%%%%%%%%%%%%%%%%%%%%%%%%%%%%%%%%%%%%%%%%%%%%%%%%%%%%%%%%%%%%%%%%%%%%%%%%%%%
The high conductivity limit of the coefficients of $\widecheck{\widecheck{\mathcal M}}$ can be described as 
%%%%%%%%%%%%%%%%%%%%%%%%%%%%%%%%%%%%%%%%%%%%%%%%%%%%%%%%%%%%%%%%%%%%%%%%%%%%%%%
\begin{equation}
\widecheck{\widecheck{\mathcal M}}_{ij} = {\mathcal T}_{ij}(0) +O\left ( \frac{1}{\sqrt{\sigma_*}} \right ) \label{eqn:highcond} ,
\end{equation}
%%%%%%%%%%%%%%%%%%%%%%%%%%%%%%%%%%%%%%%%%%%%%%%%%%%%%%%%%%%%%%%%%%%%%%%%%%%%%%%
as $\sigma_* \to \infty$ for a object $B$, with $\beta_1(B)=\beta_1(B^c)=0$, fixed frequency $\omega$ and relative permeability $\mu_r $. Specifically,
\begin{equation}
 {\mathcal T}_{ij}(0)= \alpha^3 \left ( |B| \delta_{ij} - \int_\Gamma \hat{\vec n}^- \cdot \hat{\vec e}_i \psi_j \dif {\vec \xi} \right ) \label{eqn:pstensor0} ,
 \end{equation}
and $\psi_j$ solves 
 \begin{subequations}
 \begin{align}
 \nabla^2 \psi_j & = 0 && \hbox{in $B^c$} , \\
 \hat{\vec n} \cdot \nabla \psi_j & =   \hat{\vec n} \cdot \nabla \xi_j && \hbox{on $\Gamma$} ,\\
  \psi_j & \to 0  && \hbox{as $|{\vec \xi}| \to \infty$} .
 \end{align} \label{eqn:pstensor0tran}
 \end{subequations}
%%%%%%%%%%%%%%%%%%%%%%%%%%%%%%%%%%%%%%%%%%%%%%%%%%%%%%%%%%%%%%%%%%%%%%%%%%%%%%%
\end{theorem}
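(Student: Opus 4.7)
The underlying picture is the classical skin effect: as $\sigma_*\to\infty$ the skin depth $s=\sqrt{2/(\omega\mu_0\sigma_*)}$ shrinks to zero and the time-harmonic field is expelled from $B$, so the object responds exactly as its magnetostatic counterpart of relative permeability $k=0$. Accordingly, I would organise the proof around the splitting of Lemma~3.1, $\widecheck{\widecheck{\mathcal M}}_{ij}={\mathcal N}^{\sigma_*}_{ij}-\widecheck{\mathcal C}^{\sigma_*}_{ij}+{\mathcal T}_{ij}(\mu_r)$, since only the pieces involving ${\vec \theta}_i^{(1)}$ carry $\sigma_*$-dependence. The ${\mathcal T}(\mu_r)$ summand is inert and the combination ${\mathcal N}^{\sigma_*}-\widecheck{\mathcal C}^{\sigma_*}$ must converge to ${\mathcal T}(0)-{\mathcal T}(\mu_r)$ with remainder $O(1/\sqrt{\sigma_*})$.

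The key ingredient is an energy estimate. A direct calculation, adding (\ref{eqn:transproblem0}a) to (\ref{eqn:transproblem1}a), shows that ${\vec \Theta}_i:={\vec \theta}_i^{(0)}+{\vec \theta}_i^{(1)}={\vec \theta}_i+\hat{\vec e}_i\times{\vec \xi}$ satisfies the source-free eddy current equation $\nabla\times\mu^{-1}\nabla\times{\vec \Theta}_i-\im\omega\sigma\alpha^2{\vec \Theta}_i={\vec 0}$ in $B\cup B^c$ with homogeneous interface conditions and non-trivial far-field behaviour ${\vec \Theta}_i-\hat{\vec e}_i\times{\vec \xi}=O(|{\vec \xi}|^{-1})$. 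Subtracting a smooth cut-off of this far-field tail to produce a decaying test function and then mimicking the imaginary-part argument from the proof of Theorem~4.1, I would obtain $\|{\vec \Theta}_i\|_{L^2(B)}=O(1/\sqrt{\sigma_*})$, which is the quantitative form of the skin-depth scaling. In $B^c$ the equation for ${\vec \theta}_i^{(1)}$ is already $\sigma_*$-independent, so one obtains a well-defined limit ${\vec \theta}_i^{(1,\infty)}$ whose interior boundary data is $-\hat{\vec n}\times{\vec \theta}_i^{(0)}|_-$ and whose interior extension is ${\vec \theta}_i^{(1,\infty)}=-{\vec \theta}_i^{(0)}$ in $B$.

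To identify this limit with ${\mathcal T}(0)$ I follow the strategy of Lemma~3.2. Set ${\vec w}_i:=\nabla\times({\vec \theta}_i^{(1,\infty)}+{\vec \theta}_i^{(0)})$; this vanishes in $B$ (by the interior limit) and is curl-free and divergence-free in $B^c$. Under the topological hypothesis $\beta_1(B)=\beta_1(B^c)=0$, the cohomological argument in the proof of Lemma~3.2 forces the harmonic-field component to vanish and permits one to write ${\vec w}_i=\nabla\vartheta_i$ globally. Matching normal traces on $\Gamma$, together with the representation of $\nabla\times{\vec \theta}_i^{(0)}$ supplied by (\ref{eqn:firstordersys}), reduces $\vartheta_i$, up to affine transformation by $\xi_j$, to the scalar potential $\psi_j$ of the Neumann problem (\ref{eqn:pstensor0tran}). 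Substituting back into ${\mathcal N}^{\sigma_*}-\widecheck{\mathcal C}^{\sigma_*}+{\mathcal T}(\mu_r)$ and integrating by parts then yields (\ref{eqn:pstensor0}) with the $\mu_r$-dependent contributions cancelling.

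I expect the main obstacle to be controlling $\widecheck{\mathcal C}^{\sigma_*}$: since $\nu=\omega\mu_0\sigma_*\alpha^2$ grows linearly in $\sigma_*$ while its integrand ${\vec \theta}_j^{(0)}+{\vec \theta}_j^{(1)}={\vec \Theta}_j$ is only $O(1/\sqrt{\sigma_*})$ pointwise in the bulk of $B$, a naive product bound would diverge like $\sqrt{\sigma_*}$ rather than converge. The remedy is to replace $({\vec \theta}_j^{(0)}+{\vec \theta}_j^{(1)})$ using (\ref{eqn:transproblem1}a), expressing it as $(\im\omega\sigma_*\alpha^2)^{-1}\nabla\times\mu_*^{-1}\nabla\times{\vec \theta}_j^{(1)}$; the factor $\sigma_*$ then cancels explicitly and an integration by parts over $B$ converts $\widecheck{\mathcal C}^{\sigma_*}$ into an expression involving only $\nabla\times{\vec \theta}_j^{(1)}$ and a trace on $\Gamma$, both of which remain bounded as $\sigma_*\to\infty$ and can be passed to the limit together with the $O(1/\sqrt{\sigma_*})$ remainder. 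The simply-connected assumption is essential throughout, since without it $\vartheta_i$ would only be defined modulo non-trivial harmonic fields and the high-conductivity limit would acquire additional cohomological contributions reflecting persistent currents circulating around handles of $B$ or $B^c$, precisely the multiply-connected phenomenon the authors distinguish elsewhere in the paper.
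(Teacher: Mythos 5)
Your proposal has the right physical picture and several ingredients that genuinely overlap with the paper's argument (the topological hypothesis killing the harmonic fields, the exterior Neumann problem for $\psi_j$, and the observation that the naive bound on $\widecheck{\mathcal C}^{\sigma_*}$ diverges like $\sqrt{\sigma_*}$ unless one first uses the PDE and integrates by parts --- your last step essentially rederives the paper's Lemma~\ref{lemma:mcheckaltform}, which rewrites $\widecheck{\widecheck{\mathcal M}}_{ij}$ purely as interface integrals of ${\vec \theta}_j$ and $\nabla\times{\vec \theta}_j$). The organisation is different, however: you work with the splitting ${\vec \theta}_i={\vec \theta}_i^{(0)}+{\vec \theta}_i^{(1)}-\hat{\vec e}_i\times{\vec \xi}$ and try to pass to the limit in ${\vec \theta}_i^{(1)}$, whereas the paper introduces a new decomposition ${\vec \theta}_i={\vec \Delta}_i+\{{\vec \chi}_i \text{ in } B^c;\ {\vec \psi}_i \text{ in } B\}$ in which the exterior problem for ${\vec \chi}_i$ is \emph{exactly} $\sigma_*$-independent (it carries the perfect-conductor interface condition $\nabla\times{\vec \chi}_i\times\hat{\vec n}=-2\hat{\vec e}_i\times\hat{\vec n}$ and yields ${\mathcal T}(0)$ on the nose), while all of the $\sigma_*$-dependence is pushed into the corrector ${\vec \Delta}_i$, whose interface jump data is explicitly $\frac{1}{\nu\mu_0}\nabla\times{\vec \chi}_i\times\hat{\vec n}=O(1/\sigma_*)$.

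The genuine gap is the \emph{rate}. The theorem asserts $\widecheck{\widecheck{\mathcal M}}_{ij}={\mathcal T}_{ij}(0)+O(1/\sqrt{\sigma_*})$, and nothing in your sketch produces this quantitatively. After your integration by parts, $\widecheck{\mathcal C}^{\sigma_*}_{ij}$ and ${\mathcal N}^{\sigma_*}_{ij}$ are expressed through $\int_B\hat{\vec e}_i\cdot\nabla\times{\vec \theta}_j^{(1)}\,\dif{\vec \xi}$ and interface traces of ${\vec \theta}_j^{(1)}$ and $\mu^{-1}\nabla\times{\vec \theta}_j^{(1)}$ on $\Gamma$; to conclude you need these traces to converge to their limits at rate $O(1/\sqrt{\sigma_*})$. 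Your only quantitative input is the energy estimate $\|{\vec \Theta}_i\|_{L^2(B)}=O(1/\sqrt{\sigma_*})$, which does not control tangential traces (the natural trace estimate also requires $\nabla\times{\vec \Theta}_i$, which concentrates in the skin layer and decays much more slowly in $L^2(B)$), and the exterior limit ${\vec \theta}_i^{(1,\infty)}$ is defined through precisely these interior traces, so the argument is circular without further input. This is exactly what the paper's ${\vec \chi}_i/{\vec \Delta}_i$ decoupling buys: testing the weak form for ${\vec \Delta}_i$ with $({\vec \Delta}_i)^*$ gives $\|\nabla\times{\vec \Delta}_i\|_{L^2(\Omega)}\le C/\sigma_*$ and $\|{\vec \Delta}_i\|_{L^2(B)}\le C/\sigma_*^{3/2}$, after which the dangerous term $\nu\|{\vec \Delta}_j\|_{L^2(B)}$ in $\widecheck{\widecheck{\mathcal M}}^{\Delta}_{ij}$ is $\sigma_*\cdot\sigma_*^{-3/2}=\sigma_*^{-1/2}$ and the claimed rate follows. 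Any quantitative completion of your plan effectively forces you to construct such a corrector, so you should either adopt that decomposition or supply an independent quantitative trace-convergence argument; as written, your proof would establish at most the qualitative limit, not (\ref{eqn:highcond}).
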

%%%%%%%%%%%%%%%%%%%%%%%%%%%%%%%%%%%%%%%%%%%%%%%%%%%%%%%%%%%%%%%%%%%%%%%%%%%%%%%

Before proving this result we first consider the following intermediate lemma.

\begin{lemma} \label{lemma:mcheckaltform}
The coefficients of $\widecheck{\widecheck{\mathcal M}}$ can be expressed as
%%%%%%%%%%%%%%%%%%%%%%%%%%%%%%%%%%%%%%%%%%%%%%%%%%%%%%%%%%%%%%%%%%%%%%%%%%%%%%%
\begin{equation}
\widecheck{\widecheck{\mathcal M}}_{ij} = \frac{\alpha^3}{2} \int_\Gamma \hat{\vec e}_i \cdot {\vec \theta}_j \times \hat{\vec n}^+ |_+ \dif {\vec \xi} - \frac{\alpha^3}{4} \int_\Gamma \hat{\vec e}_i \times {\vec \xi} \cdot \hat{\vec n}^+ \times \nabla \times {\vec \theta}_j |_+ \dif {\vec \xi} \label{eqn:mcheckaltform}.
\end{equation}
\end{lemma}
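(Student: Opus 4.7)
The plan is to reduce the two volume integrals appearing in the definitions of ${\mathcal N}_{ij}$ and $\widecheck{\mathcal C}_{ij}$ to boundary integrals on $\Gamma$ by means of vector integration by parts and then to invoke the transmission conditions in (\ref{eqn:transproblemfull}) to express all traces on the exterior side $|_+$ and verify that the bulk term $\alpha^3(1-\mu_0/\mu_*)|B|\delta_{ij}$ cancels against a boundary contribution.

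First I would treat ${\mathcal N}_{ij}$. Using the identity $\nabla\cdot({\vec\theta}_j\times\hat{\vec e}_i)=\hat{\vec e}_i\cdot\nabla\times{\vec\theta}_j$ (valid because $\hat{\vec e}_i$ is constant), the divergence theorem on $B$ together with the continuity $[{\vec\theta}_j\times\hat{\vec n}]_\Gamma={\vec 0}$ yields
\[
\int_B \hat{\vec e}_i\cdot\nabla\times{\vec\theta}_j\,\dif{\vec\xi} = \int_\Gamma \hat{\vec e}_i\cdot({\vec\theta}_j\times\hat{\vec n}^+)|_+\,\dif{\vec\xi}.
\]
So ${\mathcal N}_{ij}$ becomes a volume term $\alpha^3(1-\mu_0/\mu_*)|B|\delta_{ij}$ plus a surface term of coefficient $\tfrac{\alpha^3}{2}(1-\mu_0/\mu_*)$ multiplying the above boundary integral.

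Next, for $\widecheck{\mathcal C}_{ij}$ I would use the PDE (\ref{eqn:transproblemfull}a) inside $B$, where $\mu=\mu_*$ and $\sigma=\sigma_*$, to replace $\im\omega\sigma_*\alpha^2({\vec\theta}_j+\hat{\vec e}_j\times{\vec\xi})$ by $\mu_*^{-1}\nabla\times\nabla\times{\vec\theta}_j$. Using $\nu=\alpha^2\omega\mu_0\sigma_*$, this recasts $\widecheck{\mathcal C}_{ij}$ as
\[
\widecheck{\mathcal C}_{ij} = -\frac{\mu_0\alpha^3}{4\mu_*}\int_B(\hat{\vec e}_i\times{\vec\xi})\cdot(\nabla\times\nabla\times{\vec\theta}_j)\,\dif{\vec\xi}.
\]
A second application of the vector integration-by-parts identity, combined with the elementary calculation $\nabla\times(\hat{\vec e}_i\times{\vec\xi})=2\hat{\vec e}_i$, converts this into (twice) the volume integral already handled above together with a surface integral involving the \emph{interior} trace $(\nabla\times{\vec\theta}_j)|_-$. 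When ${\mathcal N}_{ij}-\widecheck{\mathcal C}_{ij}$ is assembled, the coefficients $(1-\mu_0/\mu_*)$ and $\mu_0/\mu_*$ in front of $\int_\Gamma\hat{\vec e}_i\cdot({\vec\theta}_j\times\hat{\vec n}^+)|_+\dif{\vec\xi}$ add to produce exactly $\tfrac{\alpha^3}{2}$, recovering the first term of (\ref{eqn:mcheckaltform}).

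The main obstacle is then to convert the remaining surface integral, which carries the interior trace $(\nabla\times{\vec\theta}_j)|_-$ and the factor $\mu_0/\mu_*$, into one involving only the exterior trace with coefficient $\tfrac{\alpha^3}{4}$, and at the same time to eliminate the residual volume term $\alpha^3(1-\mu_0/\mu_*)|B|\delta_{ij}$. Both tasks are settled by a single use of the transmission condition (\ref{eqn:transproblemfull}d), which rearranges to
\[
\hat{\vec n}^+\times(\nabla\times{\vec\theta}_j)|_- = \frac{\mu_*}{\mu_0}\,\hat{\vec n}^+\times(\nabla\times{\vec\theta}_j)|_+ + 2\!\left(\frac{\mu_*}{\mu_0}-1\right)\hat{\vec n}^+\times\hat{\vec e}_j.
\]
The first piece supplies exactly the desired second term of (\ref{eqn:mcheckaltform}), while the leftover piece, proportional to $\hat{\vec n}^+\times\hat{\vec e}_j$, must annihilate the residual $|B|\delta_{ij}$ contribution. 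To close the argument I would verify this cancellation by expanding the double cross product via $(A\times B)\cdot(C\times D)=(A\cdot C)(B\cdot D)-(A\cdot D)(B\cdot C)$ and applying the divergence theorem to each resulting surface term, which gives the identity
\[
\int_\Gamma(\hat{\vec e}_i\times{\vec\xi})\cdot(\hat{\vec n}^+\times\hat{\vec e}_j)\,\dif{\vec\xi} = 2|B|\delta_{ij},
\]
completing the proof.
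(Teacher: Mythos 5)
Your proposal follows essentially the same route as the paper's proof: rewrite $\widecheck{\mathcal C}_{ij}$ via the interior PDE as a volume integral of $(\hat{\vec e}_i\times{\vec\xi})\cdot\nabla\times\nabla\times{\vec\theta}_j$, integrate by parts twice (using $\nabla\times(\hat{\vec e}_i\times{\vec\xi})=2\hat{\vec e}_i$), pass interior traces to exterior ones through the transmission conditions, and cancel the residual $\alpha^3(1-\mu_0/\mu_*)|B|\delta_{ij}$ term against the surface integral of $(\hat{\vec e}_i\times{\vec\xi})\cdot(\hat{\vec n}\times\hat{\vec e}_j)$, exactly as the paper does. The argument is correct and the key identities (including the $2|B|\delta_{ij}$ cancellation, with the sign fixed by your choice of $\hat{\vec n}^+$) match the paper's.
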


%%%%%%%%%%%%%%%%%%%%%%%%%%%%%%%%%%%%%%%%%%%%%%%%%%%%%%%%%%%%%%%%%%%%%%%%%%%%%%%
\begin{proof}
%%%%%%%%%%%%%%%%%%%%%%%%%%%%%%%%%%%%%%%%%%%%%%%%%%%%%%%%%%%%%%%%%%%%%%%%%%%%%%%
We begin by expressing $\widecheck{\mathcal C}_{ij }$ in an alternative form
%%%%%%%%%%%%%%%%%%%%%%%%%%%%%%%%%%%%%%%%%%%%%%%%%%%%%%%%%%%%%%%%%%%%%%%%%%%%%%%
\begin{align}
\widecheck{\mathcal C}_{ij }    =&  - \frac{\im \nu \alpha^3 }{4}\hat{\vec e}_i \cdot \int_B {\vec \xi} \times ({\vec \theta}_j + \hat{\vec e}_j \times {\vec \xi} ) \dif {\vec \xi} 
 =   - \frac{\alpha^3}{4 \mu_r} \int_B \nabla \times \nabla \times {\vec \theta}_j \cdot \hat{\vec e}_i \times {\vec \xi} \dif {\vec \xi} \nonumber \\
 = & - \frac{\alpha^3}{4} \int_\Gamma \hat{\vec e}_i \times {\vec \xi} \cdot \hat{\vec n}^- \times \mu_r^{-1} \nabla \times{\vec \theta}_j |_- \dif {\vec \xi}  
    + \frac{\alpha^3}{2\mu_r} \int_\Gamma \hat{\vec e}_i \cdot {\vec \theta}_j \times \hat{\vec n}^- |_- \dif {\vec \xi} \nonumber,
\end{align}
%%%%%%%%%%%%%%%%%%%%%%%%%%%%%%%%%%%%%%%%%%%%%%%%%%%%%%%%%%%%%%%%%%%%%%%%%%%%%%%
which follows from using ${\vec \theta}_j + \hat{\vec e}_j \times {\vec \xi} = \frac{1}{\im \nu \mu_r} \nabla \times \nabla \times {\vec \theta}_j$ in $B$ and performing integration by parts.  Then, by using the transmission conditions for ${\vec \theta}_j$,
%%%%%%%%%%%%%%%%%%%%%%%%%%%%%%%%%%%%%%%%%%%%%%%%%%%%%%%%%%%%%%%%%%%%%%%%%%%%%%%
\begin{align}
\widecheck{\mathcal C}_{ij }= & - \frac{\alpha^3}{4} \int_\Gamma \hat{\vec e}_i \times {\vec \xi} \cdot \left ( \hat{\vec n}^- \times \nabla \times{\vec \theta}_j |_+ 
 +2 ( 1-\mu_r^{-1}) \hat{\vec n}^- \times \hat{\vec e}_j \right ) \dif {\vec \xi}  
- \frac{\alpha^3}{2\mu_r} \int_\Gamma \hat{\vec e}_i \cdot {\vec \theta}_j \times \hat{\vec n}^+ |_+ \dif {\vec \xi}. \nonumber
\end{align}
%%%%%%%%%%%%%%%%%%%%%%%%%%%%%%%%%%%%%%%%%%%%%%%%%%%%%%%%%%%%%%%%%%%%%%%%%%%%%%%
On the other hand
%%%%%%%%%%%%%%%%%%%%%%%%%%%%%%%%%%%%%%%%%%%%%%%%%%%%%%%%%%%%%%%%%%%%%%%%%%%%%%%
\begin{align}
{\mathcal N}_{ij } & = \alpha^3 \left ( 1- \mu_r^{-1} \right ) \int_B \left (
\delta_{ij}  + \frac{1}{2}  \hat{\vec e}_i \cdot  \nabla \times {\vec \theta}_j \right ) \dif {\vec \xi} \nonumber \\
&=  \alpha^3 \left ( 1- \mu_r^{-1} \right )   \left (
|B| \delta_{ij}  + \frac{1}{2} \int  \hat{\vec e}_i \cdot   {\vec \theta}_j \times \hat{\vec n}^+ |_+\dif {\vec \xi}  \right ) , \nonumber
\end{align}
%%%%%%%%%%%%%%%%%%%%%%%%%%%%%%%%%%%%%%%%%%%%%%%%%%%%%%%%%%%%%%%%%%%%%%%%%%%%%%%
by integration by parts and application of a transmission condition. By realising that
\begin{equation}
\frac{\alpha^3}{2} (1 - \mu_r^{-1} )  \int_\Gamma \hat{\vec e}_i \times {\vec \xi} \cdot \hat{\vec n}^- \times \hat{\vec e}_j  \dif {\vec \xi} = - \alpha^3 |B| (1-\mu_r^{-1}) \delta_{ij}, \nonumber
\end{equation}
%%%%%%%%%%%%%%%%%%%%%%%%%%%%%%%%%%%%%%%%%%%%%%%%%%%%%%%%%%%%%%%%%%%%%%%%%%%%%%%
it follows
%%%%%%%%%%%%%%%%%%%%%%%%%%%%%%%%%%%%%%%%%%%%%%%%%%%%%%%%%%%%%%%%%%%%%%%%%%%%%%%
\begin{align}
\widecheck{\widecheck{\mathcal M}}_{ij} = & - \widecheck{\mathcal C}_{ij }+ {\mathcal N}_{ij } \nonumber \\
=&  \frac{\alpha^3}{2} \int_\Gamma \hat{\vec e}_i \cdot {\vec \theta}_j \times \hat{\vec n}^+ |_+ \dif {\vec \xi} 
+ \frac{\alpha^3}{4} \int_\Gamma \hat{\vec e}_i \times {\vec \xi} \cdot \hat{\vec n}^- \times \nabla \times {\vec \theta}_j |_+ \dif {\vec \xi} \nonumber,
\end{align}
%%%%%%%%%%%%%%%%%%%%%%%%%%%%%%%%%%%%%%%%%%%%%%%%%%%%%%%%%%%%%%%%%%%%%%%%%%%%%%%
from which immediately follows the desired result.
%%%%%%%%%%%%%%%%%%%%%%%%%%%%%%%%%%%%%%%%%%%%%%%%%%%%%%%%%%%%%%%%%%%%%%%%%%%%%%%
\end{proof}
%%%%%%%%%%%%%%%%%%%%%%%%%%%%%%%%%%%%%%%%%%%%%%%%%%%%%%%%%%%%%%%%%%%%%%%%%%%%%%%

\begin{proof}[Proof of Theorem~\ref{thm:highfreq}]
Consider the decomposition ${\vec \theta}_i ={\vec \Delta}_i+  \left \{ \begin{array}{cc} {\vec \chi}_i & \hbox{in $B^c$} \\
 {\vec \psi}_i & \hbox{in $B$} \end{array} \right .$ where
\begin{subequations}
\begin{align} 
\nabla \times \mu_0^{-1} \nabla \times {\vec \chi}_i & ={\vec 0}   && \hbox{in $B^c$}  , \\
\nabla \cdot {\vec \chi}_i &= 0   && \hbox{in $B^c$} , \\
 \nabla  \times {\vec \chi}_i  \times \hat{\vec n}  & = -2 \hat{\vec e}_i \times \hat{\vec n}
 &&
\hbox{on $\Gamma$} ,\\ 
{\vec \chi}_i ( {\vec \xi}) & = O(|{\vec \xi} |^{-1}) && \hbox{as $|{\vec \xi}| \to \infty$} ,
\end{align} \label{eqn:transproblem3}
\end{subequations}
%%%%%%%%%%%%%%%%%%%%%%%%%%%%%%%%%%%%%%%%%%%%%%%%%%%%%%%%%%%%%%%%%%%%%%%%%%%%%%%
and
%%%%%%%%%%%%%%%%%%%%%%%%%%%%%%%%%%%%%%%%%%%%%%%%%%%%%%%%%%%%%%%%%%%%%%%%%%%%%%%
\begin{subequations}
\begin{align} 
\nabla \times \mu_*^{-1} \nabla \times {\vec \psi}_i 
-  \im \omega \sigma_* \alpha^2 {\vec \psi}_i   & = \im \omega \sigma_* \alpha^2  {\vec e}_i \times {\vec \xi}    && \hbox{in $B $}  , \\
\nabla \cdot {\vec \psi}_i  &= 0  && \hbox{in $B$} , \\
\nabla  \times {\vec \psi}_i  \times \hat{\vec n}  & =  -2 \hat{\vec e}_i \times \hat{\vec n} +\frac{\mu_*}{\nu \mu_0}  \nabla \times {\vec \chi}_i \times \hat{\vec n}|_+  &&
\hbox{on $\Gamma$} ,
\end{align} \label{eqn:transproblem4}
\end{subequations}
%%%%%%%%%%%%%%%%%%%%%%%%%%%%%%%%%%%%%%%%%%%%%%%%%%%%%%%%%%%%%%%%%%%%%%%%%%%%%%%
such that the problems inside and outside the object completely decouple in the case of high conductivity since $\nu=\alpha^2\omega\sigma_*\mu_0$. The transmission problem for ${\vec \Delta}_i$ is
%%%%%%%%%%{\vec \chi}_j %%%%%%%%%%%%%%%%%%%%%%%%%%%%%%%%%%%%%%%%%%%%%%%%%%%%%%%%%%%%%%%%%%%%%
\begin{subequations}
\begin{align} 
\nabla \times \mu^{-1}  \nabla \times {\vec \Delta}_i - \im \omega \sigma \alpha^2 {\vec \Delta}_i  & = {\vec 0}    && \hbox{in $B \cup B^c$}  , \\
\nabla \cdot {\vec \Delta}_i &= 0  && \hbox{in $B \cup B^c$} , \\
[ {\vec \Delta}_i \times \hat{\vec n} ]_\Gamma &= {\vec 0} &&
\hbox{on $\Gamma$} ,\\
 [ \mu^{-1} \nabla  \times   {\vec \Delta}_i \times \hat{\vec n}]_\Gamma  &=\frac{1}{\nu \mu_0}   \nabla \times {\vec \chi}_i \times \hat{\vec n} &&
\hbox{on $\Gamma$} ,\\
{\vec \Delta}_i ( {\vec \xi}) & = O(|{\vec \xi} |^{-1}) && \hbox{as $|{\vec \xi}| \to \infty$} .
\end{align} \label{eqn:transproblem5}
\end{subequations}
%%%%%%%%%%%%%%%%%%%%%%%%%%%%%%%%%%%%%%%%%%%%%%%%%%%%%%%%%%%%%%%%%%%%%%%%%%%%%%%
 In a similar manner to the proof of Lemma~\ref{lemma:reductiontosimp}, we introduce ${\vec u}_i := \nabla \times {\vec \chi}_i$ but, in this case,
 we have only ${\vec u}_i =  2 (s-1)  \nabla \vartheta_i +{\vec h}_i$ in $B^c$ rather than ${\mathbb R}^3$, for some parameter $s$ where $\text{dim}({\vec h}_i)=\beta_1(B^c)$,
 and thus we can no longer establish that ${\vec h}_i={\vec 0}$ independent of the topology of $B$ and $B^c$. Therefore, we restrict ourselves to the situation of an object such that $\beta_1(B)=\beta_1(B^c)=0$ and, in this case, ${\vec u}_i =  2 (s-1)  \nabla \vartheta_i$  where $\vartheta_i$ is the solution to
 \begin{subequations}
 \begin{align}
 \nabla^2 \vartheta_i & = 0 && \hbox{in $B^c$} , \\
(s-1)  \hat{\vec n} \cdot \nabla \vartheta_i & =   - \hat{\vec n} \cdot \nabla \xi_i && \hbox{on $\Gamma$} , \\
  \vartheta_i & \to 0  && \hbox{as $|{\vec \xi}| \to \infty$} .
 \end{align}
 \end{subequations}

   %%%%%%%%%%%%%%%%%%%%%%%%%%%%%%%%%%%%%%%%%%%%%%%%%%%%%%%%%%%%%%%%%%%%%%%%%%%%%%%
We use the form of $\widecheck{\widecheck{\mathcal M}}_{ij}$ established in Lemma~\ref{lemma:mcheckaltform} and first write $ \widecheck{\widecheck{\mathcal M}}_{ij}=\widecheck{\widecheck{\mathcal M}}_{ij}^{\chi} +\widecheck{\widecheck{\mathcal M}}_{ij}^{\Delta} $ where
\begin{align}
\widecheck{\widecheck{\mathcal M}}_{ij}^{\chi} = & \frac{\alpha^3}{2} \int_\Gamma \hat{\vec e}_i \cdot {\vec \chi}_j   \times \hat{\vec n}^+ |_+ \dif {\vec \xi} 
- \frac{\alpha^3}{4} \int_\Gamma \hat{\vec e}_i \times {\vec \xi} \cdot \hat{\vec n}^+ \times \nabla \times {\vec \chi}_j |_+ \dif {\vec \xi} , \label{eqn:mchecktheta2}\\
\widecheck{\widecheck{\mathcal M}}_{ij}^\Delta = & \frac{\alpha^3}{2} \int_\Gamma \hat{\vec e}_i \cdot {\vec \Delta}_j \times \hat{\vec n}^+ |_+ \dif {\vec \xi} 
- \frac{\alpha^3}{4} \int_\Gamma \hat{\vec e}_i \times {\vec \xi} \cdot \hat{\vec n}^+ \times \nabla \times {\vec \Delta}_j |_+ \dif {\vec \xi} \label{eqn:mcheckdelta} .
\end{align}
Considering integration by parts on the first term in  (\ref{eqn:mchecktheta2})  we establish that 
%%%%%%%%%%%%%%%%%%%%%%%%%%%%%%%%%%%%%%%%%%%%%%%%%%%%%%%%%%%%%%%%%%%%%%%%%%%%%%%
\begin{align}
\frac{\alpha^3}{2} \int_\Gamma \hat{\vec e}_i \cdot {\vec \chi}_j  \times \hat{\vec n}^+ |_+ \dif {\vec \xi} & =    - \frac{\alpha^3}{2} \int_{B^c} \hat{\vec e}_i \cdot \nabla \times {\vec \chi}_j  \dif {\vec \xi} 
= - {\alpha^3} (s-1) \int_{B^c} \hat{\vec e}_i \cdot \nabla \vartheta_j  \dif {\vec \xi}   \nonumber \\
& =  - {\alpha^3} (s-1) \left (  \int_{B^c} \hat{\vec e}_i \cdot \nabla \theta_j  \dif {\vec \xi} +  \int_{B^c}  \vartheta_j  \nabla^2 \xi_i\dif {\vec \xi} \right ) \nonumber\\ 
& =  - {\alpha^3} (s-1)    \int_\Gamma \hat{\vec n}^+  \cdot \nabla\xi_i \vartheta_j \dif {\vec \xi}  \label{eqn:mchecksimp1} ,
\end{align}
%%%%%%%%%%%%%%%%%%%%%%%%%%%%%%%%%%%%%%%%%%%%%%%%%%%%%%%%%%%%%%%%%%%%%%%%%%%%%%%
and, for the second term, by substituting $\nabla \times {\vec \chi}_i = 2(s-1) \nabla \vartheta_i $ and using the condition $(s-1)  \hat{\vec n} \cdot \nabla \vartheta_i  =   - \hat{\vec n} \cdot \nabla \xi_i$ on $\Gamma$
%%%%%%%%%%%%%%%%%%%%%%%%%%%%%%%%%%%%%%%%%%%%%%%%%%%%%%%%%%%%%%%%%%%%%%%%%%%%%%%
\begin{align}
- \frac{\alpha^3}{4} \int_\Gamma \hat{\vec e}_i \times {\vec \xi} \cdot \hat{\vec n}^+ \times \nabla \times {\vec \chi}_j |_+ \dif {\vec \xi}
 &= - \frac{\alpha^3}{2} (s-1) \int_\Gamma \hat{\vec n}^+\cdot ( \nabla \vartheta_j  \times ( \hat{\vec e}_i \times {\vec \xi}) ) \dif {\vec \xi} \nonumber \\
 & =  {\alpha^3} (s-1)  \int_{B^c}   \nabla \vartheta_j   \cdot \nabla \xi_i   \dif {\vec \xi} 
=  {\alpha^3} (s-1)  \int_{\Gamma} \hat{\vec n}^+ \cdot  \nabla \vartheta_j  \xi_i \dif {\vec \xi} \nonumber \\
&= - {\alpha^3}   \int_{\Gamma} \hat{\vec n}^+ \cdot  \nabla \xi_j  \xi_i \dif {\vec \xi} = \alpha^3 |B| \delta_{ij} \nonumber  .
\label{eqn:mchecksimp2}
   \end{align}
%%%%%%%%%%%%%%%%%%%%%%%%%%%%%%%%%%%%%%%%%%%%%%%%%%%%%%%%%%%%%%%%%%%%%%%%%%%%%%%
Adding (\ref{eqn:mchecksimp1}) and (\ref{eqn:mchecksimp2}) for $s=0$ gives
%%%%%%%%%%%%%%%%%%%%%%%%%%%%%%%%%%%%%%%%%%%%%%%%%%%%%%%%%%%%%%%%%%%%%%%%%%%%%%%
\begin{align}
\widecheck{\widecheck{\mathcal M}}_{ij}^\chi = &   {\alpha^3}  |B| \delta_{ij} + {\alpha^3}    \int_\Gamma \hat{\vec n}^+  \cdot \nabla\xi_i \vartheta_j \dif {\vec \xi} \nonumber \\
= &
{\alpha^3}  |B| \delta_{ij} - {\alpha^3}     \int_\Gamma \hat{\vec n}^-  \cdot \nabla\xi_i \vartheta_j \dif {\vec \xi} ={\mathcal T}_{ij}(0) ,
\end{align}
%%%%%%%%%%%%%%%%%%%%%%%%%%%%%%%%%%%%%%%%%%%%%%%%%%%%%%%%%%%%%%%%%%%%%%%%%%%%%%%
where ${\mathcal T}_{ij}(0)$ is as defined in (\ref{eqn:pstensor0}) and the problem for $\vartheta_i$ becomes that for $\psi_i$ stated in (\ref{eqn:pstensor0tran}). To bound $\widecheck{\widecheck{\mathcal M}}_{ij}^\Delta$ we consider the weak form of the transmission problem for ${\vec \Delta}_i$: Find ${\vec \Delta}_i \in X$ such that
\begin{align}
\int_{\Omega} \tilde{\mu}_r^{-1} \nabla \times {\vec \Delta}_i \cdot \nabla \times {\vec w}  - \im \omega \sigma \mu_0 \alpha^2 {\vec \Delta}_i  \cdot {\vec w} \dif {\vec \xi} 
= -
\frac{1}{\nu} \int_\Gamma \hat{\vec n} \times \nabla \times {\vec \chi}_i \cdot {\vec w} \dif {\vec \xi} \qquad \forall {\vec w} \in X \label{eqn:weakform2} .
\end{align}
If we choose ${\vec w}= ({\vec \Delta}_i)^*$ then it follows for fixed $\alpha$, $\mu_*$, $\omega$ that
\begin{align}
\| \nabla \times {\vec \Delta}_i \|_{L^2(\Omega)}^2 \le & C \left | 
\int_{\Omega} \tilde{\mu}_r^{-1} | \nabla \times {\vec \Delta}_i |^2   - \im \omega \sigma \mu_0 \alpha^2 | {\vec \Delta}_i |^2 \dif {\vec \xi} \right | \nonumber \\
\le & \frac{C}{\sigma_*} \left |
\int_\Gamma \hat{\vec n}^+ \times \nabla \times {\vec \chi}_i \cdot ({\vec \Delta}_i)^* \dif {\vec \xi} \right | \nonumber \\
\le & \frac{C}{\sigma_*} \left | \int_{B^c} \nabla \times {\vec \chi}_i \cdot (\nabla \times {\vec \Delta}_i)^* \dif {\vec \xi} \right | \nonumber\\
\le & \frac{C}{\sigma_*} \| \nabla \times {\vec \chi}_i \|_{L^2(B^c)}  \| \nabla \times {\vec \Delta}_i\|_{L^2(B^c)} 
\le  \frac{C}{\sigma_*}   \| \nabla \times {\vec \Delta}_i \|_{L^2(\Omega)}  ,\label{eqn:inequcurldelta}
\end{align}
where integration by parts and then the Cauchy Schwartz in equality has been applied and $C$ is independent of $\sigma_*$. It then follows that
\begin{equation}
\| \nabla \times {\vec \Delta}_i \|_{L^2(B)} \le \frac{C}{\sigma_*} \qquad \hbox{and} \qquad \| \nabla \times {\vec \Delta}_i  \|_{L^2(B^c)} \le \frac{C}{\sigma_*} \nonumber .
\end{equation}
In a similar way we establish that $\|  {\vec \Delta}_i \|_{L^2(B)}^2 \le \frac{C}{\sigma_*}   \| \nabla \times {\vec \Delta}_i \|_{L^2(\Omega)} $ so that
\begin{equation}
\|  {\vec \Delta}_i \|_{L^2(B)} \le \frac{C}{\sigma_*^{3/2}}. 
\end{equation} 
We then write (\ref{eqn:mcheckdelta}) as
\begin{align}
\widecheck{\widecheck{\mathcal M}}_{ij}^{\Delta} =& -\frac{\alpha^3}{2}\int_{B^c} \hat{\vec e}_i \cdot \nabla \times {\vec \Delta}_j \dif {\vec \xi} 
 -\frac{\alpha^3}{4\nu \mu_r} \int_{\Gamma} \hat{\vec e}_i \times {\vec \xi} \cdot \hat{\vec n}^+ \times \nabla \times {\vec \chi}_j  |_+ \dif {\vec \xi}   \nonumber \\
 &  -\frac{\alpha^3}{4\mu_r} \int_\Gamma  \hat{\vec e}_i \times {\vec \xi} \cdot \hat{\vec n}^+ \times \nabla \times {\vec \Delta}_j |_-  \dif {\vec \xi}\nonumber \\
 =& -\frac{\alpha^3}{2}\int_{B^c} \hat{\vec e}_i \cdot \nabla \times {\vec \Delta}_j \dif {\vec \xi}+ \frac{\alpha^3}{2\nu\mu_r} \int_{B^c} \hat{\vec e}_i  \cdot \nabla \times {\vec \chi}_j  \dif {\vec \xi}   \nonumber \\
 &  + \frac{\alpha^3}{4}  \int_B  \left ( \im\nu {\vec \Delta}_j \cdot \hat{\vec e}_i \times {\vec \xi} + 2\nabla \times {\vec \Delta}_j \cdot \hat{\vec e}_i \right )  \dif {\vec \xi} \nonumber .
  \end{align}
  Thus, by the Cauchy Schwartz inequality, it follows that 
\begin{align}
|\widecheck{\widecheck{\mathcal M}}_{ij}^{\chi}| \le & C +C \| \nabla \psi_j \|_{L^2(B^c)} \le C \ \nonumber \\ 
|\widecheck{\widecheck{\mathcal M}}_{ij}^{\Delta}| \le & C \|   \nabla \times {\vec \Delta}_j \|_{L^2(B^c)} + \frac{C}{\sigma_*} \|    \nabla \times {\vec \chi}_j  \|_{L^2(B^c)} + C \sigma_* \|     {\vec \Delta}_j \|_{L^2(B)} +C\|   \nabla \times {\vec \Delta}_j \|_{L^2(B)} \nonumber \\
\le &  \frac{C}{\sigma_*}  +C \sigma_* \|  {\vec \Delta}_j \|_{L^2(B )} 
\le     \frac{C}{\sigma_* }  +\frac{C}{\sqrt{\sigma_*}} \le \frac{C}{\sqrt{\sigma_*}} \nonumber ,
\end{align}
where $C$ does not depend on $\sigma_*$ and consequently $\widecheck{\widecheck{\mathcal M}}_{ij}= \widecheck{\widecheck{\mathcal M}}_{ij}^{\chi} + \widecheck{\widecheck{\mathcal M}}_{ij}^\Delta = \widecheck{\widecheck{\mathcal M}}_{ij}^{\chi} + O(1/\sqrt{\sigma_*}) $ as $\sigma_* \to \infty$ and consequently the result stated in (\ref{eqn:highcond}) directly follows.
\end{proof}
We summarise our results for the limiting cases of low frequencies and high conductivities in Figure~\ref{fig:famrank2tenfreq}.

\begin{remark} \label{remark:highfreq}
We could apply similar arguments and establish that $\widecheck{\widecheck{\mathcal M}}_{ij}= {\mathcal T}(0)_{ij} + O(1/\sqrt{\omega}) $ as $\omega \to \infty$ for an object with $\beta_1(B)=\beta_1(B^c)=0$ and fixed relative permeability $\mu_r$ and conductivity $\sigma_*$. However, similar to Remark~\ref{remark:lowcond} we need to be careful with the applicability of such a result. We will return to this point in Section~\ref{sect:ellip}.
\end{remark}

\begin{remark}
Noting that the coefficients of the tensors ${\mathcal T}(\mu_r)_{ij} $ and $ {\mathcal T}(0)_{ij} $ are real valued then, the behaviour obtained at low frequencies, 
$\widecheck{\widecheck{\mathcal M}}_{ij}= {\mathcal T}(\mu_r)_{ij} + O(\omega)$ as $\omega \to 0$, and at high frequencies, $\widecheck{\widecheck{\mathcal M}}_{ij}= {\mathcal T}(0)_{ij} + O( 1/ \sqrt{\omega})$ as $\omega \to \infty$, ties in with explanation in Landau and Lipshitz~\cite[p. 192]{landau}, who predict that the imaginary component of the magnetic polarizability tensor is proportional to $\omega$ as $\omega \to 0$ and is proportional $1/\sqrt{\omega}$ as $\omega \to \infty$ for a simply connected object. Furthermore, they argue that, as $\omega \to \infty$, the magnetic polarizability tensor becomes that of a super conductor. The coefficients $ {\mathcal T}(0)_{ij} $ are those of the P\'oyla-Szeg\"o tensor and are already known to be associated with the magnetic response of a simply connected perfect conductor~\cite{ledgerlionheart2012}. A super conductor being the case of a perfect conductor with zero magnetic field in the bulk of the conductor.
\end{remark}

\begin{figure}[h]
\begin{center}
\includegraphics[width=4in]{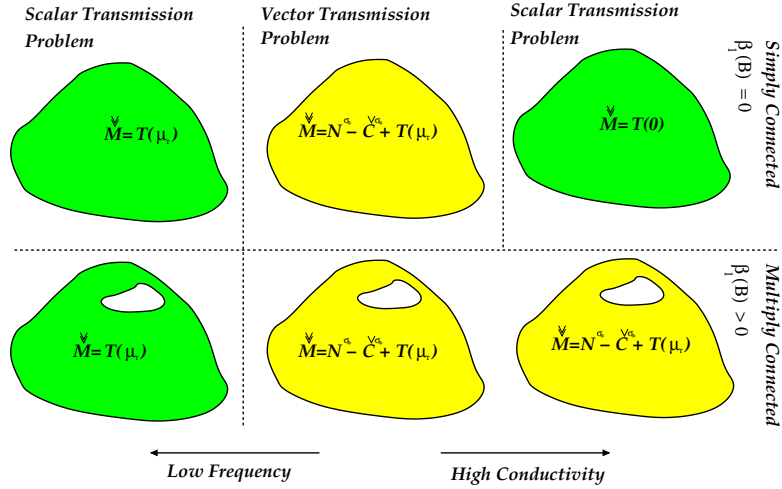}
\end{center}
\caption{Limiting cases of low frequency and high conductivity for simply and multiply connected objects.} \label{fig:famrank2tenfreq}
\end{figure}

%%%%%%%%%%%%%%%%%%%%%%%%%%%%%%%%%%%%%%%%%%%%%%%%%%%%%%%%%%%%%%%%%%%%%%%%%%%%%%%
\section{Elliptical objects}\label{sect:ellip}
%%%%%%%%%%%%%%%%%%%%%%%%%%%%%%%%%%%%%%%%%%%%%%%%%%%%%%%%%%%%%%%%%%%%%%%%%%%%%%%

For an ellipsoidal object $B_\alpha$ defined by
%%%%%%%%%%%%%%%%%%%%%%%%%%%%%%%%%%%%%%%%%%%%%%%%%%%%%%%%%%%%%%%%%%%%%%%%%%%%%%%
\begin{equation}
\frac{x_1^2}{a^2} + \frac{x_2^2}{b^2} + \frac{x_3^3}{c^2} = 1, \qquad 0 < c \le b \le a,
\nonumber
\end{equation}
%%%%%%%%%%%%%%%%%%%%%%%%%%%%%%%%%%%%%%%%%%%%%%%%%%%%%%%%%%%%%%%%%%%%%%%%%%%%%%%
and whose principal axes are chosen to coincide with the Cartesian coordinates axes, then an analytical solution is known for the P\'olya Szeg\"o tensor~\cite{ammarikangbook}
%%%%%%%%%%%%%%%%%%%%%%%%%%%%%%%%%%%%%%%%%%%%%%%%%%%%%%%%%%%%%%%%%%%%%%%%%%%%%%%
\begin{equation}
{\mathcal T} (k ) = \frac{4\pi a b c}{3}     \left (  \begin{array}{ccc}
\frac{(k -1 )}{ (1-A_1)+k A_1} & 0 & 0 \\
0 & \frac{(k -1 )}{ (1-A_2)+k A_2}  & 0 \\
0 & 0 & \frac{(k -1 )}{(1-A_3)+k A_3}   \end{array} \right ) \nonumber,
\end{equation}
%%%%%%%%%%%%%%%%%%%%%%%%%%%%%%%%%%%%%%%%%%%%%%%%%%%%%%%%%%%%%%%%%%%%%%%%%%%%%%%
where in the above the constants $A_1,A_2$ and $A_3$ are defined by
%%%%%%%%%%%%%%%%%%%%%%%%%%%%%%%%%%%%%%%%%%%%%%%%%%%%%%%%%%%%%%%%%%%%%%%%%%%%%%%
\begin{align*}
A_1& = \frac{bc}{a^2}\int_1^{+\infty} \frac{1}{ t^2 \sqrt{ t^2 -1 +\left ( \frac{b}{a} \right )^2} \sqrt{ t^2 -1 + \left ( \frac{c}{a} \right )^2 }} \dif t , \\
A_2& = \frac{bc}{a^2}\int_1^{+\infty} \frac{1}{ t^2 \left ( t^2 -1 +\left ( \frac{b}{a} \right )^2\right )^{3/2}  \sqrt{ t^2 -1 + \left ( \frac{c}{a} \right )^2 }} \dif t ,\\
A_3& = \frac{bc}{a^2}\int_1^{+\infty} \frac{1}{ t^2 \sqrt{ t^2 -1 +\left ( \frac{b}{a} \right )^2}  \left ( t^2 -1 + \left ( \frac{c}{a} \right )^2 \right )^{3/2}} \dif t ,
\end{align*}
%%%%%%%%%%%%%%%%%%%%%%%%%%%%%%%%%%%%%%%%%%%%%%%%%%%%%%%%%%%%%%%%%%%%%%%%%%%%%%%
or, by manipulation of the integrals, $A_1, A_2$ and $A_3$ can be expressed as 
%%%%%%%%%%%%%%%%%%%%%%%%%%%%%%%%%%%%%%%%%%%%%%%%%%%%%%%%%%%%%%%%%%%%%%%%%%%%%%%
\begin{equation}
A_1= \frac{abc}{2} d_1, \qquad A_2= \frac{abc}{2} d_2, \qquad A_3= \frac{abc}{2} d_3,
\end{equation}
where $d_j$, $j=1,2,3$ are the depolarizaing/demagnetising factors as defined in~\cite{milton}(pg 128). Alternatively $A_j$, $j=1,2,3$ can also be expressed in terms of elliptic integrals e.g.~\cite{osborn}. By considering the case where $k={\mu}_r \to 0$ the P\'olya Szeg\"o tensor ${\mathcal T}(0)$ describes the (magnetic) response for a perfectly conducting object
%%%%%%%%%%%%%%%%%%%%%%%%%%%%%%%%%%%%%%%%%%%%%%%%%%%%%%%%%%%%%%%%%%%%%%%%%%%%%%%
\begin{equation}
 {\mathcal T} (0 ) =  - \frac{4}{3} \pi a b c \left (  \begin{array}{ccc}
\frac{1}{  1-A_1} & 0 & 0 \\
0 & \frac{1}{1-  A_2}  & 0 \\
0 & 0 & \frac{1}{1- A_3}   \end{array} \right ) \nonumber .
\end{equation}
%%%%%%%%%%%%%%%%%%%%%%%%%%%%%%%%%%%%%%%%%%%%%%%%%%%%%%%%%%%%%%%%%%%%%%%%%%%%%%%
Furthermore, on consideration of the diagonalisation property of ${\mathcal T}( {\mu}_r)$ in (\ref{eqn:diagonal}), and the well known result that $({\vec x})_i \Lambda_{ij} ({\vec x})_j=1$ defines an ellipsoid aligned with the coordinate axes with semi principal axes lengths $1/\sqrt{\lambda_1}$, $1/\sqrt{\lambda_2}$ and $1/\sqrt{\lambda_3}$,  Khairuddin and Lionheart~\cite{taufiq2013} propose a strategy for determining an equivalent ellipsoid, which has the {\em same} P\'olya-Szeg\"o polarization tensor as the object under consideration.

Less is known for the case of the conducting (permeable) ellipsoid in the eddy current regime. An analytical solution for conducting (permeable)  prolate (and oblate) spheroids is available~\cite{aospheroid}, but, for numerical calculation, they require truncation of an otherwise infinitely sized linear system. An approximate solution approach for the conducting permeable spheroids~\cite{barrowes2004} and  ellipsoids~\cite{barrowes2008} has been proposed, but, is limited to the case where the objects have small skin depths. To the best of the authors' knowledge an analytical solution for the general ellipsoid  is not available. Thus an extension of the approach of  Khairuddin and Lionheart  to conducting ellipsoids is not immediate. 

In Fig.~\ref{fig:spheroidnonmag}, we show an illustration of the dependence of the diagonal coefficients of $ \widecheck{\widecheck{\mathcal M}} $ with frequency  for a prolate spheroid defined by $a=0.02\hbox{m}$, $b=c=0.01\hbox{m}$, $\sigma_*=1.5 \times10^7 \hbox{S/m}$ and $\mu_*=\mu_0$. We include comparisons between  the analytical solution of~\cite{aospheroid}~\footnote{The analytical solution for the magnetic polarizability tensor in this case is not a closed form expression but, instead requires computational truncation of an otherwise infinite system}, the converged numerical solution obtained by performing a frequency sweep using the approach~\cite{ledgerlionheart2014}, based on an unstructured grid of $14\,579$ tetrahedra and uniform $p=3$ elements, and ${\mathcal T} (0 )$ as the limit of the quasi-static approximation. 
  For the purpose of numerical calculation we truncate the computational domain at $100 |B|$ and employ the NETGEN mesh generator for this and other meshes used generated in this work~\cite{netgen}.
Following Remark~\ref{remark:highfreq}, we compute the shape dependent constants according to~\cite{schmidt2008}, and establish that quasi--static model remains valid for this spheroid provided that $f =\omega / ( 2 \pi) \ll 5.4 \text{Mhz}$.

The real part of the diagonal coefficients of  $ \widecheck{\widecheck{\mathcal M}} $ resemble a sigmoid function whereas the imaginary part of the coefficients have a peak value around $2\hbox{kHz}$ and vanish for low and high frequencies. The coefficients for the real part of $\widecheck{\widecheck{\mathcal M}} $ tend to zero at low frequencies and tend to those of $ {\mathcal T} (0 )$ at high frequencies. Thus agreeing with the theoretical predictions in Section~\ref{sect:lowhigh}. The agreement between the numerical prediction and the analytical solution is excellent.
%%%%%%%%%%%%%%%%%%%%%%%%%%%%%%%%%%%%%%%%%%%%%%%%%%%%%%%%%%%%%%%%%%%%%%%%%%%%%%%
\begin{figure*}
\begin{center}
$\begin{array}{cc}
\includegraphics[width=3in]{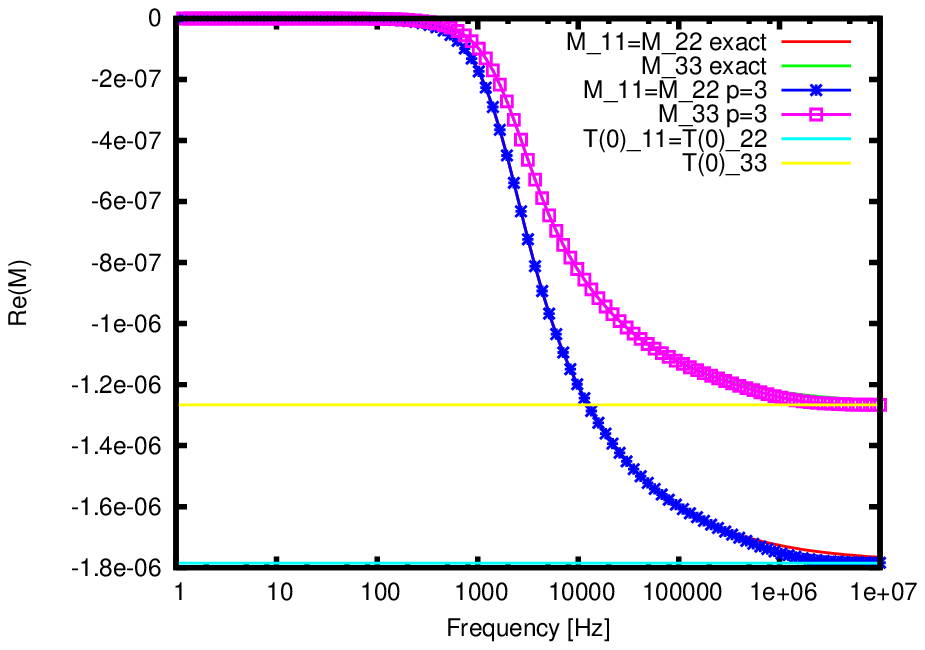} &
\includegraphics[width=3in]{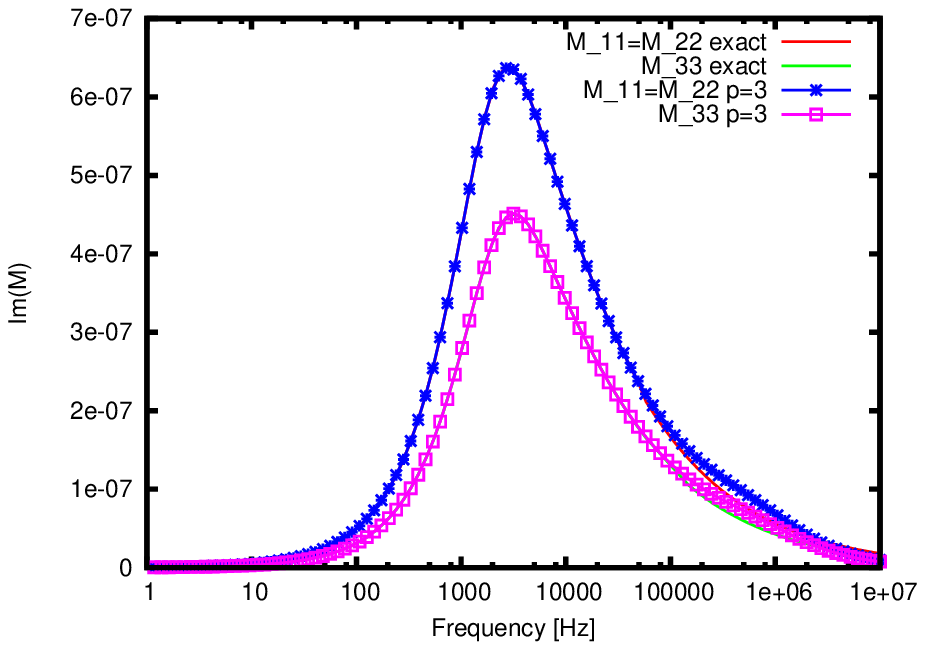} \\
\text{Re}(\widecheck{\widecheck{\mathcal M}}) &  \text{Im}(\widecheck{\widecheck{\mathcal M}})
\end{array}$
\end{center}
\caption{Frequency response of a conducting spheroid with $a=0.02\hbox{m}$, $b=c=0.01\hbox{m}$ with $\sigma_*=1.5\times10^7 \hbox{S/m}$,  $\mu_*=\mu_0$  showing the diagonal coefficients of  $ \widecheck{\widecheck{\mathcal M}} $ based on the analytical solution in \cite{aospheroid}, the numerically computed coefficients using the approach of ~\cite{ledgerlionheart2014} with an unstructured grid of $14\,579$ tetrahedra and uniform $p=3$ elements and the limiting ${\mathcal T}(0)$ coefficients.}
\label{fig:spheroidnonmag}
\end{figure*}
%%%%%%%%%%%%%%%%%%%%%%%%%%%%%%%%%%%%%%%%%%%%%%%%%%%%%%%%%%%%%%%%%%%%%%%%%%%%%%%

Fig.~\ref{fig:spheroidmag} shows the corresponding dependence of the diagonal coefficients of $\widecheck{\widecheck{\mathcal M}} $ with frequency for the same sized spheroid considered in Fig.~\ref{fig:spheroidnonmag}, but now with $\sigma_*=1.5 \times10^7 \hbox{S/m}$ and $\mu_*= 1.5\mu_0$. 
Applying the results of~\cite{schmidt2008}
 we establish that quasi--static model remains valid for this case provided that $f =\omega / ( 2 \pi) \ll 3.6 \text{Mhz}$.

The imaginary parts of $\widecheck{\widecheck{\mathcal M}} $ vanish for low and high frequency limits and the coefficients of the real part tend to the coefficients of $ {\mathcal T} ({\mu}_r )$ and ${\mathcal T} (0)$, respectively, as expected. As with the case shown in Fig.~\ref{fig:spheroidnonmag}, the agreement between the numerically computed tensor coefficients using a mesh of $14\,579$ unstructured tetrahedra and uniform $p=3$ elements and the analytical solution is excellent.

%%%%%%%%%%%%%%%%%%%%%%%%%%%%%%%%%%%%%%%%%%%%%%%%%%%%%%%%%%%%%%%%%%%%%%%%%%%%%%%
\begin{figure*}
\begin{center}
$\begin{array}{cc}
\includegraphics[width=3in]{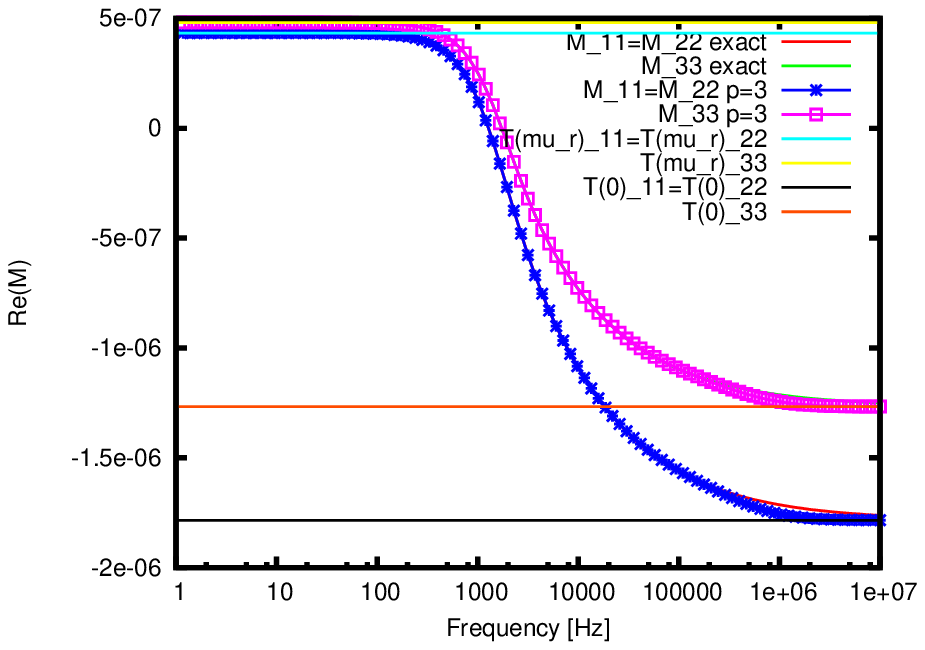} &
\includegraphics[width=3in]{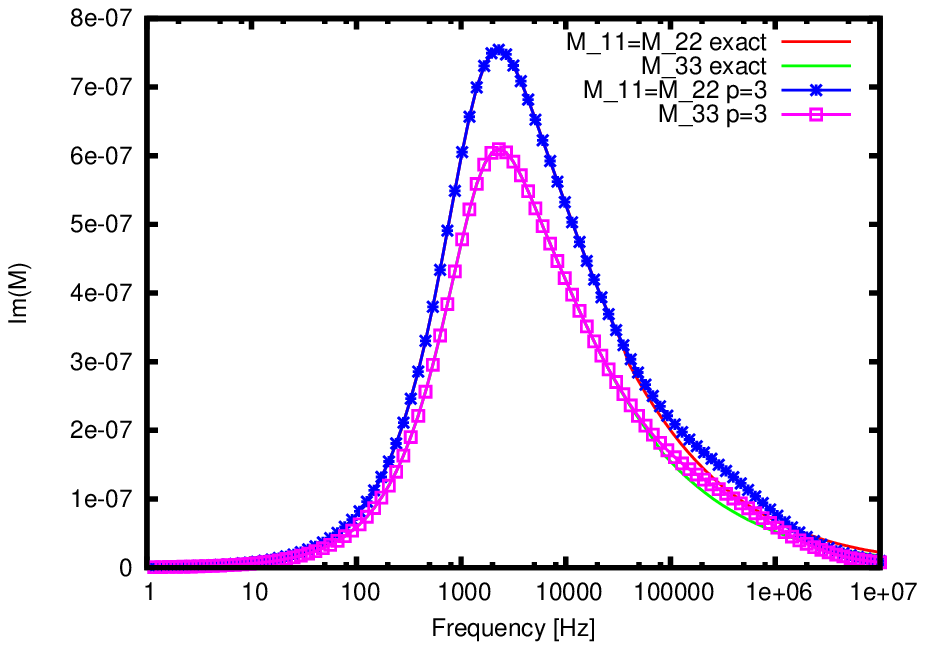} \\
\text{Re}(\widecheck{\widecheck{\mathcal M}}) &  \text{Im}(\widecheck{\widecheck{\mathcal M}})
\end{array}$
\end{center}
\caption{Frequency response of a conducting permeable spheroid with $a=0.02\hbox{m}$, $b=c=0.01\hbox{m}$ with $\sigma_*=1.5\times10^7 \hbox{S/m}$,  $\mu_*=1.5 \mu_0$  showing the diagonal coefficients  of  $\widecheck{\widecheck{\mathcal M}} $ based on the analytical solution in \cite{aospheroid}, the numerically computed coefficients using the approach of ~\cite{ledgerlionheart2014} with an unstructured grid of $14\,579$ tetrahedra and uniform $p=3$ elements and the limiting ${\mathcal T}(\mu_r)$ and ${\mathcal T}(0)$ coefficients.}
\label{fig:spheroidmag}
\end{figure*}
%%%%%%%%%%%%%%%%%%%%%%%%%%%%%%%%%%%%%%%%%%%%%%%%%%%%%%%%%%%%%%%%%%%%%%%%%%%%%%%

\section{Results for multiply connected objects} \label{sect:multfreq}

We first consider a solid torus with major and minor radii, $0.02\, \text{m}$ and $0.01\, \text{m}$, respectively,  and material properties $\sigma_*=5.96 \times10^7 \hbox{S/m}$ and $\mu_*=1.5\mu_0$. In order to investigate the frequency response, we perform a frequency sweep, using the approach of~\cite{ledgerlionheart2014},  and consider the converged non-zero coefficients of $\widecheck{\widecheck{\mathcal M}}$ obtained with $p=2$ on a mesh of $29,  882$ unstructured tetrahedra, which is generated in order to discretise the (unit sized) object and region between the object and a spherical outer boundary with radius $ 100|B|$. The results of the frequency sweep are shown in Figure~\ref{fig:trousmag} where we include, as a comparison, the non--zero coefficients of ${\mathcal T}(\mu_r)$ and ${\mathcal T}(0)$, which have also been computed numerically. 
Note that by computing the shape dependent constants according to~\cite{schmidt2008}, we establish that quasi--static model remains valid for this object provided that $f =\omega / ( 2 \pi) \ll 12.2 \text{Mhz}$.

Despite the fact that this object is multiply connected, with $\beta_0(B)=\beta_1(B)=1$,  $\beta_2(B)=0$, the low frequency coefficients of $\widecheck{\widecheck{\mathcal M}}$ still tend to those of ${\mathcal T}(\mu_r)$, as expected by Theorem~\ref{thm:lowfreq}. However, we do not expect the coefficients of $\widecheck{\widecheck{\mathcal M}}$ to tend to ${\mathcal T}(0)$ as the high limit of quasi-static model is approach, as discussed in Remark 
~\ref{remark:highfreq},
and our numerical experiments confirm that this is indeed a sufficient condition since $\widecheck{\widecheck{\mathcal M}}_{11} \nrightarrow {\mathcal T}(0)_{11}$, although, interestingly, $\widecheck{\widecheck{\mathcal M}}_{22}= \widecheck{\widecheck{\mathcal M}}_{33} \rightarrow {\mathcal T}(0)_{22}= {\mathcal T}(0)_{33}$ indicating a deeper result not covered by the earlier theory.

%%%%%%%%%%%%%%%%%%%%%%%%%%%%%%%%%%%%%%%%%%%%%%%%%%%%%%%%%%%%%%%%%%%%%%%%%%%%%%%
\begin{figure*}
\begin{center}
$\begin{array}{cc}
\includegraphics[width=3in]{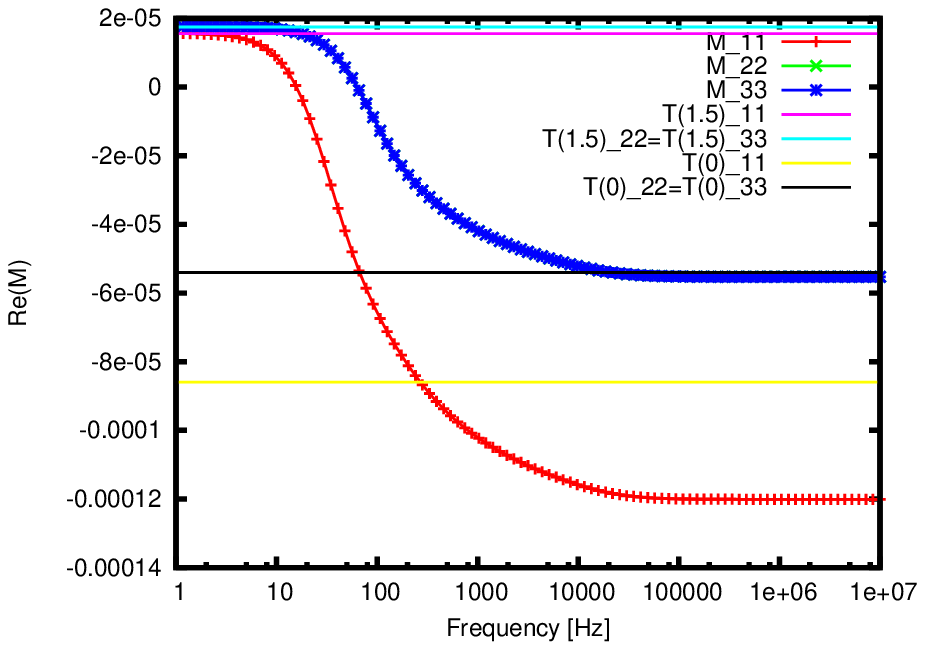} &
\includegraphics[width=3in]{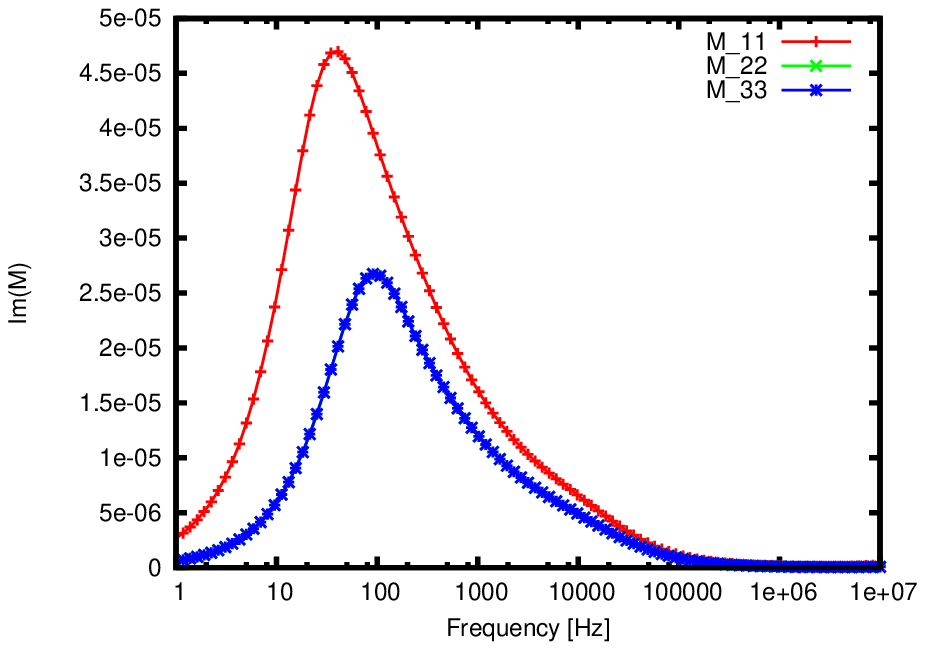} \\
\text{Re}(\widecheck{\widecheck{\mathcal M}}) &  \text{Im}(\widecheck{\widecheck{\mathcal M}})
\end{array}$
\end{center}
\caption{Frequency response of a conducting permeable torus with major and minor radii, $0.02\, \text{m}$ and $0.01\, \text{m}$, respectively,  and material properties $\sigma_*=5.96 \times10^7 \hbox{S/m}$ and $\mu_*=1.5\mu_0$  showing the coefficients of  $\widecheck{\widecheck{\mathcal M}} $ obtained  using the approach of~\cite{ledgerlionheart2014} with an unstructured grid of $29,  882$ tetrahedra and uniform $p=2$ elements and the limiting ${\mathcal T}(\mu_r)$ and ${\mathcal T}(0)$ coefficients.}
\label{fig:trousmag}
\end{figure*}
%%%%%%%%%%%%%%%%%%%%%%%%%%%%%%%%%%%%%%%%%%%%%%%%%%%%%%%%%%%%%%%%%%%%%%%%%%%%%%%

Secondly, we consider a sphere of radius $0.01\,\text{m}$ with a spherical void of $0.005\,\text{m}$ located centrally and the same material parameters as the above torus. We employ an unstructured mesh of $6,873$ tetrahedra with higher order geometry representation and present the converged results obtained for the frequency sweep of the non--zero diagonal coefficients of $\widecheck{\widecheck{\mathcal M}}$ obtained with $p=2$ elements. As in the case of the torus, a spherical outer boundary with radius $100|B|$ is used to truncate the computational domain. The results of the frequency sweep are shown in Figure~\ref{fig:sinsmag} where we include, as a comparison, the non--zero coefficients of ${\mathcal T}(\mu_r)$ and ${\mathcal T}(0)$, which have also been computed numerically. For this object, $\beta_0(B)=\beta_2(B)=1$ and $\beta_1(B)=0$ so that Theorem~\ref{thm:lowfreq} holds in the low frequency case and by Remark 
~\ref{remark:highfreq},
 $\widecheck{\widecheck{\mathcal M}}$ tends to ${\mathcal T}(0)$ at the high frequency, as illustrated in Figure~\ref{fig:sinsmag}. Note that the quasi--static model remains valid for this object provided that $f =\omega / ( 2 \pi) \ll 12.8 \text{Mhz}$.

%%%%%%%%%%%%%%%%%%%%%%%%%%%%%%%%%%%%%%%%%%%%%%%%%%%%%%%%%%%%%%%%%%%%%%%%%%%%%%%
\begin{figure*}
\begin{center}
$\begin{array}{cc}
\includegraphics[width=3in]{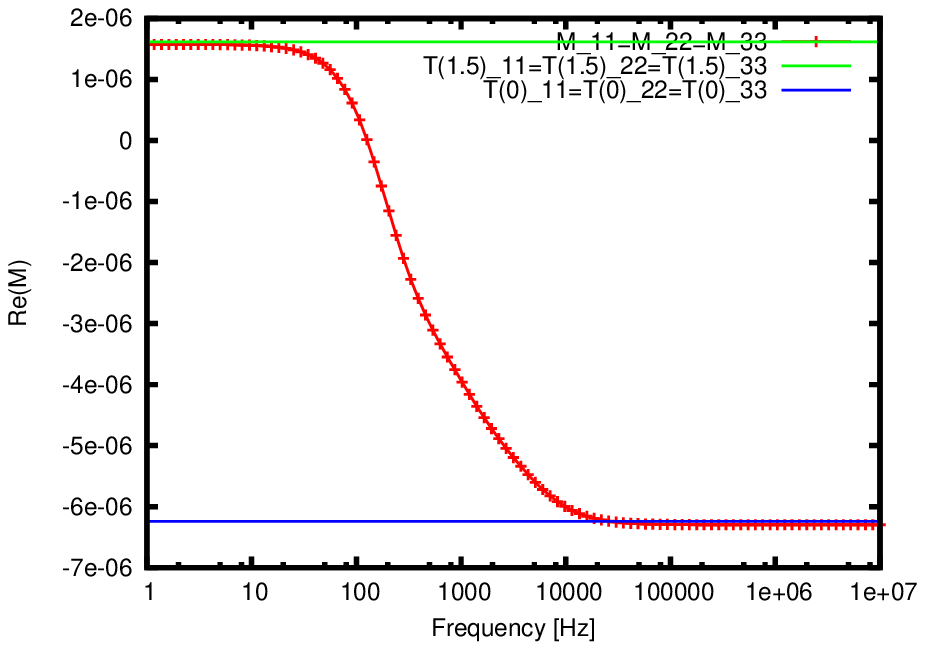} &
\includegraphics[width=3in]{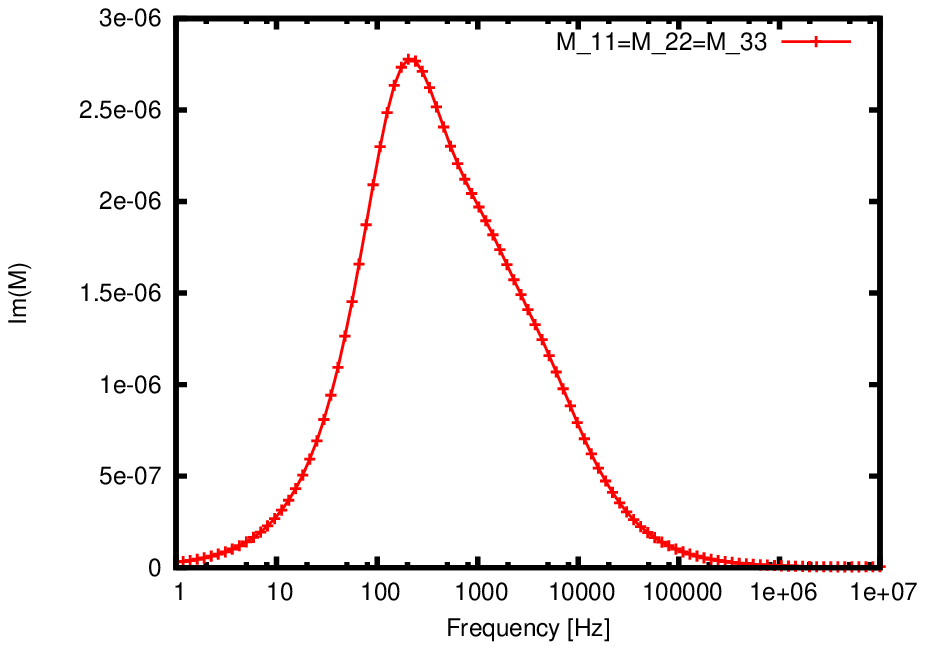} \\
\text{Re}(\widecheck{\widecheck{\mathcal M}}) &  \text{Im}(\widecheck{\widecheck{\mathcal M}})
\end{array}$
\end{center}
\caption{Frequency response of a conducting permeable sphere of raduis $0.01\, \text{m}$  with a centrally placed void of raduis $0.005\, \text{m}$, respectively,  and material properties $\sigma_*=5.96 \times10^7 \hbox{S/m}$ and $\mu_*=1.5\mu_0$  showing the coefficents of   $\widecheck{\widecheck{\mathcal M}} $ obtained  using the approach of~\cite{ledgerlionheart2014} with an unstructured grid of $6,873$ tetrahedra and uniform $p=2$ elements and the limiting ${\mathcal T}(\mu_r)$ and ${\mathcal T}(0)$ coefficients.}
\label{fig:sinsmag}
\end{figure*}
%%%%%%%%%%%%%%%%%%%%%%%%%%%%%%%%%%%%%%%%%%%%%%%%%%%%%%%%%%%%%%%%%%%%%%%%%%%%%%

\section{Results for the Remington rifle cartridge} \label{sect:rem}

Adopting the simplified geometry and the material parameters according to the description of the Remington rifle cartridge given in~\cite{rehimpeyton} a mesh of $23\,551$ unstructured tetrahedra is generated in order to discretise the (unit sized) conducting object  $B$ and the region between the object and a rectangular outer bounding box $(-1000,1000)^3$. 
The Remington rifle cartridge is positioned so that its length is aligned with the $\hat{\vec e}_3$ axis and the cylindrical cross-section lies in the $\hat{\vec e}_1$, $\hat{\vec e}_2$ plane, thus, due to the objects rotational and reflectional symmetries, $\widecheck{\widecheck{\mathcal M}}$ is diagonal~\cite{ledgerlionheart2014} with independent coefficients $ \widecheck{\widecheck{\mathcal M}}_{11}= \widecheck{\widecheck{\mathcal M}}_{22}$ and $ \widecheck{\widecheck{\mathcal M}}_{33}$.
 The convergence of the independent coefficients of the tensor obtained by employing uniform $p=0,1,2$ and $p=3$ elements in turn for a frequency sweep are shown in Fig.~\ref{fig:remingtonshell}. We observe that increasing $p$ yields convergence of the real and imaginary coefficients of $\widecheck{\widecheck{\mathcal M}}$ with the frequency response for $p=2$ and $p=3$ being practically indistinguishable from each other. The curves bear considerable similarity to the frequency response from a non--permeable conducting spheroid shown previously in Fig.~\ref{fig:spheroidnonmag}. Note that by computing the shape dependent constants according to~\cite{schmidt2008}, we establish that quasi--static model remains valid for this object provided that $f =\omega / ( 2 \pi) \ll 3.0 \text{Mhz}$.

%%%%%%%%%%%%%%%%%%%%%%%%%%%%%%%%%%%%%%%%%%%%%%%%%%%%%%%%%%%%%%%%%%%%%%%%%%%%%%%
\begin{figure*}
\begin{center}
$\begin{array}{cc}
\includegraphics[width=3in]{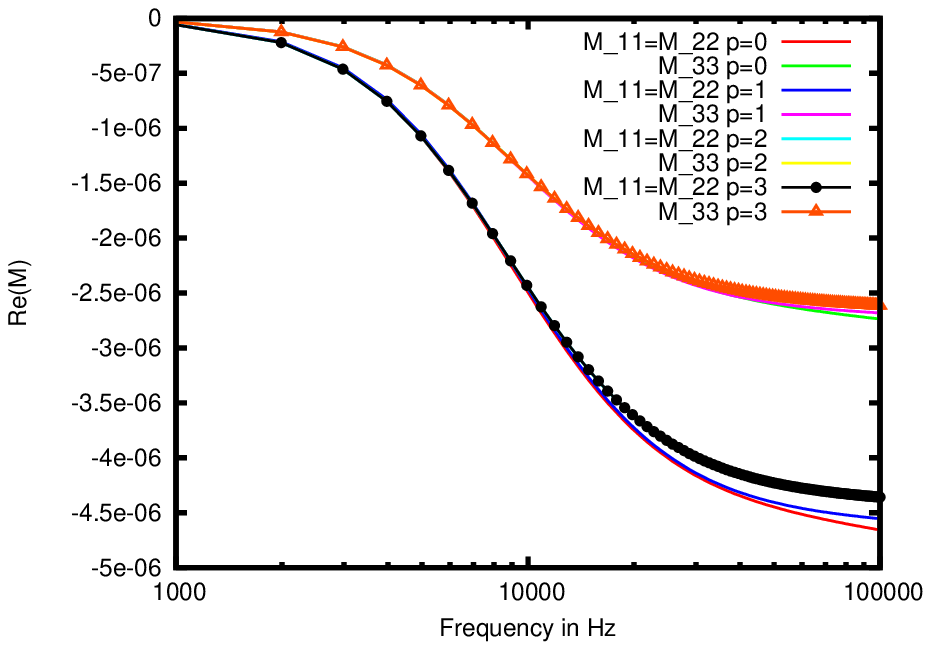} &
\includegraphics[width=3in]{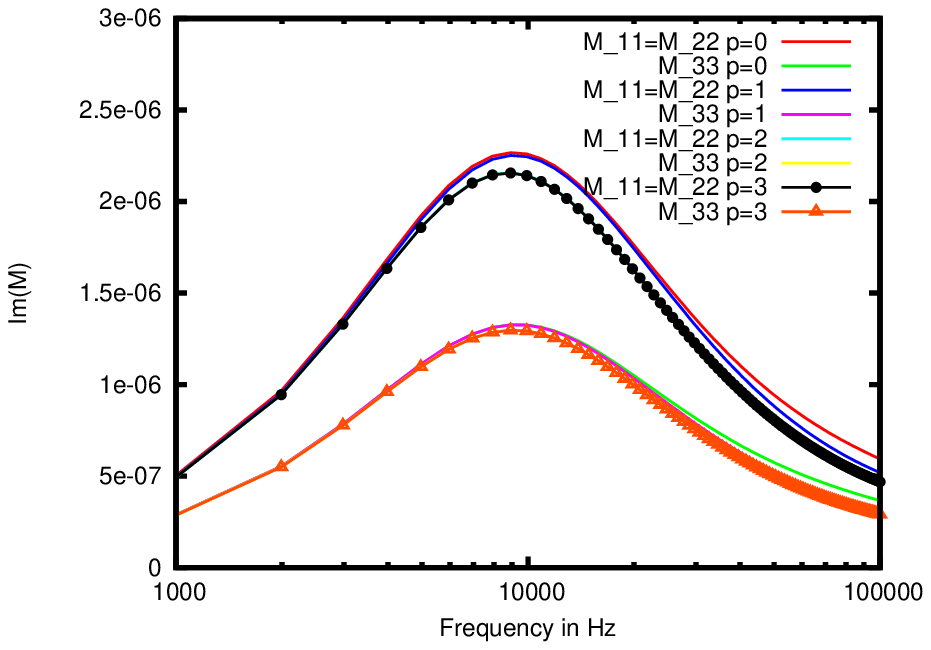} \\
\hbox{Re}(\widecheck{\widecheck{\mathcal M}}) &
 \hbox{Im}(\widecheck{\widecheck{\mathcal M}})
\end{array}$
\end{center}
\caption{Frequency response of a conducting Remington rifle cartridge as defined in~\cite{rehimpeyton} showing the diagonal coefficients of  $ \widecheck{\widecheck{\mathcal M}} $  computed numerically using the approach of ~\cite{ledgerlionheart2014} with an unstructured grid  of $23\,551$ tetrahedra and uniform $p=0,1,2,3$ elements.}
\label{fig:remingtonshell}
\end{figure*}
%%%%%%%%%%%%%%%%%%%%%%%%%%%%%%%%%%%%%%%%%%%%%%%%%%%%%%%%%%%%%%%%%%%%%%%%%%%%%%%

If the Remington rifle cartridge is rotated about an axis then the coefficients of  $\widecheck{\widecheck{\mathcal M}}$  transform according to~\cite{ledgerlionheart2014}
\begin{equation}
\widecheck{\widecheck{\mathcal M}}_{ij}' = {\mathcal R}_{ik} {\mathcal R}_{j \ell } \widecheck{\widecheck{\mathcal M}}_{k\ell} ,
\end{equation}
where ${\mathcal R}$ is the rotation matrix of the transformation. In particular, for a rotation $\theta$  about the $\hat{\vec e}_2$ axis, the components of the transformed tensor are
%%%%%%%%%%%%%%%%%%%%%%%%%%%%%%%%%%%%%%%%%%%%%%%%%%%%%%%%%%%%%%%%%%%%%%%%%%%%%%
\begin{equation}
\widecheck{\widecheck{\mathcal M}}'=\left ( \begin{array}{ccc} \widecheck{\widecheck{\mathcal M}}_{11}\cos^2 \theta + \widecheck{\widecheck{\mathcal M}}_{33}\sin^2 \theta &
0 & \widecheck{\widecheck{\mathcal M}}_{11}\cos \theta \sin \theta - \widecheck{\widecheck{\mathcal M}}_{33}\cos \theta \sin \theta \\
0 & \widecheck{\widecheck{\mathcal M}}_{11} & 0 \\  
\widecheck{\widecheck{\mathcal M}}_{11}\cos \theta \sin \theta - \widecheck{\widecheck{\mathcal M}}_{33}\cos \theta \sin \theta &
0 & \widecheck{\widecheck{\mathcal M}}_{11}\cos^2 \theta + \widecheck{\widecheck{\mathcal M}}_{33}\sin^2 \theta \end{array} \right ) \label{eqn:tranpotens}.
\end{equation}

The frequency response for $4\pi  \widecheck{\widecheck{\mathcal M}}_{33}'$ with rotation through $360$ degrees and for frequencies ranging from $1\hbox{kHz}$ to $100\hbox{kHz}$ is illustrated in Fig.~\ref{fig:remingtonshellrot}. The corresponding rotation response for the same set of frequencies is shown in Fig.~\ref{fig:remingtonshellfreq}. We remark that the magnitude of the real part of the coefficients of $\widecheck{\widecheck{\mathcal M}}$ increases with increasing frequency while the imaginary part of the coefficients peaks at around 10kHz and decays away from smaller and larger frequencies.

%%%%%%%%%%%%%%%%%%%%%%%%%%%%%%%%%%%%%%%%%%%%%%%%%%%%%%%%%%%%%%%%%%%%%%%%%%%%%%%
\begin{figure*}
\begin{center}
$\begin{array}{cc}
\includegraphics[width=3in]{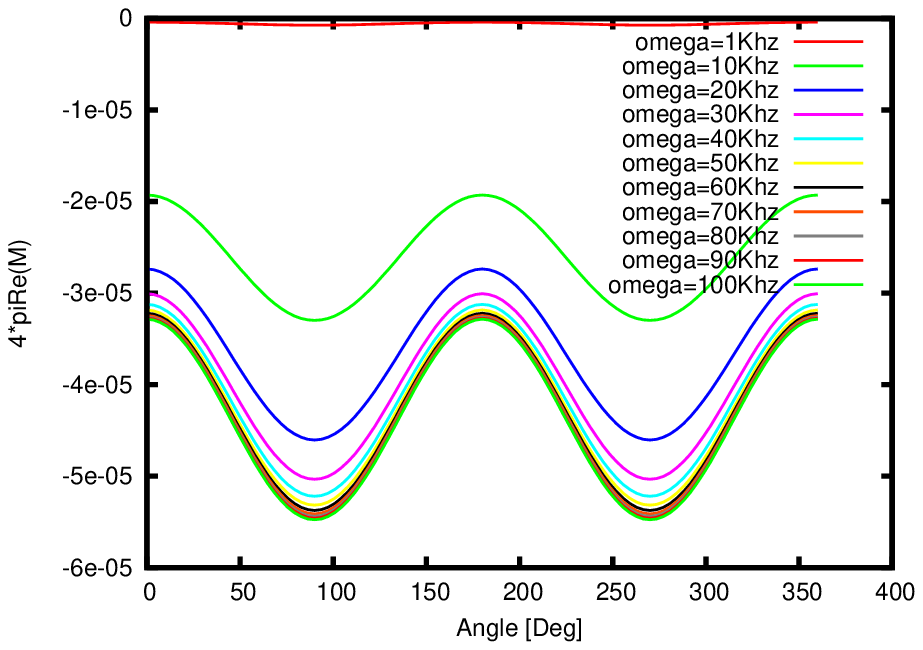} &
\includegraphics[width=3in]{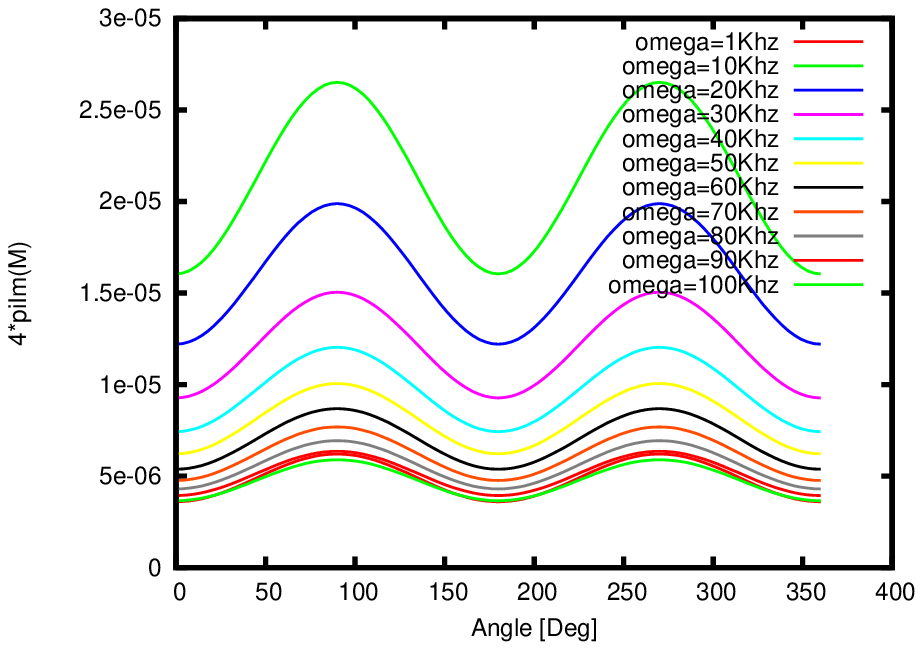} \\
4\pi\hbox{Re}(\widecheck{\widecheck{\mathcal M}}_{33}') & 4\pi\hbox{Im}( \widecheck{\widecheck{\mathcal M}}_{33}')
\end{array}$
\end{center}
\caption{Frequency response of a conducting Remington rifle cartridge as defined in~\cite{rehimpeyton} under rotation showing the transformed  $4\pi \widecheck{\widecheck{\mathcal M}}_{33}'$ for frequencies ranging from  $1\hbox{kHz}$ to $100\hbox{kHz}$  computed numerically using the approach of ~\cite{ledgerlionheart2014} with an unstructured grid  of $23\,551$ tetrahedra and uniform $p=3$ elements.}
\label{fig:remingtonshellrot}
\end{figure*}
%%%%%%%%%%%%%%%%%%%%%%%%%%%%%%%%%%%%%%%%%%%%%%%%%%%%%%%%%%%%%%%%%%%%%%%%%%%%%%%
\begin{figure*}
\begin{center}
$\begin{array}{cc}
\includegraphics[width=3in]{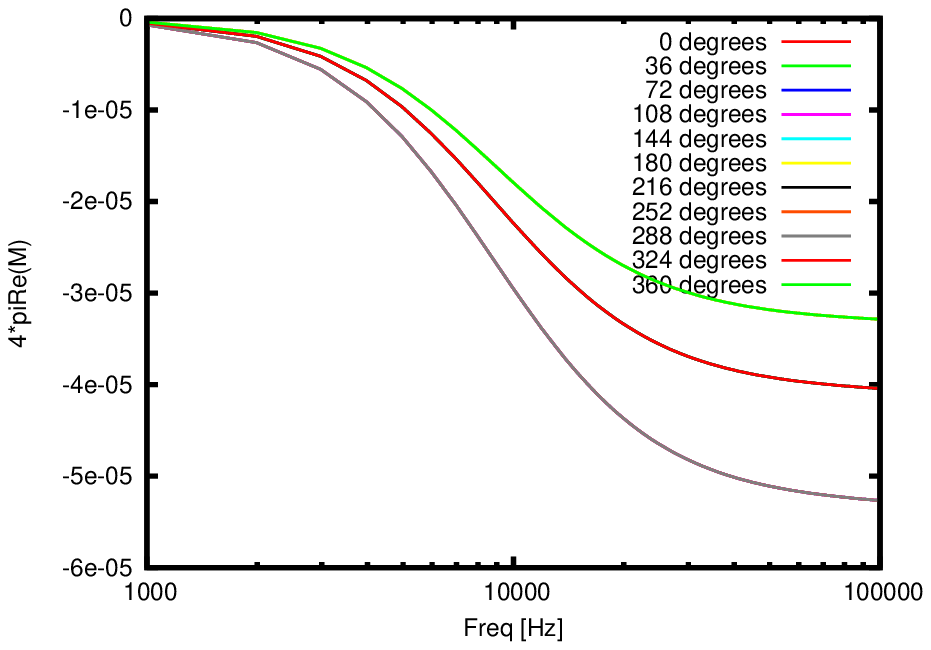} &
\includegraphics[width=3in]{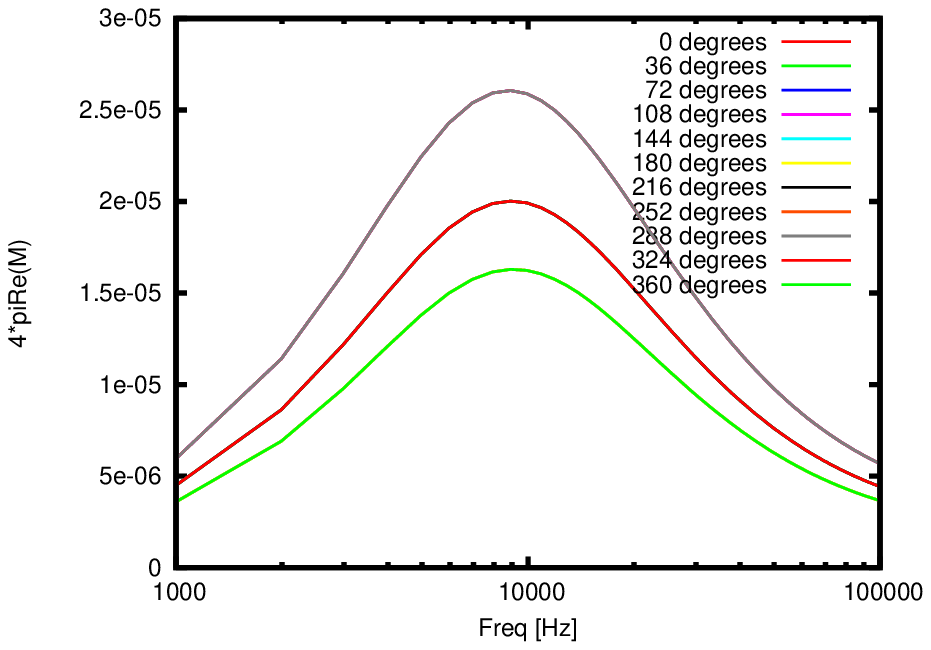} \\
4\pi \hbox{Re}(\widecheck{\widecheck{\mathcal M}}_{33}') & 4\pi \hbox{Im}(\widecheck{\widecheck{\mathcal M}}_{33}')
\end{array}$
\end{center}
\caption{Rotational response of a conducting Remington rifle cartridge as defined in~\cite{rehimpeyton} for different frequencies showing the transformed  $4\pi\widecheck{\widecheck{\mathcal M}}_{33}'$ for rotations ranging from  $0\hbox{ degrees}$ to $360\hbox{ degrees}$  computed numerically using the approach of ~\cite{ledgerlionheart2014} with an unstructured grid  of $23\,551$ tetrahedra and uniform $p=3$ elements.}
\label{fig:remingtonshellfreq}
\end{figure*}
%%%%%%%%%%%%%%%%%%%%%%%%%%%%%%%%%%%%%%%%%%%%%%%%%%%%%%%%%%%%%%%%%%%%%%%%%%%%%%%

The results presented in Figs.~\ref{fig:remingtonshellrot} and~\ref{fig:remingtonshellfreq} closely match the response of the measurements of the coefficients of $\widecheck{\widecheck{\mathcal M}}$ to changes in rotation and changes in frequency presented by
Marsh, Ktisis, J{\"a}rvi, Armitage, and Peyton in~\cite{peyton2013} for the same object.
\section{Conclusion}\label{sect:concl}

The properties of the rank 2 tensor $\widecheck{\widecheck{\mathcal M}}$ and its connection with the P\'oyla--Szeg\"o and the magnetic polarizability tensors has been investigated.
We have described how our results in~\cite{ledgerlionheart2014} provide a framework for the explicit computation of its coefficients.
We have shown, by introducing a splitting of $\widecheck{\widecheck{\mathcal M}}$, that a family of rank 2 tensors can be established, which describe the response from a range of magnetic and conducting objects. Furthermore, bounds on the invariants of the P\'oyla--Szeg\"o tensor have been established and the low frequency and high conductivity limiting cases for the coefficients of $\widecheck{\widecheck{\mathcal M}}$ have been obtained. We have also obtained the behaviour of the coefficients for low conductivity and high frequencies at the limit of applicability of the quasi-static model, which are in agreement with the predictions in Landau and Lifschtiz~\cite{landau}. Interestingly, the connectedness of the object does not play a role in either the low frequency or the low conductivity case, but does in the high frequency and high conductivity cases. The results have been applied ellipsoidal objects, multiply connected objects as well as the frequency and rotational response from a Remington rifle cartridge.

\section*{Acknowledgement}
The authors are very grateful to Professors H. Ammari, J. Chen, A.J. Peyton and D. Volkov for their helpful discussions and comments on polarizability tensors and to
 Dr. K. Schmidt for his help with the calculation of shape dependent constants to deduce the limit of the quasi-static model. They would also like to thank EPSRC for the financial support received from the grants EP/K00428X/1 and EP/K039865/1 and the support received from the land mine research charity Find A Better Way.
\appendix

\section{The engineering connection} \label{sect:reciprocity}
%%%%%%%%%%%%%%%%%%%%%%%%%%%%%%%%%%%%%%%%%%%%%%%%%%%%%%%%%%%%%%%%%%%%%%%%%%%%%%%
In this appendix we provide a connection between the engineering prediction of (\ref{eqn:engexp}) and an asymptotic formula of the form (\ref{eqn:asymformeng}).
Recall the Lorentz reciprocity principal, which is usually formulated for the time harmonic equations, in the form\cite{harrington,landau} 
%%%%%%%%%%%%%%%%%%%%%%%%%%%%%%%%%%%%%%%%%%%%%%%%%%%%%%%%%%%%%%%%%%%%%%%%%%%%%%%
\begin{align}
\nabla \cdot ({\vec E} ^m\times {\vec H}^e - {\vec E}^e \times {\vec H}^m) = {\vec J}_0^m\cdot {\vec E}^e - {\vec J}_0^e \cdot {\vec E}^m,
\end{align}
or, by integrating over ${\mathbb R}^3$ and using the far field behaviour of the fields, as
\begin{align}
\int_{{\mathbb R}^3}  {\vec J}_0^m\cdot {\vec E}^e\dif {\vec x} = \int_{{\mathbb R}^3}  {\vec J}_0^e\cdot {\vec E}^m\dif {\vec x}.\label{eqn:lorentz}
\end{align}
%%%%%%%%%%%%%%%%%%%%%%%%%%%%%%%%%%%%%%%%%%%%%%%%%%%%%%%%%%%%%%%%%%%%%%%%%%%%%%%
It follows from this result that the response is unchanged when the transmitter and receiver are interchanged. Furthermore, if the derivation is repeated for the eddy current model, the result (\ref{eqn:lorentz}) is again obtained. Then, if we follow~\cite{landau}[pg 300], and assume the current sources $m,e$ to have a small support and to be located at ${\vec x}$ and ${\vec y}$, respectively, then the first term in a Taylor series of expansion of the fields ${\vec E}^m$ and ${\vec E}^e$ about the centre of the current source~\footnote{As the size of support of the sources relative to the wavelength and the distance between them tends to zero.} is
%%%%%%%%%%%%%%%%%%%%%%%%%%%%%%%%%%%%%%%%%%%%%%%%%%%%%%%%%%%%%%%%%%%%%%%%%%%%%%%
\begin{align}
{\vec E}^e({\vec x}) \cdot {\vec p}^m \approx {\vec E}^m({\vec y}) \cdot{\vec p}^e,\label{eqn:taylorrecpop1}
\end{align}
%%%%%%%%%%%%%%%%%%%%%%%%%%%%%%%%%%%%%%%%%%%%%%%%%%%%%%%%%%%%%%%%%%%%%%%%%%%%%%%
where ${\vec p}^m$ is the electric dipole moment of the current source $m$.
It is important to note that this is only the first term in the Taylor's series expansion, including the next term leads to
%%%%%%%%%%%%%%%%%%%%%%%%%%%%%%%%%%%%%%%%%%%%%%%%%%%%%%%%%%%%%%%%%%%%%%%%%%%%%%%
\begin{align}
{\vec E}^e({\vec x}) \cdot {\vec p}^m&  +2\nabla^s{\vec E}^e({\vec x}): {\mathcal R}^m +
{\vec B}^e({\vec x}) \cdot {\vec m}^m \approx \nonumber \\
&{\vec E}^m({\vec y}) \cdot {\vec p}^e +2 \nabla^s{\vec E}^m({\vec y}): {\mathcal R}^e +
{\vec B}^m({\vec y}) \cdot {\vec m}^e \label{eqn:taylorrecpop2},
\end{align}
%%%%%%%%%%%%%%%%%%%%%%%%%%%%%%%%%%%%%%%%%%%%%%%%%%%%%%%%%%%%%%%%%%%%%%%%%%%%%%%
where ${\mathcal R}^m$ is a quadrupole moment of the current source $m$, ${\vec m}^m$ the magnetic moment of the same current source~\cite{landau} and exact reciprocity is expected if all the terms in the Taylor series expansion are considered. 

For the eddy current problem described in this work and coils located in free space that can be idealised as dipoles with a magnetic moment, only,  reciprocity implies that $ {\vec m}^m \cdot {\vec H}_\alpha^e({\vec x}) \approx {\vec m}^e\cdot  {\vec H}_\alpha^m({\vec y}) $ i.e. the result is the same if ${\vec x}$ and ${\vec w}$ and ${\vec m}^m$ and ${\vec m}^e$ are interchanged. Considering (\ref{eqn:asympformula}), we have in vector notation,
%%%%%%%%%%%%%%%%%%%%%%%%%%%%%%%%%%%%%%%%%%%%%%%%%%%%%%%%%%%%%%%%%%%%%%%%%%%%%%%
\begin{align}
{\vec m}^m \cdot  {\vec H}_\alpha^e({\vec x}) - {\vec m}^e \cdot  {\vec H}_\alpha^m({\vec y}) = &
  {\vec m}^m  \cdot ( {\vec D}^2G({\vec x},{\vec z})  {\mathcal A}
({\vec D}^2G({\vec z},{\vec y})  {\vec m}^e )) \nonumber\\
& -  {\vec m}^e  \cdot ({\vec D}^2G({\vec y},{\vec z})  {\mathcal A} 
({\vec D}^2G({\vec z},{\vec x}) {\vec m}^m)) 
+   \Delta ({\vec m}^m , {\vec R}^e) - \Delta ({\vec m}^e, {\vec R}^m) 
\label{eqn:recpasympexpand},
\end{align}
%%%%%%%%%%%%%%%%%%%%%%%%%%%%%%%%%%%%%%%%%%%%%%%%%%%%%%%%%%%%%%%%%%%%%%%%%%%%%%%
where $\Delta ({\vec m} , {\vec R}):= {\vec m} \cdot   {\vec R} $ and, in the case considered,  ${\vec H}_0^e({\vec x})={\vec D}^2G({\vec x},{\vec y}) {\vec m}^e$ and ${\vec H}_0^m({\vec w})={\vec D}^2 G({\vec y},{\vec x}) {\vec m}^m$, thus, from the symmetry of ${\vec D}^2G({\vec x},{\vec y})$, we have used ${\vec m}^m \cdot {\vec H}_0^e({\vec x})={\vec m}^e \cdot {\vec H}_0^m({\vec y})$. It follows from (\ref{eqn:mainresultb}),~\cite{ledgerlionheart2014,ammarivolkov2013} that
%%%%%%%%%%%%%%%%%%%%%%%%%%%%%%%%%%%%%%%%%%%%%%%%%%%%%%%%%%%%%%%%%%%%%%%%%%%%%%%
\begin{align}
|  \Delta ({\vec m}^m ,  {\vec R}^e)  - \Delta ({\vec m}^e, {\vec R}^m) |  \le &  | {\vec m}^m|     | {\vec R}^e( {\vec x} )  | +  | {\vec m}^e |  | {\vec R}^m ( {\vec y} ) |  \nonumber \\
 \le& C \alpha^4 \left (  | {\vec m}^m |   \| {\vec H}_0^e \|_{W^{2,\infty} (B_\alpha)} + | {\vec m}^e |    \| {\vec H}_0^m \|_{W^{2,\infty}   ( B_\alpha)} \right ) \nonumber,
\end{align}
%%%%%%%%%%%%%%%%%%%%%%%%%%%%%%%%%%%%%%%%%%%%%%%%%%%%%%%%%%%%%%%%%%%%%%%%%%%%%%%
thus  (\ref{eqn:recpasympexpand}) is an asymptotic expansion for ${\vec m}^m \cdot {\vec H}_\alpha^e({\vec x}) -{\vec m}^e \cdot {\vec H}_\alpha^m({\vec y})$ as $\alpha \to 0$ with  $ \Delta ({\vec m}^m , {\vec R}^e) - \Delta ({\vec m}^e, {\vec R}^m)= O(\alpha^4)$. Then, upto this residual term, 
%%%%%%%%%%%%%%%%%%%%%%%%%%%%%%%%%%%%%%%%%%%%%%%%%%%%%%%%%%%%%%%%%%%%%%%%%%%%%%%
\begin{align}
{\vec m}^m & \cdot ({\vec D}^2G({\vec x},{\vec z})  {\mathcal A}    ( {\vec D}^2G({\vec z},{\vec y}) {\vec m}^e) ) \approx {\vec m}^e \cdot  ( {\vec D}^2 G({\vec y},{\vec z})  {\mathcal A}   ( {\vec D}^2G({\vec z},{\vec x})  {\vec m}^m) ) . \label{eqn:recopgeneral}
\end{align}
%%%%%%%%%%%%%%%%%%%%%%%%%%%%%%%%%%%%%%%%%%%%%%%%%%%%%%%%%%%%%%%%%%%%%%%%%%%%%%%
In light of (\ref{eqn:lorentz}), if one constructs a suitable ${\vec J}_0^m$, which has non-zero support on the measurement coil and is such that the  resulting field ${\vec H}_0^m$ can be idealised as a magnetic dipole,  the induced voltage, $V^{ind}$, as a result of the perturbation caused by the presence of  a general conducting  object, is
%%%%%%%%%%%%%%%%%%%%%%%%%%%%%%%%%%%%%%%%%%%%%%%%%%%%%%%%%%%%%%%%%%%%%%%%%%%%%%
\begin{align}
V^{ind} \approx & C  {\vec m}^m \cdot   ( {\vec D}^2G({\vec x},{\vec z}) {\mathcal A}   ({\vec D}^2G({\vec z},{\vec y})  {\vec m}^e))  \approx  C  {\vec H}_0^m({\vec z}) \cdot ( {\mathcal A} {\vec H}_0^e({\vec z}) ).
\end{align}

up to a scaling constant $C$.

\bibliographystyle{IEEEtranS}
\bibliography{IEEEabrv,ledgerlionheart}

\end{document}